\newif\ifpar
\newif\ifmini
\newif\iflong
\newif\ifnconf
\newif\ifreva
\newif\ifArx
\newcommand{\BS}{B}
\newcommand{\calS}{{\cal S}}
\newcommand{\E}{{\mathbb E}}
\newcommand{\A}{{\mathbb A}}
\newcommand{\T}{{\mathbb T}}
\newcommand{\R}{{\mathbb R}}
\newcommand{\iy}{{\infty}}
\newcommand{\eat}[1]{ }
\newtheorem{lemma}{Lemma}
\newtheorem{corr}{Corollary}
\newtheorem{theorem}{Theorem}
\newcommand{\req}[3]{\vspace*{#1pt}{{\begin{equation} #2\label{#3}\end{equation}}}}
\newcommand{\reqs}[2]{\vspace*{#1pt}{{\begin{equation*} #2\end{equation*}}}}
\newcommand{\ral}[2]{\vspace*{#1pt}{\begin{align} #2\end{align}}}
\newcommand{\rals}[2]{\vspace*{#1pt}{\begin{align*} #2\end{align*}}}
\newcommand{\rdo}[1]{$#1$}
\newcommand{\sdo}[1]{{\small{$#1$}}}
\newcommand{\bA}{ {\bf A} }
\newcommand{\bP}{ {\bf P} }
\newcommand{\bS}{ {\bf S} }
\newcommand{\bpi}{ {\mbox{\boldmath $\pi$}} }
\begin{document}

\title{An SMDP Approach to Optimal PHY Configuration in Wireless Networks}

% author names and affiliations
% use a multiple column layout for up to three different
% affiliations
\author{\IEEEauthorblockN{Mark Shifrin,  Daniel S. Menasché\IEEEauthorrefmark{4}, Asaf Cohen, Omer Gurewitz, Dennis Goeckel\IEEEauthorrefmark{1}}\\
\IEEEauthorblockA{Ben Gurion University of the Negev}, %\\
%Beer Sheva, Israel}
\and
	%\IEEEauthorblockN{Daniel S. Menasché}
\IEEEauthorblockA{\IEEEauthorrefmark{4}Federal Univ. of Rio de Janeiro}, %\\ Rio de Janeiro, Brazil}
\and
%\IEEEauthorblockN{Dennis Goeckel}
\IEEEauthorblockA{\IEEEauthorrefmark{1}Univ. of Massachusetts at Amherst} %\\
%Amherst, USA}
}

\maketitle
%\vspace{-20pt}
\begin{abstract}
In this work, we study the optimal configuration of the physical layer in wireless networks by means of Semi-Markov Decision Process (SMDP) modeling. In particular, assume the physical layer is characterized by a set of potential operating points, with each point corresponding to a rate and reliability pair; for example, these pairs might be obtained through a now-standard diversity-vs-multiplexing tradeoff characterization. Given the current network state (e.g., buffer occupancies), a Decision Maker (DM) needs to dynamically decide which operating point to use. The SMDP problem formulation allows us to choose from these pairs an optimal selection, which is expressed by a decision rule as a function of the number of awaiting packets in the source’s finite queue, channel state, size of the packet to be transmitted.
We derive a general solution which covers various model configurations, including packet size distributions and varying channels. For the specific case of exponential transmission time, we analytically prove the optimal policy has a threshold structure. Numerical results validate this finding, as well as depict muti-threshold policies for time varying channels such as the Gilber-Elliott channel. 
\end{abstract}

\section{Introduction}{\label{sec:intro}}

Recent advances in coding and modulation have allowed communication systems to approach the Shannon limit on a number of communication channels; that is, given the channel state (e.g. the signal-to-noise ratio of an additive white Gaussian noise - AWGN - channel), communication near the highest rate theoretically possible while maintaining low error probability is achievable. However, in many communication systems, particularly wireless communication systems, the channel conditions which a given transmission will experience are unknown. For example, consider the case of slow multipath fading without channel state information at the transmitter~\cite{tse_vishwanath}. Because of the uncertainty in the level of multipath fading, it is possible that the rate employed at the transmitter cannot be supported by the channel conditions, hence resulting in packet loss or ``outage''.  The outage capacity~\cite{tse_vishwanath}, which gives the rate for various outage probabilities, captures the tradeoff in such a situation.  If a low rate is employed, it is likely that the channel conditions will be such that transmission at that rate can be supported (low outage); if a higher rate is employed, the probability is higher that the channel conditions will be such that transmission at that rate cannot be supported (high outage). In fact, a wide range of physical layers models can be addressed through such an approach, i.e., having asymmetric characterization of the operating points and their corresponding parameters.  Given this characterization, a crucial question is at which point to operate the physical layer given information available about the current state of the network.

Modern wireless networks are extremely dynamic, with channel parameters and traffic patterns changing frequently. Consequently, the key question addressed here is how can a sender \emph{dynamically decide} on the best physical layer strategy, given the channel and traffic parameters available to it, as well as its own status.   For example, consider a sender required to decide among the aforementioned physical layer strategies:  an approach incurring high packet loss yet a small transmission time, or one possibly having a lower packet loss but a larger transmission time. In this paper, we derive a framework for a Decision Maker (DM) wishing to maximize the system throughput by choosing the appropriate physical layer setting, while taking into account as many system parameters as possible, in this case, delay, packet losses and its current packet backlog status. The DM faces a choice of achieving increased success probability provided additional transmission time, and one would expect that this decision will be made with the queue status in mind, as a full queue causes new arrivals to be rejected, incurring potential throughput loss. Thus, our goal is to rigorously analyze this tension, and identify the optimal strategy.
\newline  
The tradeoff between rate and reliability is a fundamental characteristic of the physical layer, and we are interested in this formulation largely because it captures much more than the simple point-to-point single-antenna communication used for illustration in the first paragraph above.   To provide a systematic method to consider how this characterization might be derived, consider the the now-standard diversity-multiplexing tradeoff approach originally applied to point-to-point multi-antenna systems~\cite{zheng_dym_tradeoff} but now extensively extended beyond that.  In particular, the diversity-multiplexing approach captures the fundamental tradeoff between rate (multiplexing) and reliability (diversity) for a number of interesting physical layer choices, including:  (1) point-to-point multiple-input and/or multiple-output (MIMO) systems \cite{zheng_dym_tradeoff}, where the transmitter can decide whether to send multiple streams (``multiplex'') from the multiple antennas or to send a single stream with redundancy (``diversity''), or a combination of the two; (2) half-duplex relay channels (e.g. \cite{karmakar2011diversity}), where the transmitter can decide to use the relay, or not, and how to allocate time to the transmit and receive functions of the half-duplex relay.  We are particularly interested in this latter example, and we will adopt terminology from a classical problem in relaying~\cite{laneman2004cooperative} to help clarify the competing options in succeeding sections.  However, we hasten to remind the reader that the results apply much more generally to the diversity-multiplexing protocol for any physical layer.
\begin{comment}
Our model is also relevant for sources transmitting packets of variable sizes. In particular, once the size of the leading packet in the buffer is known prior to the  the transmission,  selecting the point of the discussed tradeoff also selects the transmission time, which has a crucial impact on the future buffer occupancy. If the buffer is full and arriving packets are immediately rejected, the potential throughput associated with these packets is lost. Clearly, the decision rule and the tradeoff structure are not straightforward in such cases. Nonetheless, we consider them in our model.
\end{comment}
\subsection{Main contribution.}
\vspace*{-5pt}Our main contributions are as follows.
%\begin{itemize}

\textbf{Problem formulation: } We formulate the problem of optimal dynamic PHY configuration for the transmitter with long-lived packets influx by a SMDP. The model is presented in generality, capturing finite buffer size, variable packet size, variable channel state, general transmission time distribution and a decision space which is associated with possible PHY configuration.

\textbf{Value function derivation: } We derive the equations for the value function of the SMDP in several interesting cases. These equations are obtained in a tractable form, allowing a solution by simple value iteration. 

\textbf{Threshold policy characterization: } When transmission times are exponentially distributed, we prove there exists an \emph{optimal policy with a threshold structure}; that is, the source should make use of the more reliable option if the number of pending packets is lower than a given threshold, and transmit with the higher rate option otherwise. We also show that the value function in this case is concave and increasing.

\textbf{Numerical study: } We explore different scenarios by simulations. In particular, we validate the threshold policy and concavity for the exponential case and observe similarity to this structure in other cases as well. In the case of a variable channel (e.g., the Gilbert-Elliott channel) we observe a \textit{multi-threshold} policy which is described by having a separate threshold for each channel mode.

%\end{itemize}	

%While a preliminary paper \cite{s3paper} gave basic numerical results for a related model, 
%
To the best of our knowledge, this work is the first to \emph{analyze} this problem of general optimal PHY operating point selection by SMDP.
\vspace{-15pt}
\subsection{Related work}
\vspace{-4pt}
%The pioneering works on the relay channel  date back to the seventies~\cite{elgamal}, when  
%T. Cover and E. Gamal posed the problem of determining the capacity region of the full-duplex relay channel.  Since then, there has been progress in determining the capacity region for    
%the degraded case, as well as for several MIMO settings \cite{wang2005capacity} or a class of erasure channels~\cite{khalili}. Cooperative strategies and their performance in relay networks were considered in \cite{1499041}.   Nonetheless, in its full generality  the capacity region of the relay channel is   still unknown.    
The general trade-off between PHY rate and reliability has various important applications apart from the already mentioned basic relay channel. 
For example, the trade-off between multiplexing and diversity was discussed in~\cite{heath2002diversity} in the context of MIMO channels, in~\cite{tse2004diversity} in the context of multiple-access channel with fading, and in~\cite{liang2011cognitive} in the context of cognitive radio sensing techniques.
%A key feature, and a source for complications in such relay channels, is that they encompass both space and time diversity. That is, 
The SMDP framework introduced in this paper  accounts for both space and time diversity.  

Various PHY settings naturally reflect the diversity associated with wireless channel.
%Space diversity can be achieved by transmitting simultaneously through two channels, minimizing the effects of fading in a single slot, while time diversity can be achieved since fading also varies over time, hence different schemes can be used at different times. Works  such as
Berry and Gallager~\cite{gallager} and
Collins and Cruz~\cite{cruz} accounted for time diversity and delay constraints, while works such as Scaglione, Goeckel and Laneman~\cite{goeckel} accounted for space diversity. 

% We start our related work list with the pioneering work of \cite{elgamal} on relay, which assumed full duplex channel. 

% Several interesting scenarios were solved since then, e.g., t %Yet, the problem in its most general form is still open. 
Routing solutions in wireless networks by means of queue stabilizations were first addressed in~\cite{tassiulas1992stability}, and more recently in~\cite{yeh2007throughput,supittayapornpong2015achieving}.
%More late works utilizing this approach can be represented by \cite{yeh2007throughput,supittayapornpong2015achieving}. 
Yeh and Berry in~\cite{yeh2007throughput} considered control policies for cooperative relay selection. In particular,  they used maximum differential backlog (MDB) methods to stabilize queues and to achieve the optimal throughput.
The authors mainly concentrated on the transport level, hence did not capture physical level considerations e.g., fading environment. % and diversity increase.
   %which account for queue dynamics in order to optimize both scheduling and routing.  
% in a network that contains  numerous relay nodes.
Our approach is different, as we combine the control on the packet (i.e. transport) level with PHY considerations, 
accounting for fading factors. In addition, note that we assume control of a \emph{finite} queue, hence inherently stable.
Recently, Urgaonkar and Neely~\cite{urgaonkar2014delay} considered a constrained resource allocation problem in a relay network under stringent
 delay constraints.

\cite{supittayapornpong2015achieving} proposes near-optimal throughput proposing algorithms for finite queues. 
We consider a different approach to model the problem, 
namely a semi-Markov decision process (SMDP),  which allows us to analyze a broader set of both PHY and transport level settings and channels, 
to provide an optimal policy, for which we prove structural properties. % as opposed to those considered in~\cite{urgaonkar2014delay, supittayapornpong2015achieving, altman2007constrained}.

The reliability versus delay trade-off in finite buffered wireless networks, accounting for multiple controllers, was 
considered in~\cite{altman2007constrained}.  
%modeled using an %  addressed by 
 % MDP in~\cite{altman2007constrained}, 
 In a game-theoretic setting,  players are coupled through their rewards, and act according to their local buffer and channel states,  
  without being aware of the states and
actions taken by other players. 

For a basic introduction to SMDPs we refer the reader to~\cite{puterman1994markov}. 
A number of previous works on SMDPs have focused on establishing the existence of optimal policies of  
threshold type under a variety of settings~\cite{koole1998structural,hajek1984optimal,sennott2009stochastic,walrand1988introduction}.  
To the best of our knowledge, none of these works have addressed  threshold properties for PHY operating points.

%All these works can

%A large body of work is devoted to the scheduling perspective, assuming one or several relay nodes cannot receive and transmit simultaneously (half-duplex), and focusing on choosing the appropriate relays, optimal routing, etc. For example, Khojastepour \emph{et al.} \cite{aazhang} considered two \emph{modes} of operation in the context of `cheap' relays. Selection relaying schemes were considered in \cite{laneman2004cooperative} as a possible method to achieve cooperative diversity. 

\vspace{-0.1in}

\begin{figure}
	\center
	\includegraphics[scale=0.29]{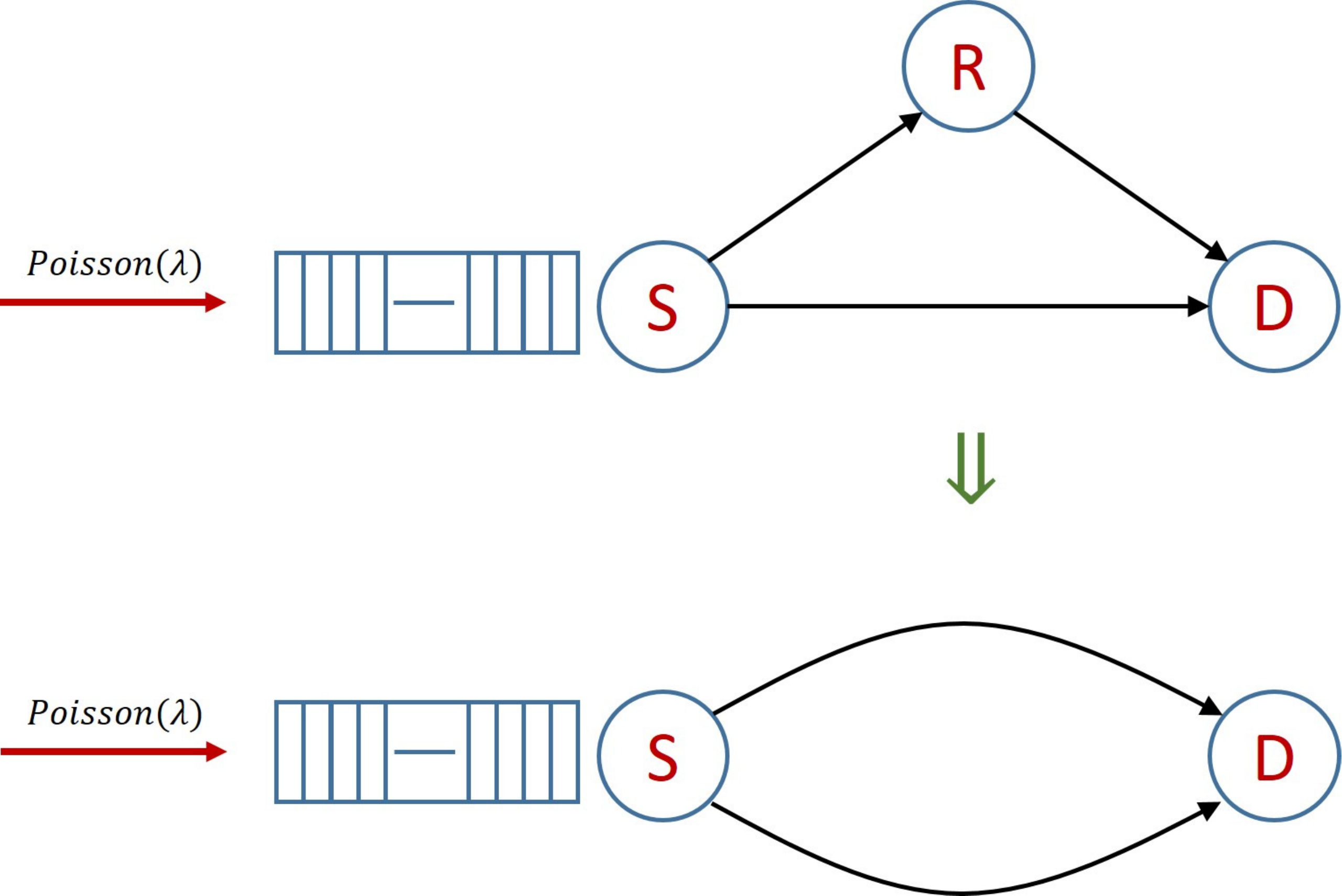}
	\vspace*{-5pt}\caption{Relay channel logical model}\label{fig:32} \vspace*{-15pt}
\end{figure}

\section{System model and metrics}{\label{sec:phy}}
The system is defined according to the following set of assumptions.
\subsubsection{\textbf{Arrival process}} 
We assume that the data at the source
is generated according to a Poisson process with intensity $\lambda$. 
In the case the source is busy with an ongoing transmission, the packets are fed into the \textit{finite sized buffer}.
The size of the buffer is known to the DM and is given by $B$ packets slots.
The packet sizes can vary, yet,
for the model simplicity, it is assumed packets of all sizes occupy space of exactly one slot. Such a model can describe
a setup in which the packets’ descriptors are stored in the
buffer whereas the varied size packets themselves are stored
elsewhere. An extension which considers packets with varying buffer occupancy and specifically ones which occupy a varying
number of \textit{chunks} in the buffer is addressed in~\ref{subsubsec1}.
%
%We consider packet sizes can vary, yet,
%for the model simplicity, it is assumed packets of all sizes occupy space of exactly one slot of a \textit{finite sized buffer}. (For example, consider the case where packet descriptors are stored). The straightforward extension to consider packets occupying in the buffer a varying number of \textit{chunks}  is addressed in~\ref{subsubsec1}. 
Packets arriving at a full buffer are rejected and the retransmission details are taken care of by higher layers.
%Nevertheless, we assume that the packet sizes 
%With slight abuse of terminology, will use the term packet in sequel.
\subsubsection{\textbf{Operating points}} 
The source attempts to communicate the packet at the head of the queue to the destination by a choice among the possible \textit{operating points} of the physical layer. 
For presentation and analytical simplicity, we assume at most two possible operating points at all decision epochs. To exemplify this particular scenario, consider a simple half-duplex relay channel. In this case, we are motivated by the application of the diversity-multiplexing tradeoff to a classic half-duplex relay channel choice: the more reliable and lower relaying rate and outage probability from a prototypical early cooperative diversity protocol, and the less reliable and higher rate corresponding to the direct choice.
While this choice applies cleanly to this simple example, the extension to the situation of more than two possible operating points, as might be appropriate for other PHY layer architectures, is straightforward.
Henceforth, in all scenarios mentioned in this paper, a nomenclature of half-duplex relay channel will be used. We will always denote the more reliable path with a lower rate as path ``a", while the less reliable path with a higher rate  will be denoted as path ``b". Denote the rates $R_a$ and $R_b$, where $R_a<R_b$. The corresponding packet loss probabilities are denoted as $p_a$ and $p_b$, where $p_a<p_b$ at all times.
The reader should note that the limitation to only two operating points rules out the option of abstaining from a transmission. While the latter case opens an additional interesting tradeoff in certain cases (e.g.,~\cite{altman2007constrained,s3paper}), we omit it here for the sake of simplicity of presentation. Yet, the straightforward extension to include the option of not to transmit for the certain time period is technically tractable; the interested reader is referred to the on-line version of the paper for the details.
\subsubsection{\textbf{Transmission times}}\label{subsub:Tt}
The rates associated with the two operating points yield transmission time probability density functions (pdf) $g^{a}(t)$ and $g^{b}(t)$, which are readily calculated from $R_a$ and $R_b$ and, possibly, from other system parameters relevant at time $t$, as it is further specified in~\ref{subsub:Csd} and~\ref{subsub:Psi}. Note that in the case of a half duplex relay model $g^{a}(t)$, refers to the entire path associated with choice ``a", even if the latter is subdivided into two separate paths. Therefore, the example of a half duplex relay obtains a simplified configuration, as Figure~\ref{fig:32} demonstrates. 
See that the upper illustration particularly fits the relay channel problem, while the lower one can also refer to the general tension between two PHY settings associated with two different propagation paths; the upper path corresponds to the choice of reliability (diversity) while the lower path corresponds to the choice of rate (multiplexing). 
\subsubsection{\textbf{Packet size impact}}\label{subsub:Psi}

The transmission time is allowed to depend on the actual packet size. We consider \textit{two modes} to capture such a dependency. In the first mode, the packet size cannot be timely sampled by the DM, hence is unknown prior to the transmission. Then, transmission times are random according to what is specified in~\ref{subsub:Tt}. % as it is specified in sequel. 
\begin{comment}Note that this can model the scenario where application produces chunks and starts the transmission before the last chunk of the packet has been produced. 
\end{comment}
In the second mode, the size can be sampled prior to the transmission. Then, the size of the packet to be transmitted is a part of the \textit{state} and has impact on the decision. Denote the size of a packet in this case by $k$.
%Hence, in the case the packet size is known, it will determine the transmission time. Otherwise, the transmission time is given by a corresponding probability density function (pdf), denoted by \rdo{g^u(t)}.
Then, \textit{packet size transition probabilities}, denoted by $q(k'|k)$, stand for the probability of 
having packet of size $k'$ at the head of the queue after transmitting packet of size $k$. % by acting $u$, where $u\in\{a,b\}$.
%\iflong And, as in the case of the channel transition probabilities, these probabilities can be different for the empty buffer state to allow for the possibility of a different size distribution for the first packet to arrive into the empty buffer. \fi  
%As we further specify in~\ref{subsub:ti}, the packet size can impact the transmission time distribution.
To this end, we assume a finite set of possible packet sizes such that \rdo{\sum_{k'}q(k'|k)=1}. %for each possible 
action $u$. 
The packet sizes can be coupled with transmission time distribution, hence, the corresponding pdf are given. We denote them by $g^{a}_k(t)$ and $g^{b}_k(t)$, and in the second mode they can be naturally assumed to be deterministic. In the detailed example presented in~\ref{sec:det} and in numerical study of this paper, we assumed the same packet loss probability for all packet sizes. However, once the size of the packet to be transmitted is a part of the state, the coupling of the packet size with the packet loss (e.g. by accounting for the BER) is straightforward and is transparently incorporated in our model. %without additional technical extensions. Hence, the model which 
%Hence, the transmission time distribution can depend both on the size of a packet to be transmitted and on the other parameters which impact the system state.  
Note that $q$ can capture complex packet arrival patterns. %, rather than mere Poisson process. 
For example, a sequence of big packets which is likely to be followed by sequence of small packets and vice versa can be modeled provided the appropriate values for $q$ are selected. 
%Also note that in the case of continuous intervals of operating points, the model can be addressed by known techniques for continuous action space, e.g., discretization.

\subsubsection{\textbf{Channel states and dynamics}} \label{subsub:Csd}
We consider a finite set of possible channel states, each state corresponding to a couple $h=(h_a,h_b)$. 
%The channel state corresponds to one of the states. 
We assume that channel state dynamics can be modeled by a Markov process; that is, the \textit{current channel statistics} are independent of the past given the last state of the channel. These dynamics reflect well-known channel models (e.g., i.i.d. fading, Gilbert-Elliott model. We provide a detailed example in~\ref{sec:uniform}). 
%Denote the channel state by $h$, and t
The channel transition probability from state $h$ to state $h'$ is denoted by $p(h'|h)$. %, where the superscript $u$, $u\in\{a,b\}$ denotes the transmission mode. 
Clearly, \rdo{\sum_{h'}p(h'|h)=1}.  
Next, consider a sequence of fading values observed across source packets, each value corresponding to a channel (or, separately, to each path) state.  
We assume that the statistics are known to the DM and are constant for a period which is significantly longer than the longest possible packet transmission time.%, hence independent of transmission mode. 
The DM can obtain the packet loss probabilities associated with each such fading value for each potential PHY configuration.
%Note that we allow generality of these probabilities by assuming they can be different for different transmission modes.  
%\iflong Also, for the empty buffer state, we allow these probabilities to be distinct from those at other buffer states, hence capturing the time it takes until the first packet arrival to the queue. \fi 
While these assumptions are approximations, they conform to the well-known realistic slow fading model, or quasi-static channels.
Similarly to the impact of a packet size, the channel states can be coupled with transmission time distribution, hence, the corresponding pdf are given. We denote them by $g^{a}_h(t)$ and $g^{b}_h(t)$, and, in the general case, by $g^{a}_{h,k}(t)$ and $g^{b}_{h,k}(t)$. Clearly,  different channel states correspond to different packet loss probabilities. %We assume that DM can knows the corresponding packet loss probability at all times. 
In the Gilbert-Elliott model, for example, these probabilities can be calculated from the BER.

%relaying (``r") choice, with rate $R_r$
%with rate $R_d$ and outage $p_d$. 
%{\bf DG:  Not sure I quite understand "For convenience, we will refer to the relay transmission time as an atomic period." in the new context. MSh: let's use the clarification in blue instead? }  
%We consider packets of random length, hence, 

%\newline 

% To finalize, the tension is clearly understood: a higher transmission rate choice that uses less resources (time) but at a higher packet loss probability, or a lower transmission rate choice that exploits more resources to obtain a lower packet loss probability.

The tension between the operating points is summarized as follows: 
a higher transmission rate choice  uses less resources (time) but at a higher packet loss probability, whereas
 a lower transmission rate choice  exploits more resources to obtain a lower packet loss probability.

\subsubsection{\textbf{General performance criterion}}
%We now define the performance criterion. %, which will suit our SMDP formulation as defined in the next section.
%\iflong
%The performance criterion is the expected total discounted infinite horizon reward which is given by the following well-known definition %\iflong
%
%\rals{-5}{
%J=&\E\int_0^\infty e^{-\gamma t} r(t)dt=\E\sum_{m=0}^\infty\int_{\sigma_{m}}^{\sigma_{m+1}} e^{-\gamma t} r_mdA(t)}	%\fi
%
%where $\gamma$ is a discount factor and $r(t)$ is a reward accumulated at time $t$. The equality is reasoned by the fact that in our model rewards are only awarded at discrete time points (e.g. transmission opportunities ends) denoted by $\sigma_m$, $m\in\{0,1,\cdots\}$ that is, where the differential of counting process $A$, counting the transmission attempts and formally defined by 
%\[A(t)=\sup\{l\geq0:\sum_{i=0}^l\tau_i<t\},\] where $\tau_i=\sigma_i-\sigma_{i-1}$, is positive. However, in general, the reward function $r(t)$ allows to capture continuous cost, e.g. energy cost which can be proportional to the buffer occupancy at time $t$. Note that $r_m$ may be equivalent to a size of the successfully decoded packet transmitted in transmission opportunity $m$, or to a constant reward $r_m=C$, regardless of its size; it may incorporate additional cost factors predefined by DM. Without loss of generality, we assume that lost or dropped packets yield no reward. 
%\else
%The performance criterion is the expected total discounted infinite horizon reward which is given by the following well-known definition %\iflong
Consider an SMDP with discounted cost functional, characterized  by the tuple \sdo{\{\bS,\bA,\bP,r, \gamma\}}, where the components are as follows. %stand for the state space, the action space, the state transition probabilities, the reward function and appropriately selected constant standing for a discount factor. 
The state space $\bS$ expresses the set of all possible combinations of buffer state, channel state, and leading packet size. The action space $\bA$ is a set of actions which fits exactly one action to one operating point. The reward $r$ is positive when a transmission was successful. %The positive reward is a symbolic representation of the instantaneous throughput and, finally, 
The discount factor $\gamma$ is an appropriately selected positive constant.
%\newline

A decision must be made before every transmission.  %Consider the embedded process wherein transitions occur at points where 1) a decision must be made and/or 2)  reward is potentially accumulated.    In the embedded process, transitions occur after every service completion and after arrivals to an empty system.   
Let $\{ \sigma_m\}$, $m=0,1,\ldots$,  be a time series where $\sigma_m$ is the instant at which the $m$-th reward is added. %state transition of the embedded process occurs.  We assume $\sigma_0=0$. 
%Positive rewards are accumulated at the end of successful transmission completions.  
Let $r_m$ be a random variable characterizing the $m$th reward. %associated to the $m$-th state transition of the embedded process, which is a function of the policy $\pi$.  
Then, $r_m > 0$  when $\sigma_m$ corresponds to a completion of a successful transmission; otherwise, $r_m = 0$. Since the average discounted infinite cost depends on the initial system state, 
it is given by $J^\pi(s_0)=\E \sum_{m=0}^\infty r_m e^{-\gamma \sigma_m}$.
% J^\pi(s_0)=\E\sum_{m=0}^\infty e^{-\gamma\sigma_{m}} r_m
% Denote the ending time of $m$-th transmission attempt as $\sigma_m$, $m\in\{0,1,\cdots\}$. Clearly, positive rewards can be only added up at $\sigma_m$. The rewards, $r_m$ are r.v's, with expected value calculated in correspondence with some given policy $\pi$. Namely, in the case the transmission at time $\sigma_m$, according to selected action is successful, the reward was positive. In all other cases, $r_m$ is equal to zero.
% Define $J^\pi(s_0)=\E\sum_{m=0}^\infty e^{-\gamma\sigma_{m}} r_m$.
% In order to apply the dynamic programming principle, we view states right after a transmission starts. Define the augmented time series $\eta_m$, $m\in\{0,1,\cdots\}$, such that in the case a transmission is initiated at a non-empty system, the average discounted reward given by $\E r_me^{-\gamma\eta_{m}}$ is awarded. Hence, in the case a transmission starts right after arrival to the empty buffer, $r_m=0$, at $\eta_m$.
%\newline 
Write the discounted cost as the sum of the reward obtained at the initial state and average discounted residual throughput profit associated with future rewards. 
%
%\small
\begin{comment}
\req{-5}
{J^\pi(s_0)=\E\sum_{m=0}^\infty e^{-\gamma\sigma_{m}} r_m =\E r_0^\pi(s_0)+\E\sum_{m=1}^\infty e^{-\gamma\sigma_{m}} r_m}
{eq:disc}
\normalsize
\end{comment}
\begin{align}
& J^\pi(s_0) =\E r^\pi(s_0)e^{-\gamma\sigma_{0}}+\E\sum_{m=1}^\infty e^{-\gamma\sigma_{m}} r_m  \nonumber \\
& =\E r^\pi(s_0)+J^{\pi}_1(s_0)  \label{eq:disc}
\end{align}% \end{align}

Thus, our goal is the following:

Find the \textit{dynamic physical layer setting selection policy $\pi$}, which takes as input the current occupancy of the buffer and information (if any) on the current channel state and the packet size at the source's queue head, and provide as an output a decision on which transmission path to employ, such that $J^\pi(s_0)$ is maximized.  

Note that the analysis that follows can be easily extended to account for an average cost criteria, defined by 
$J^A=\lim_{N\to\infty}\frac{1}{N}\E\sum_{m=1}^N  r_m$ in the sense that both criteria possess  \textit{similar optimal policies}. 
The connection between discounted and average infinite reward criteria is understood via Blackwell optimality as it is demonstrated in e.g.,~\cite{bertsekas1995dynamic}.
%In particular, under mild conditions on $\gamma$, they  referred to as Blackwell optimal~\cite{bertsekas1995dynamic}.

 %Hence, average reward for action $u$ is given by $C(1-p_u)$. 
%Define average reward over infinite horizon
%In this paper, we only allow lump rewards at ending times of successful transmission attempts, denoted by $\sigma_m$, $m\in\{0,1,\cdots\}$.  We assumed a constant reward $r_m=C$, regardless of a packet size, however our model covers any lump reward function.

%\rals{-9}{
%J=&\E\int_0^\infty e^{-\gamma t} r(t)dt=\E\sum_{m=0}^\infty e^{-\gamma\sigma_m} r_m}	%\fi
%J^A=&\lim_{N\to\infty}\frac{1}{N}\E\sum_{m=1}^N  r_m} %=\E\sum_{m=0}^\infty e^{-\gamma\sigma_m} r_m}
%\iflong\else\normalsize\fi
%where $\gamma$ is a discount factor and $r(t)$ is a reward accumulated at time $t$.  %it may incorporate additional cost factors predefined by DM. Without loss of generality, we assume that lost or dropped packets yield no reward. 
%\fi%\iflong The second inequality merely solves the integral.\fi %Note that the expected value is over the interval boundaries as. 

Hence, we formulate the described problem by a discounted SMDP, which is formally explained in the next section.

\begin{table}[h!]
\center
\begin{tabular}{l|l}
\hline
variable & description \\
\hline
% $\bS$ & state space of SMDP  \\
% $\bA$ & action space \\
% $\calU $ & policy space  \\
                $\lambda$ & arrival rate \\
$\BS$ & buffer capacity (including the packet being transmitted) \\\hdashline
% $N$ & random variable characterizing number  of packets in  & $N \in \{0,1,\ldots,B-1 \}$  \\
% & queue, i.e.,  waiting queue + under transmission & \\ 
%             & (as states are tracked  right after  a service & \\ 
%          &   completes, $N \leq B-1$) & \\   
$s$ & SMDP state (buffer state,channel state,leading packet size) \\
%& Note: as state transitions occur after departures, the buffer \\
% & state after a transition  ranges between 0 and $B-1$ \\      
   $V(s)$ &   value function at state $s$, $V(s)=\max_{u\in\{a,b\}}\{V^{u}(s)\}$ \\
    $V^{(u)}(s)$ &   cost associated with decision $u$ at state $s$, $u \in \{ a,b\}$   \\
  %   $V^{(b)}(s)$ &  cost associated with decision ``b" at state $s$     \\   \hdashline
    %$c_r$ & instantaneous relay throughput (reward) & $(1-p_r)C\beta_r$ \\
     %   $c_d$ & instantaneous direct throughput (reward) & $(1-p_d)C\beta_d$ \\
        $\beta_{u,s}$ &  discount associated with action $u$ \\
        $k$ &  number of frames  per packet \\
                     $\tau_{u,s}$ & mean time to transmit via channel $u$ at state $s$ \\
       %  $\tau_{b,s}$ & mean time to transmit via channel ``b" at state $s$ \\
                $\mu_{u,s}$ & transmission rate of channel $u$,          $\tau_{u,s} = 1/\mu_{u,s}$ \\ \hdashline
       % $\mu_{b,s}$ & transmission rate of channel ``b",           $\tau_{b,s} = % 1/\mu_{b,s}$ \\ \hdashline
        $p_u$ & packet loss probability, $u \in \{a,b\}$ \\ 
  %      $p_b$ & packet loss probability associated with channel ``b"\\   \hdashline
%        $\gamma$ & discount factor \\ 
%        $\pi$ & policy (subscript) \\ \hdashline
        $g^{\pi}(t)$ & pdf of transmission time then using policy $\pi$ \\ 
        $P^u(j|i,t)$ & probability of $j-i$ arrivals after $t$ time units, given a \\
        & buffer initially filled with $i$ packets and action  $u$  is taken  \\
        $p(h'|h)$ & channel transition probabilities \\
        $q(k'|k)$ & packet size transition probabilities\\%, given  action  $u$ is taken \\
       \hline
\end{tabular}
\vspace{-2pt}
\caption{SMDP notation.} \vspace{-0.25in} %\label{tab:general}
\end{table}

\vspace*{-5pt}\section{SMDP-based formulation and solution} \label{sec:mdpmodel}
%In this section, we define the general form of the SMDP and give specific examples.
%We consider an SMDP with discounted cost functional, characterized  by the tuple \sdo{\{\bS,\bA,\bP,r, \gamma\}},
%where the components stand for the state space, the action space, the transition probabilities, the reward function and  the discount factor.
%Next, we introduce the SMDP formulation for the optimal PHY configuration problem.  
We start with definition of the state-space. A state ${s\in\bS}$ is expressed by the triplet
$(n,h,k)\in\R^3$ where $n$, $h$ and $k$  stand, respectively, for 1) the number of packets present in the system (including the one being transmitted), 2) the medium (channels) state and 3) the size of the packet which is to be transmitted in the upcoming transmission.
%State transitions occur after transmission completions.
%\newline

Transmissions can take one of two possible paths (``a" or ``b"). Correspondingly, the action space is given by $\bA=\{a,b\}$, standing for transmission modes ``a" and ``b". We assume that a  transmission decision with corresponding parameters is performed at every decision epoch.

%\newline

State transitions occur after 1) transmission completions and 2)  arrivals to an empty system.
Right after a state transition due to a transmission completion occurs, a decision which sets the  transmission mode of the next  packet to be transmitted is made. If the system is non empty, a new transmission  is immediately started. Otherwise, it  starts whenever an arrival occurs.

%\newline

The probability of having $j$ packets in the buffer after a transmission of a packet of size $k$, taking $t$ time units,
when the buffer initially contained $i$ packets, is denoted by \rdo{\varrho(j|i,t)} and
 is governed by a Poisson distribution with mean ${\lambda t}$.  If $j<B+1$ and $i>0$ then $j=i-k+m$,
 where $m$ is the number of arrivals during an interval of length $t$.
%(\textbf{TODO} - or 2nd+3rd can be seen as additional single dimension)
%However, they can depend on the action attempted in the same slot.
%The action space is defined by $\bA=\{a,b\}$, standing for transmission mode ``a" and transmission mode ``b",
%and abstaining from transmission. %In the case the buffer was empty or action $0$ was selected,
%no transmission is initiated, while the next decision will be performed right after the next packet is ready.
%Otherwise, the actions are taken right after the accomplished transmissions.
%The number of arrivals between two consecutive departures is given by Poisson distribution.
%Note that if the source is not empty, a packet is transmitted immediately after the preceding transmission end.
Let $r$ be the instantaneous gain at the end of a successful transmission.  It is given by $k$, in  case the reward is set according to  packet sizes, or constant equal $1$ if successfully transmitted packets of all sizes have the same value.
%\newline

Next, we expand the~\eqref{eq:disc}. Note that the second term of~\eqref{eq:disc} is given by ${\E_{(\sigma_1,s_1)}[ e^{-\gamma\sigma_1}V(s_1)]}$, where $s_1$ is the state following $s_0$.
The superscript $\pi$ stands for the policy under consideration.
Note that the value function $V$ is obtained by maximizing $J$  over all feasible  policies,
 and  is given by $V(s_0)=\max_\pi J^\pi(s_0)$, for all $s_0\in\bS$.
%For cost, which treats the general state formulation can be expressed as follows. \sdo{s_0=\{i,h,k\}} and \sdo{s_1=\{j,h',k'\}} write
%In order to simplify the analysis, we assume that the reward for the transmission is accumulated at the \textit{beginning} of the transmission. As long as the time distribution of the upcoming transition is known, we weigh that reward by the corresponding average discount at the end of the transmission.
% We now expand for all possible transitions.

Denote the transition probability from state $s_0=(i,h,k)$ to state $s_1=(j,h',k')$ by $P(s_1|s_0,\pi(s_0),t)$. $P(s_1|s_0,\pi(s_0),t)$ depends on the  arrival process, channel dynamics and leading packet size, and is given by
\begin{align}
& P(s_1|s_0,\pi(s_0),t)=P((j,h',k')|(h,k,i),\pi(s_0),t) = \nonumber\\
& = q(k'|k)p(h'|h)\varrho(j|i,t)
\end{align}
where $\sum_{j,k',h'}q(k'|k)p(h'|h)\varrho(j|i,t)=1$.
The Bellman equation for the initial state $s_0$, action $\pi(s_0)$ and  next state $s_1$ is given by~\eqref{eq:disc}, setting ${J}^{\pi}_1(s_0) $ as follows
\begin{align}
&{J}^{\pi}_1(s_0) = \E e^{-\gamma t}\sum_{s_1}V(s_1)P(s_1|s_0,\pi(s_0)=u_0,t)\label{eq:be0}\\
&=\sum_{s_1}V(s_1) \int_0^\iy e^{-\gamma t}P(s_1|s_0,u_0)g^{u_0}(t)dt\label{eq:be1}\\
&=\sum_{h'}\sum_{k'}\sum_{j=i-k}^{B-k+1}V(s_1)\int_0^\iy e^{-\gamma t}P(s_1|s_0,u_0)g^{u_0}(t)dt.\label{eq:be2}
\end{align}
\begin{comment}
\iflong\else\footnotesize\fi
\ral{-5}{
& J^\pi(s_0)= \E r^\pi(s_0)+\Big[\sum_{s_1=(h',k',j)}V(s_1)\cdot\nonumber\\
&\int_0^\iy e^{-\gamma t}P(s_1|s_0,\pi(s_0))g^{\pi(s_0)}(t)dt\Big]=\label{eq:be1}\\
&\E r^\pi(s_0)+ \Big[\sum_{h'}\sum_{k'}\sum_{j=i-k}^{B-k+1}V(s_1)\cdot\nonumber\\
&\int_0^\iy e^{-\gamma t}P(s_1|s_0,\pi(s_0))g^{\pi(s_0)}(t)dt\Big].\label{eq:be2}}
\iflong\else\normalsize\fi
%\end{equation}}}
\end{comment}
%
%
% $\E r^\pi(s_0)$ is the expected average reward at the initial state $s_0$.
The first (i.e the outer) summation in~\eqref{eq:be2} is over all possible next channel states.
 It is degenerated if the channel state is fixed. The second summation is over all possible packet sizes to be transmitted at state $s_1$. If the packet size is unknown at  decision time, or all packet sizes are equal, this sum degenerates. The third summation is over the number of arrivals to the queue during a transmission. The integration accounts for the transmission time. Note that the transmission time pdf $g^{\pi(s_0)}$ may depend on the action taken in state $s_0$.
 The expected instantaneous reward at state $s_0$ accounts for the average discount at the end of the transmission:
\vspace*{-5pt}\begin{equation}
\E r^\pi(s_0)=\int r^{\pi(s_0)}e^{-\gamma t}g^{\pi(s_0)}(t)dt,\label{eq:8}
\end{equation}
where % $r^0=0$,
 % while
   $\pi(s_0) \in \{a,b\}$, $r^{a}=k(1-p_a)$ and $r^{b}=k(1-p_b)$.

%\end{remark}

%Next, we consider special cases wherein~\eqref{eq:be2}  admits a closed form expression.  As a consequence, the Bellman equations are tractable.   In these cases, a straightforward application of value iteration can be used to find the optimal policy~\cite{puterman1994markov}, and the solution is guaranteed to be unique.

%\vspace{-0.1in}

% TWO PARAGRAPHS BELOW WERE PUT INTO COMMENT
%
% Note also that~\eqref{eq:be2} is a weighted sum of $V(s_1)$, for  $s_1 \in \bS$, where the weights are given by the integral of   $P(s_1|s_0, \pi(s_0),t)$ times a probability density function and an exponential discount.
The tractability of the Bellman equation form depends on resolvability of the integration in~\Crefrange{eq:be0}{eq:be2}. In the case a closed form can be obtained, a convenient utilization of value iteration (see, e.g., ~\cite{puterman1994markov}) procedure is possible. In particular,~\Crefrange{eq:be0}{eq:be2} substituted in~\eqref{eq:disc} result in recursive equations. Then, the repetitious application  of~\eqref{eq:disc}, guarantees an arbitrary closed convergence of $V$ to the fixed point, which stands for the unique solution. % can be an arbitrary closed.

% The Bellman equations admit closed form when the integration in~\Crefrange{eq:be0}{eq:be2} yields   a simple expression.  In this case,    the problem is amenable to solution through value iteration,

%\ifArx
\subsubsection{Chunk granularity}\label{subsubsec1}
In order to consider packets, which can contain a variable number of chunks, several technical adjustments should be made. In particular, the Poisson \textit{chunk arrival process} would feed the buffer room, which should be measured in \textit{chunk units}. The packet size can be indicated within the last or the first chunk in a packet (i.e, in correspondence with the two modes of packet size impact on the decision making).
%In order to incorporate a general definition of having packets larger than one chunk,
The state definition should include the number of chunks in the buffer, hence,~\Crefrange{eq:be0}{eq:be2} should be accordingly adjusted to reflect the change in the queue size after each transmission. In addition, the calculation of~\eqref{eq:8} should account for all additional states where the number of chunks present in the buffer is less than one packet.
\iflong
\vspace*{-15pt}\subsection{Exponentially distributed transmission times}\label{sub:exp}
%Next, we consider exponentially distributed transmission times.  
Let $1/\mu_u$  be the expected transmission time,  $u \in \{a,b\}$. %as estimated by the controller.   
%Denote exponentially distributed transmission times ${1}/{\mu_u}$, \rdo{u\in\{r,d\}}, with no packet size knowledge prior to transmission. The relay path ``r" stands for the mode ``a", while the direct path ``d" stands for the mode "b". \iflong 

\subsubsection{General SMDP formulation}
%
% In this section we assume that   channel ``a'' corresponds to  relayed transmissions, whereas channel ``b'' corresponds to a direct link between the source and the destination.  
% In addition, we assume that the controller does not condition its decisions on the size of the  packet  at the head-of-line,  prior to transmission (in Section~\ref{sec:det} we remove this assumption). 
%
%
% Note that during a  transmission through channel ``a" the time taken for the first transmission, between the source and the relay,   is assumed to be known prior to the beginning of the second transmission, between the  relay and the destination, and  the duration of  the second transmission is assumed to be deterministic given the duration of the first. Under these assumptions,   transmission times are exponentially distributed, irrespectively of whether channels  ``a" or ``b" are used. 
%
% \else We assume that duration of the second interval of a relay transmission is deterministic. Hence, it is easy to see that the total relay transmission time is exponentially distributed. \fi 
%
The state space is one-dimensional. % Let $s=\{n\}$, i.e., t 
Henceforth, we assume $B \ge 2$.  
The system state is characterized by the  number of packets at the source, including the packet being transmitted,
 and is denoted by $s$, $s \in \{0, 1, \ldots, B-1\}$. We assume all packets are equally valued with  associated reward $1$.
%In order to be able to make a comparison we now always view the system at points there the packet is served.
%The general solution of an SMDP is given by equation (11.3.6) in~\cite{puterman1994markov}, and it is based on seeing the system states between subsequent service events. In particular, one writes
Let  $g^u(t)=\mu_u e^{-\mu_u t}$, where superscript $u$ stands for the action taken, $u \in \{ a, b\}$.  
Then, it follows from~\eqref{eq:be1} that
%\begin{equation}
%V^\pi(s_0)=r(s_0)+\sum_i \left( \int_0^\iy e^{-\gamma t}\varrho(s_1=i|s_0,t)g_i(t)dt \right) V^\pi(s_1^i)
%\label{be3}
%\end{equation}
%Next, we specialize the result above to our setting.  Note that for $a \in \{ d,r \}$,
%\begin{eqnarray}
%g_i(\tau_a) &=& \mu_a e^{-\mu \tau_a} \\
%\varrho(s_1 = i|s_0,t) &=&  \frac{e^{-t \lambda} (\lambda t)^{i-j+1}}{(i-j+1)!} \\
%r(s_0) &=& C(1-p_a)  E[e^{-\gamma \tau_a}]
%\end{eqnarray}
%Then, for $0 < j < Q$,
% \rals{-5}{
\begin{align}
 %J^{u}(j)=  (1-p_u) E[e^{-\gamma \tau_u}]+ \nonumber
 & J^{u}_1(j)= V(B-1)\int_{0}^{\iy}e^{-\gamma\tau_u}\varrho(B|j,\tau_u)\mu_u e^{-\mu_u \tau_u} d\tau_u +  \nonumber\\
& +\sum_{i=j-1}^{B-1} V(i)  \left( \int_{0}^{\iy} e^{-\gamma\tau_u}\frac{e^{-\lambda\tau_u}(\lambda \tau_u)^{i-j+1}}{(i-j+1)!}\mu_u e^{-\mu_u \tau_u} d\tau_u \right)  \nonumber
\end{align}
where \rdo{\varrho(B|j,\tau_u)=\left(1-\sum_{i=j-1}^{B-1}\frac{e^{-\lambda\tau_u}(\lambda \tau_u)^{i-j+1}}{(i-j+1)!}\right)}  denotes the probability that during a transmission at least one packet is discarded due to buffer overflow, given that there are $j$ packets in the system prior to the beginning of the transmission.
%   during aby the transmission end the buffer is full is given by subtraction from $1$ of the sum of all probabilities which account for the cases when the number of arrivals did not fill the buffer to the maximum capacity.
%And for $j=0$,
%
%\begin{eqnarray}
%
%Ja(0)&=&C(1-p_a) E[e^{-\gamma (\tau_a+\tau_\lambda)}]+\sum_{i=0}^{Q-1} V(i)  \left( \int \int_{0}^{\iy} e^{-\gamma(\tau_a+\tau_\lambda)}\frac{e^{-\lambda\tau_a}(\lambda \tau_a)^{i}}{i!}\mu_a e^{-\mu_a \tau_a} d\tau_a \right) + \nonumber \\
% &&+ V(Q)\int \int_{0}^{\iy}e^{-\gamma(\tau_a+\tau_\lambda)}\left(1-\sum_{i=0}^{Q-1}\frac{e^{-\lambda\tau_a}(\lambda \tau_a)^{i}}{i!}\right)\mu_a e^{-\mu_a \tau_a} d\tau_a\\
%&=&C(1-p_a)\frac{\lambda}{\lambda+\gamma} \frac{\mu_a}{\mu_a+\gamma}+\sum_{i=0}^{Q-1} V(i)  \left(\int  \int_{0}^{\iy} e^{-\gamma(\tau_a+\tau_\lambda)}\frac{e^{-\lambda\tau_a}(\lambda \tau_a)^{i}}{i!}\mu_a e^{-\mu_a \tau_a} d\tau_a \right) + \nonumber \\
% &&+ V(Q)\int  \int_{0}^{\iy}e^{-\gamma(\tau_a+\tau_\lambda)}\left(1-\sum_{i=0}^{Q-1}\frac{e^{-\lambda\tau_a}(\lambda \tau_a)^{i}}{i!}\right)\mu_a e^{-\mu_a \tau_a} d\tau_a\\
%\end{eqnarray}
Note that \rdo{\int^\iy_0t^ne^{-st} dt=\frac{n!}{s^{n+1}}}.
Thus, for $0 < j \leq B-1$ we have,
\begin{align}
 & J^u(j)=
c_u\frac{\mu_u}{\mu_u+\gamma}+\sum_{i=j-1}^{B-1} V(i)\frac{\lambda^{i-j+1}\mu_u}{(\gamma+\mu_u+\lambda)^{i-j+2}}+  \nonumber \\ %  \label{eq:exp2}\\
&+ V(B-1)\left(\frac{\mu_u}{\gamma+\mu_u}-\sum_{i=j-1}^{B-1}\frac{\lambda^{i-j+1}\mu_u}{(\gamma+\mu_u+\lambda)^{i-j+2}}\right), \label{eq:exp}
\end{align}
where $c_u=(1-p_u)$.
\iflong
\begin{comment}
 and 
\begin{align}
% & J^u(B-1)=C(1-p_u)\frac{\mu_a}{\mu_u+\gamma}+
%&\int_{0}^{\iy}e^{-\gamma\tau_a}\frac{e^{-\lambda\tau_a}(\lambda t)^{0}}{(0)!}\mu_a e^{-\mu_a \tau_a} d\tau_aV(Q-1)+\int_{0}^{\iy}e^{-\gamma\tau_a}\left(1-\frac{e^{-\lambda\tau_a}(\lambda t)^{0}}{(0)!}\right)\mu_ae^{-\mu\tau_a} d\tau_aV(Q)=\\
& J^u(B-1)=\frac{C(1-p_u)\mu_u}{\mu_u+\gamma}+V(B-2)  \frac{\mu_u}{\mu_u+\gamma+\lambda}+\\
&+V(B-1)\left(\frac{\mu_u}{\mu_u+\gamma}-\frac{\mu_u}{\mu_u+\gamma+\lambda}\right) \nonumber 
\end{align}
\end{comment}
\fi
The optimal value function is found by ${V(n)=\max_uJ^u(n)}$, $n=1, \ldots, B-1$. 
Note that 
the value function at the boundary condition  $n=B-1$ is obtained directly from the equations above, whereas  $V(0)$ is given by 
\begin{equation}
 V(0)=  \int_0^{\infty} e^{-\gamma t} \lambda e^{-\lambda t} V(1) dt  = \frac{\lambda}{\lambda+\gamma}V(1).\label{eq:exp1}
\end{equation}

%\iflong
%We now refer to example presented in~\ref{sub:exp}. 
%\fi
Observe that~\eqref{eq:exp} displays a quasi-closed form of the value function. This allows for convenient utilization of value iteration algorithm. In particular, by repetitious application of the recursive equations~\eqref{eq:exp}-\eqref{eq:exp1},  $V$ converges to the solution. 
Nevertheless, due to the summation in~\eqref{eq:exp} the value function of the SMDP at each state may depend on all other states. For this reason, a direct analysis of $V$ obtained from the SMDP formulation is cumbersome.  
% This particular observation makes the analyses of the value function intractable.
%In what follows we analyze a relay node system with exponentially distributed transmission times and prove the threshold structure of its optimal dynamic routing policy. 
To circumvent this challenge, we rely on  an alternative MDP formulation which is equivalent to the SMDP  presented above. The resulting Bellman equations are simple, hence analysis and identification of optimal policies of threshold type is plausible.

% The drawback of the MDP formulation which follows is expressed by the difficulty to compare the results of different cases associated with different transmission time distribution. This is because the value function of the MDP relate to any point in time, due to the memoryless property of exponential service times, while general SMDP only refers to the transmission end times. 
% Therefore, we will use a specific MDP form, which only incorporates states right after events (transmission ends and arrivals). The resulting Bellman equations are simple, hence analysis and identification of threshold policy is plausible.    
\subsubsection{MDP formulation for the exponential case}
%We now compare to the ``pure" MDP. 
%We now refer to example presented in~\ref{sub:exp}. 
%Observe that the general Bellman equations written by SMDP have a form where value function at each state depends on all other states. This particular property makes the analyses of the value function vector difficult.
%The objective of the alternative formulation is to analyze equivalent yet more simple representation.
%The pure MDP approach which is comparable to the SMDP solution involves state transitions at all changes in the number of tasks at the source. That is, both once a packet arrives and is served. Hence, we observe value function right after any arrival event \textit{and} after any transmission event. Clearly, the latter values are comparable with the SMDP values.

%We now formally define the states and the corresponding value functions. 
% Recall that \rdo{n\in\{0,\cdots,B\}} stands for the number of chunks in the buffer. 
Next, we define the states of the MDP and their corresponding value functions. 
The transition diagram of the MDP is illustrated in Figure~\ref{fig:mdpmodel}.  
The definition of the state space is inspired by the MDP admission control example presented in~\cite[chpt. 11]{puterman1994markov}. 

Our goal is to leverage the Markovian structure of the problem when the times between all events are exponentially distributed. To this aim, we 
modify the state space to $\{ 0, 1, \ldots, B-1 \}  \cup \{ 0, 1, \ldots, B-1 \} \times \{a,b\}$.  
%Whereas under the general SMDP framework transitions occurred after every departure, 
In particular, we now consider 
transitions that occur \emph{after every departure or arrival}.  
States $(n)$, $n=0,\ldots, B-1$, are achieved after a departure (transmission completion), 
whereas states $(n,u)$, $n=0, \ldots, B-1$, $u \in \{a,b\}$ are achieved after arrivals.

\begin{figure*}[t]\center
\includegraphics[width=0.9\textwidth]{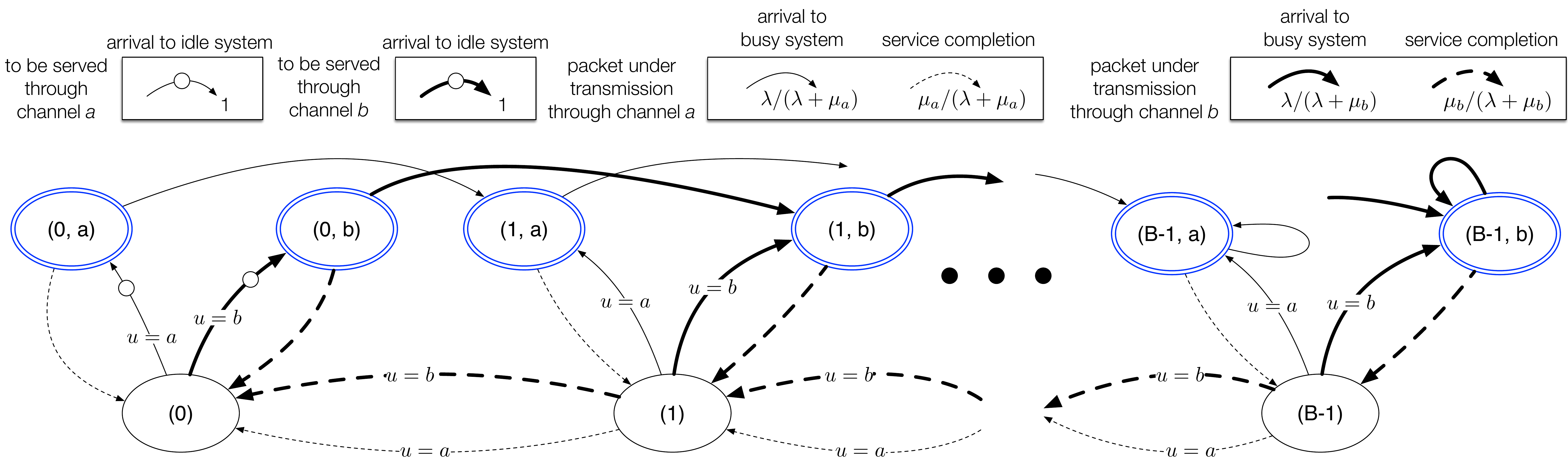}
\caption{MDP model of a system with buffer capacity $B$. Expected instantaneous  reward $c_a$ (resp., $c_b$) is received when action $a$ (resp., $b$) is taken.} \vspace{-0.2in}
\label{fig:mdpmodel}
\end{figure*}
The system transitions to state $(n)$ after a transmission completion that leaves behind $n$ packets in the buffer.  
As soon as the system reaches state $(n)$, $n=1,\ldots, B-1$, 
  the DM decides between transmitting the head-of-line packet through channels $a$ or $b$.  
  The mean
   residence time at state $(n)$, $n=1,\ldots, B-1$ is $1/(\mu_a+\lambda)$
    or $1/(\mu_b+\lambda)$, if actions $a$ or $b$ are chosen, respectively.  If a new packet arrives and encounters an idle system,
     the system transitions from state $(0)$ either to state $(0,a)$ or $(0,b)$, depending on whether action 
    $a$ or $b$ is chosen. The mean residence time at   state $(0)$ is $1/\lambda$.

    Next, we consider  a new packet that arrives to encounter a busy system.  %We assume that the arrival finds 
%    a total of $n$ packets in the buffer ,  and a packet
 %   being transmitted through channel $u$, $u \in \{a,b\}$.  
 Immediately after the arrival 
    the system transitions to state $(n',u)$, which accounts  for the packet being transmitted, and where $n'=\min(B-1,n)$.  At state $(n',u)$ the DM
    does not take any actions, as the only 
    possibility  is to continue the ongoing transmission.  The mean residence time at state $(n',u)$ is $1/(\mu_u+\lambda)$.

Let $\mathcal{P}(s'|s,u)$  be the transition probability from state $s$ to state $s'$, given that action $u$ was taken.
Then, for $n=1, \ldots, B-1$, the $\mathcal{P}(s'|s,u)$ is captured as follows.

% $\mathcal{P}(s_1|s_0,u_0)$ 

%\begin{align}
%  & \mathcal{P}(s_1|s_0,u_0) = \\

$\mathcal{P}(s'|s,u)=$
\begin{subequations} \label{ps1s0}
      \begin{empheq}[left={  = \empheqlbrace}]{align}
&1, && s = (0), s'=(0,u) \label{eq:sub1} \\
&\frac{\lambda}{\lambda+\mu_v}, &&  s = (n), s'=(n,v), v=u \textrm{ or }\;   \label{eq:sub2} \\
&& &s = (n-1,v), s' = (n,v)  \label{eq:sub3}  \\
& \frac{\mu_v}{\lambda+\mu_v},  &&   s=(n), s'=(n-1), v=u \textrm{ or } \;    \label{eq:sub4} \\
&&  &   s = (n,v), s'=(n) \textrm{ or } \label{eq:sub5}\\
&& &  s = (0,v), s'=(0)  \label{eq:sub6}
    \end{empheq}
\end{subequations}
%  
% \end{align}
\begin{comment}
1, & $ s_0 = (0), s_1=(0,u_0) $
$\frac{\lambda}{\lambda+\mu_u},$ & $ s_0 = (n), s_1=(n,u_0), u=u_0 \textrm{ or } $
& $s_0 = (n-1,u), s_1 = (n,u) $
$ \frac{\mu_u}{\lambda+\mu_u}, $ & $ s_0 = (n,u), s_1=(n) \textrm{ or } $
&$ s_0=(n), s_1=(n-1), u=u_0  $
0, &  $\textrm{otherwise}  $
\end{numcases} \label{ps1s0}
\end{comment}
%where $n=1, \ldots, B-1$.   

 A decision must be made when the system transitions to  state $(n)$, $n=0,\ldots, B-1$.    
% As we consider a non-preemptive system, once a transmission begins  it must be carried over.  
%  Therefore, decisions made at  
% states $(n,u)$ are inconsequential, and  
% transitions~\eqref{eq:sub3},~\eqref{eq:sub5} and~\eqref{eq:sub6} do not depend on $u$. 
Note that decision $u$ taken at  state $s=(n)$ immediately impacts the upcoming system state   
through transitions~\eqref{eq:sub1},~\eqref{eq:sub2} and~\eqref{eq:sub4}.
Transition~\eqref{eq:sub1} occurs after an arrival to an empty system, 
whereas transitions~\eqref{eq:sub2} and~\eqref{eq:sub3} occur after arrivals to a busy system. 
% A decision is made.
Transition~\eqref{eq:sub2} occurs after an arrival preceded by a  transmission completion, when decision $u$ had been made.  % which is currently acting according to $u$. No decision is made.
Transition~\eqref{eq:sub3} occurs after an arrival preceded by another arrival, 
during a transmission through channel $v$. 
Transitions~\eqref{eq:sub4}, ~\eqref{eq:sub5} and ~\eqref{eq:sub6} correspond to transmission completions.  
Transition~\eqref{eq:sub4} occurs after a transmission completion which was preceded by another transmission completion, 
when decision $u$ had been made. 
Finally, transitions~\eqref{eq:sub5} and ~\eqref{eq:sub6}
  occur  after transmission completions preceded by an arrival to a busy (resp., empty) system. 
% The previous state was associated with arrival to the system acting according to $u$. A (new) decision is made.
% Finally, transition corresponds to a transmission completion preceded by an arrival to an empty system.
%which leaves the system empty.
%  packet in the system. The system becomes empty till the next packet arrives and no decision is made till then.
%Note that~\eqref{ps1s0} depends on $u$ only at state transitions in~\eqref{eq:sub2} and~\eqref{eq:sub4}.
%At state transitions in~\eqref{eq:sub1},~\eqref{eq:sub3} and~\eqref{eq:sub5} are independent of $u$. which means that
 %the DM does not need to take any actions at those states.
Note that the at states $(n,a)$ and $(n,b)$ 
the variable $n$  does not account for  the arriving packet, which  will be admitted to the system   
in  case the buffer is not full. 
%Hence, there are $3$ states for each state of the buffer $n$, where $n\in0,\cdots,\BS$, denoted by $\{n,(n,d),(n,a)\}$. The value functions corresponding to the states are defined above. 
%Denote \sdo{V^{b,A}} and \sdo{V^{a,A}} the value functions at arrival events, at the moments of direct and relay transmission being active, correspondingly. There are no decisions made at the arrivals. Denote \sdo{V_{d,dep}} and \sdo{V_{r,dep}} values after transmissions are finished and the new transmission is decided to be performed on direct and relay channel, correspondingly. 

Instantaneous rewards are accumulated  once  transmissions are finished. Equivalently,  
 such rewards are added at the beginning  of a transmission, multiplied by the corresponding expected discount.  
 In what follows, we let $c_a$ and $c_b$ be the expected instantaneous reward received when actions $a$ and $b$ are taken,
 respectively.

%\subsubsection{MDP Bellman equations for the exponential case}

   Next, we introduce the value functions and Bellman equations which characterize the solution 
   of the MDP model. \ifArx Recall that  state  is achieved right \textit{after a departure} 
   (transmission completion) which leaves $n$ packets at the buffer,
$n \in \{0, 1, 2,\ldots, B-1\}$. \fi
The value function for state $(n)$ is denoted by $V_n$. 
\ifArx Recall also that 
state $(n,u)$ is achieved right \textit{after an arrival} to a system with $n$ packets,
$n \in \{0, 1, 2,\ldots, B-1\}$, 
 including the one which is currently being transmitted on path $u$, $u \in \{a,b\}$.  If $n=0$, the transmission of the
 arriving packet is immediately started through $u$.  \fi  
  The  value function corresponding to state $(n,u)$ is denoted by $V^{u,A}_n$ (``A" stands for ``arrival").
   
\begin{comment}  
    We start by summarizing the MDP state space and
   presenting the  value functions corresponding to each state.
   
   % To each state of the MDP there is a corresponding value, which is captured by 
   % the \emph{value function}.  

a) state $(n)$ is achieved right \textit{after a departure} (transmission completion) which leaves $n$ packets at the buffer,
$n \in \{0, 1, 2,\ldots, B-1\}$. The 
corresponding value function is denoted by $V_n$;

b) state $(n,u)$ is achieved right \textit{after an arrival} to a system with $n$ packets,
$n \in \{0, 1, 2,\ldots, B-1\}$, 
 including the one which is currently being transmitted on path $u$, $u \in \{a,b\}$.  If $n=0$, the transmission of the
 arriving packet is immediately started through $u$.  
  The corresponding value function is denoted by $V^{b,A}_n$ (``A" stands for ``arrival").
\end{comment}  
  
%\item state $(n,u,D)$, achieved right after a departure (transmission completion) which leaves $n$ packets at the buffer,
%$n \in \{1, 2,\ldots, N-1\}$,  and the next transmission is made through $u$,  $u \in \{a,b\}$.  
%  The corresponding value function is denoted by $V^{b,D}_n$ (``D" stands for ``departure").  
% \item state $\{n,b\}$, right after the arrival to the system which previously had $n$ packet, including the one which is currently being transmitted on path \textit{b}. The corresponding value function denoted $V^{a,A}_n$.
% \end{itemize} 

We are mainly interested in the values of states $(n)$, i.e.,  our main goal is to obtain  $V_n$, 
$n \in \{0, 1, 2,\ldots, B-1\}$,   
because the decisions are only made in these states and because they are \textit{comparable with the SMDP values}.

Let \rdo{\delta_a=(\mu_a+\lambda+\gamma)^{-1}}, \rdo{\delta_b=(\mu_b+\lambda+\gamma)^{-1}}, 
\rdo{\beta_b=(\mu_b+\gamma)^{-1}},\rdo{\beta_a=(\mu_a+\gamma)^{-1}}\iflong and \rdo{\bar\delta=(\gamma+\lambda)^{-1}}\fi. 
%We write next Bellman equations for the value function for states at arrival events.
In what follows it will always hold $u\in\{a,b\}$. 
The Bellman equations define the operators \rdo{\A_{a}} and \rdo{\A_{b}}, which act in the space of 
function from \rdo{\{0,\cdots,B\}} to $\R$. Write 
%\small
\begin{equation} 
 V^{u,A}_{n}=\mu_u \delta_u V_{n}+\lambda \delta_u V^{u,A}_{n+1} = \A_{u}V^{u,A}_{n} \label{eq:lp104} 
\end{equation}% \\
%& V^{a,A}_{n}=\mu_a \delta_a V_{n}+\lambda \delta_a V^{a,A}_{n+1} = \A_{a}V^{a,A}_{n},\nonumber}
%\normalsize
%The corresponding operators are given by:
%\fal{-1}{
%& {V}^{b,A}=\A_{d}V^{b,A},
%\;\;\; {V}^{a,A} = \A_{r}V^{a,A}\label{eq:pth7}
%}
%where superscript "D" stands for "departure". 
%Write value functions and the corresponding operators for both decision types. %\iflong The maximization is performed as follows
%\rals{-5}{
% V_{n}&=\max(V^{b,D}_{n},V^{a,D}_{n}) \nonumber\\
% &=\max((\T_b V_{n},\T_a V_{n}) =\T V_{n},%\label{lastbellman}
%}\else \fi
Denote by \rdo{\T,\T_a,\T_b} the operators acting on the space of functions from \rdo{\{0,\cdots,B\}} to $\R$. 
When applied to state $n$, these operators yield the following equation for $u\in\{a,b\}$: %\sdo{\T_a\text{ and }\T_b} defined by the following Bellman equations
\begin{equation}
%& V^{b,D}_{n}=\T_b V_{n}=\mu_b\delta_b V_{n-1}+\lambda\delta_b V^{b,A}_n+(1-p_b) \beta_b \nonumber \\
V^{u,D}_{n} = \T_u V_{n}=\mu_u \delta_u V_{n-1}+\lambda \delta_u V^{u,A}_n+(1-p_u)  \beta_u  \label{eq:lp113}
\end{equation}
The maximization over the available actions is  performed after transmission completions, 
at states $(n)$, $n=1,\ldots, B-1$, when $n$ packets are left in   the buffer, 
\begin{equation} 
V_{n}=\max(V^{b,D}_{n},V^{a,D}_{n}) =\T V_{n}=\max(\T_b V_{n},\T_a V_{n}).
\end{equation}
At the buffer limit boundary \rdo{\BS} we have %(see Figure~\ref{fig:beforeuniq1addstate}):
\begin{equation} 
 V^{u,A}_{B}=\mu_u \delta_u V_{B-1}+\lambda \delta_u V^{u,A}_{B}
\end{equation}
%& V^{a,A}_{B}=\mu_a \delta_a V_{B-1}+\lambda \delta_a V^{a,A}_{B}, %\label{eq:pth13}
and at the empty buffer, 
\begin{equation} \label{eq:pth14}
 V^{b,D}_{0}=V^{a,D}_{0}=\max(\lambda \bar\delta V^{b,A}_{0},\lambda \bar\delta V^{a,A}_{0}). %\label%+(1-p_b)*(C)*(miu_b+l1)*delta_b;
\end{equation}
Arrivals that find an empty buffer are subject to the effect of the DM current decision, 
\begin{equation} \label{eq:bellman2}
 V^{b,A}_{0}=V^{a,A}_{0}=%\nonumber \\ 
\max_{u\in\{a,b\}}(\mu_u\delta_u V^{u,D}_{0}+\lambda\delta_u V^{u,A}_1+c_u)
\end{equation}
%\mu_a \delta_a V^{a,D}_{0}+\lambda \delta_a V^{a,A}_1+c_a) \nonumber \label{eq:115}}
Note that $V^{b,A}_{0}=V^{a,A}_{0}$ holds because at state $(0)$ no packet is currently being transmitted.
%See that at states $(n,u)$, $u\in\{a,b\}$, there are $n+1$ packets, whereas at states $(n)$ and $(n,u,D)$ there are $n$ packets.
The derivation of  equations~\eqref{eq:lp104}-\eqref{eq:bellman2} from the process model is presented in Appendix~\ref{app:pBE}.
%\clearpage
%\pagebreak

\begin{table}[t]
     \vspace{0.1in}
 \center
\begin{tabular}{l|l}

\hline
variable & description \\ %  & comment \\

\hline
%${\bS}$ & state space of MDP & $|{\bS}|=3\BS+2$ (see Fig~\ref{fig:beforeuniq2addstate} for $\BS=2$) \\
%$s$ & state & \\%$\sigma=(n,a)$ \\
 $c_u$ & expected instantaneous reward associated to decision $u$.  \\ \hdashline %& weighted by the expected discount at the end of transmission \\
%        $c_b$ & expected instantaneous  reward associated with decision ``b".  \\  \hdashline%  & weighted by the expected discount at the end of transmission \\
% $n$ & number of packets in the waiting line,  not counting the packet \\
% &  being transmitted,   $n \in \{ 0, 1, \ldots,  B-1 \}$ \\
% $u$ & current active channel, $u \in \{a,b\}$ \\
$(n,u)$ & state right after arrival, when current active transmission \\
&  is through $u$ and $n$ packets are found by the arrival.  \\
 % & Note: $n$ is the number of packets in the buffer, \\
 % & accounting for the packet  being transmitted \\
 % & \emph{but not for the latest arrival} \\    
%$(0,u)$ & state right after arrival to an empty buffer \\%  &   \\
%     &  (arriving packet will be transmitted by $u$),   $u \in \{a,b\}$ \\
    $(n)$ & state  following a transmission completion, when $n$ packets \\
    &  are left in the buffer. Decisions are made at these states.     \\   \hdashline%
   % $(n,u,D)$ & state  following a transmission completion,  \\
   % & after decision of transmitting through $u$,  $u \in \{a,b\}$  \\
   % &  $n \in \{1,\ldots,B-1 \}$    \\  
   % & Note: $n$ is the number of packets in the buffer, \\
   % & accounting for the packet being transmitted \\    \hdashline    
% $N$ & random variable characterizing number  & $N \in \{0,1,\ldots,B \}$  \\
% & of packets in queue, i.e., & \\
% & waiting queue + under transmission & \\ 
%             & (number of packets tracked  immediately after   & \\
%             & arrivals and departures, as seen by the arrivals  & \\
%             &  and left behind by the departures) & \\             \hdashline
    $V^{u,A}_n$ &  value  function at state $(n,u)$. \\ % ,  $u \in \{ a, b\}$, $n \in \{ 0, \ldots, B-1 \}$ \\
                 $V_n$ &  value function at state $(n)$,   $V_n=\max\{V^{a,D}_n,V^{b,D}_n\}$. \\
 $V^{u,D}_n$ &   state-action value function for action $u$ at state $(n)$. \\  \hdashline%
 %& $u \in \{a,b\}$,   at state $(n)$, $n \in \{ 0, \ldots, B-1 \}$ \\ \hdashline
        % & after reaching  state $(n,t)$ & \\
         % $V^{b,D}_n$ & value  function associated with action $b$ &  $n \in \{ 0, \ldots, B-1 \}$ \\
       %  & after reaching  state $(n,t)$ & \\
          %  $V^{b,A}_n$ &  value function at state $\{n,b\}$  & $n \in \{0, \ldots, B \}$ \\  
    $\A_{u}$ & operator applied over  arrivals (acts on $V^{u,A}$). \\ %  & $U$ \\
  %  & when the active transmission is through ``u", $u \in \{a,b\}$   \\
    $\T_{u}$ & operator applied over departures (acts on $V^{u,D}$). \\ % , with action $u$ \\
   % & transmission through ``u" (acts on $ V_n$).\\
    $\T$ &  operator that maximizes the outcome of $\T_u$ over $u$. \\
      \hline
     \end{tabular}
     \vspace{0.1in}
     \caption{MDP notation ($u \in \{a,b\}$, $n \in \{0,\ldots,B-1 \}$).} \vspace{-0.3in}  \label{tab:mdpspec}
     \end{table}
  %   \clearpage
%Let
%\fal{-1}{ \label{eq:deltandef}
%\Delta_0 &=& 0 \\
%\Delta_n &=& \nu_n - \nu_{n-1}, \qquad n=1, 2, \ldots, \BS \nonumber
%}
%and convene to \sdo{\nu_0=0}. 
%Observe that there exists an optimal threshold type in case $\nu_n$ is increasing in $n$, that is,
%\seq{-1}{V^{b,D}_{n}-V^{b,D}_{n}\ge 0, \qquad n=1, 2, \ldots, \BS}{eq:pth1}
In Section~\ref{sec:props} we will use the MDP formulation to identify the threshold type structure of the optimal policy.

\fi
% \vspace{-0.25in}
\subsection{{Deterministic transmission times and known packet sizes}} \label{sec:det}
\normalsize Consider a source which samples the packet size before the transmission.
Consider two possible sizes, denoted by $k_1$ and $k_2$, both taking in the buffer exactly one slot. %, i.e. case $2)$ in section~\ref{sec:phy}. 
We assume equal rewards for both sizes.
Then, the state is given by \rdo{s=\{n,k\}}, \rdo{n\in\{0,\cdots,B\}} and \rdo{k\in\{k_1,k_2\}}.
\begin{comment}
The packet is ready to be transmitted once the newly arrived chunk indicates that it is the last one in the packet.
\end{comment}
We assume that packet size dynamics is given by a discrete Markov Chain with transition probabilities $q(k'|k)$. 
%For simplicity we also assume $\bA=\{a,b\}$, that is, the packet, if ready, has to be transmitted.
Denote the deterministic transmission time of packet of size $k$ as $\tau^{\pi(n,k)}$.
The packet loss probabilities are $p_{a}$ and $p_{b}$. %for the size $k$ of a packet to be transmitted.
%The instantaneous rewards are $C(1-p_{r,k})$ and $C(1-p_{d,k})$, correspondingly, which are further weighted by the accumulated delay $\beta_{u,k}$ at the end of the transmission which is about to start.
%\iflong We have
%\rals{-5}{
%& J^u(i,k)=C(1-p_{a,k})\beta_{u,k}+ \\
%&\sum_{k'}\sum_{j=i-1}^BV(j,k)\int_0^\iy e^{-\gamma t}\varrho(j|i,t)q^u(k'|k)\delta(t-\tau_{u,k})dt=\\
%&\;\; C(1-p_{u,k})e^{-\gamma\tau_{u,k}}+\\
%&\sum_{k'}\sum_{j=i-1}^\BS e^{-\gamma\tau_u}\varrho(j|i,\tau_{u,k})q^u(k'|k)V(j,k'),} 
%\else
Denote $s_0=(i,k),\;s_1=(j,k')$.
Then,  %and $\beta^{\pi(n,k)}=e^{-\gamma\tau^{\pi(n,k)}}$ we have %Write equation for a packet of size $k$ at the head of the source's queue.
%\vspace*{1pt}
\begin{equation*}  % =\E r^\pi(i,k)+
J^{\pi}_1(i,k)  = \sum_{k'}\sum_{j=i-1}^\BS e^{-\gamma\tau^{\pi(s_0)}}Q((s_1)|s_0,\pi(s_0))V(j,k')
\end{equation*} 
where  $Q((s_1)|s_0,\pi(s_0))=q(k',k)\varrho(j|i,\tau^{\pi(s_0)})$. See that  
\(c_{\pi(i,k)}=\E r^{\pi(i,k)}=(1-p_{\pi(i,k)})e^{-\gamma\tau^{\pi(n,k)}}\).
%where $k$ is the packet size at the head of the queue and Poisson distribution for $\varrho$ is substituted. 
%The decision space consists of subspaces where each subspace is uniquely corresponds to a packet size.
Hence, the value functions are
\rdo{ V(i,k_m)=\max_\pi\{J^{\pi(i,k_m)}\}}, $m\in\{1,2\}$. %;  V(i,k_2)=\max_\pi\{J^\pi(i,k_2)\}}
%& V((i,B))=TV(i,B)=\max\{V((i,B),r),V((i,B),d)\}
\iflong
Finally, the boundary condition, then the buffer is empty is
\rals{0}{
V(0,k)=  \E e^{-\gamma t_\lambda}V(1,k)=\sum_{k'}\frac{\lambda}{\gamma+\lambda}\big(q(k'|k,0)V(1,k)\big).
}
\fi
%and
%{\small{\[
%V(0,B)=\frac{\lambda}{\gamma+\lambda}\big(q_0(A|B,0)V(1,A)+q_0(B|B,0)V(1,B)\big),
%\]}}

\vspace{-0.2in}

\subsection{{Gilbert-Elliott   channel, uniformly distributed transmission times}}\label{sec:uniform}
%Consider uniform transmission time distributions (e.g. 
Assume the packet sizes cannot be sampled, but are known to have a uniform distribution over all channels.
Consider a Gilbert-Elliott (G-E) channel with two states, Good and Bad, denoted by $\mathcal{G}$ and $\mathcal{B}$.
 For simplicity, we assumed a similar states for the \textit{entire medium}, that is
  $h\in\{h^{\mathcal{G}},h^{\mathcal{B}}\}$, such that $h^{\mathcal{G}}=(h^{\mathcal{G}}_a,h^{\mathcal{G}}_b)$, 
  $h^{\mathcal{B}}=(h^{\mathcal{B}}_a,h^{\mathcal{B}}_b)$, where $a,b$
   stand for the two operating points, 
 i.e., channel ``a'' (e.g.,  the relay channel) and channel ``b'' (e.g.,   the direct channel). We assume that channel dynamics can be expressed by discrete Markov Chains.
%\iflong Namely, both relay and direct routes are fully dependent and can simultaneously be in one of the states. \fi  That is, the channel state refers to the entire medium. 
The channel state is sampled prior to each upcoming transmission,
and is modeled as part of the state space.  Hence, the state is given by $s=\{n,h\}$, \rdo{n\in\{0,\cdots,\BS\}} and \rdo{h\in\{{\mathcal{G}},{\mathcal{B}}\}}.  
The packet loss probabilities can be calculated right before the current transmission slot. According to the number of possible channel states, there are two possible packet loss probability ordered pairs: $(p_a^{\mathcal{G}},p_b^{\mathcal{G}})$ and $(p_a^{\mathcal{B}},p_b^{\mathcal{B}})$.
 We assume these probabilities are calculated from the BER which corresponds to the particular G-E channel state.  
Denote $u=\pi(n,h)$.
%, which corresponds to the Gilbert–Elliott (GE) channel.
The transmission time $\tau^{a,h}$ (resp. $\tau^{b,h}$) over channel ``a''  (resp. channel ``b'')
  is uniformly distributed. The uniform distributions intervals are given by \rdo{[\alpha_{h_u},\beta_{h_u}]}. 
%We assume that all packets in the buffer  occupy exactly one slot and are equally rewarded.
The channel transition probabilities, are denoted by $p(h'|h)$.
Notation is summarized in Figure~\ref{fig:gel}.

\begin{figure}[h!]
\center
\includegraphics[width=0.4\textwidth]{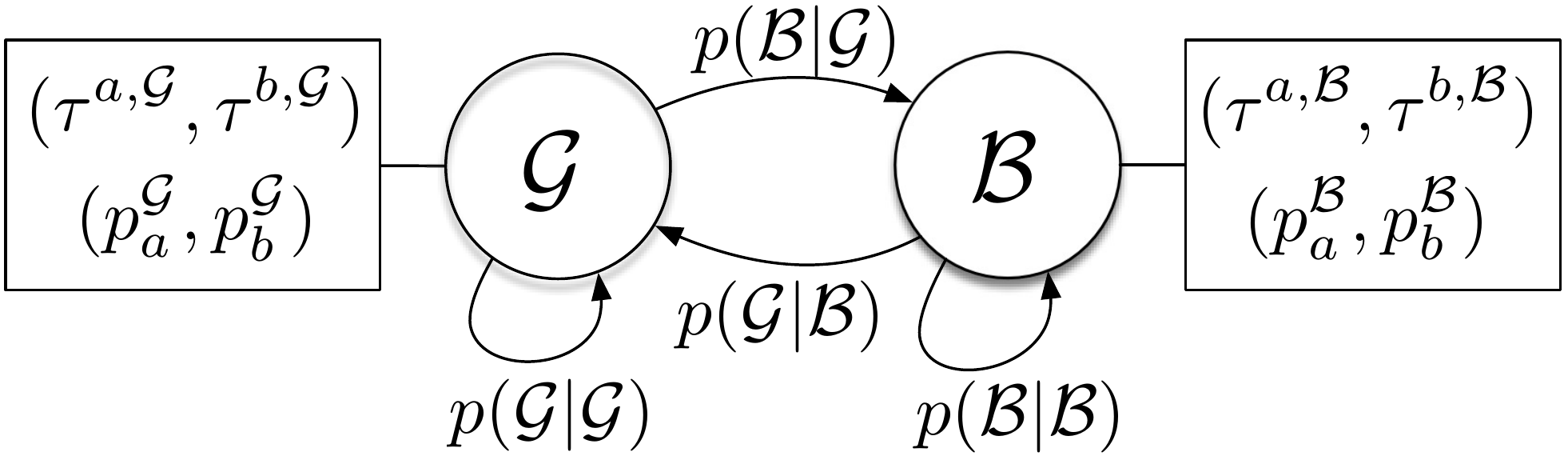}
\caption{Gilbert-Elliott model} \label{fig:gel}
\end{figure}

 %, depend on the action $u\in\{r,d\}$, where relay (resp. direct) transmission corresponds to the transmission mode "a" (resp. "b").  
Denote $s_0=(i,h), \;s_1=(j,h')$, $h,h'\in\{{\mathcal{G}},{\mathcal{B}}\}$. Then, 
\begin{equation*}
J^{u}_1(s_0)=\sum_{j=i-1}^B\sum_{h'} V(j,h')\int_{\alpha_{h_u}}^{\beta_{h_u}} e^{-\gamma t}Q(s_1|s_0,\pi(s_0))dt
\end{equation*}
where \rdo{Q(s_1|s_0,\pi(s_0))=({\beta_{h_u}-\alpha_{h_u}})^{-1}\varrho(j|i,t)p(h'|h)}, 
%$C(1-p_{u,h})\beta_{u,h}$
%Next, assume the packet transmission time $\tau_d$ over direct channel is uniformly distributed in $[a_d,b_d]$.
%\begin{align*}
%& V(j,d)=C(1-p_d)+\frac{1}{b_d-a_d}\sum_{i=j-1}^Q V(j)\int_{a_d}^{b_d} %e^{-\gamma\tau_d}\varrho(i|j,\tau_d)d\tau_d
%\end{align*}
%To solve the integration, Poisson distributions are substituted, and well known integration \sdo{\int x^n e^{cx}\; \mathrm{d}x = e^{cx}\sum_{i=0}^n (-1)^{n-i}\,\frac{n!}{i!\,c^{n-i+1}}\,x^i} is applied.
and %\sdo{\beta_{u,s}} is the discount which is given by 
$({\beta_{h}-\alpha_{h_u}})\E r^u=(1-p_u)\int_{\alpha_{h_u}}^{\beta_{h_u}}e^{-\gamma t}dt=(1-p_u){e^{-\gamma(\beta_{h_u}-\alpha_{h_u}) }}$.
%Substitute~\eqref{be2}, $\varrho(i|j,\tau_d)=\frac{e^{-\mu \tau_d}(\mu \tau_d)^{i-j+1}}{(i-j+1)!}$
%\begin{align*}
%& V(j,d)=C(1-p_d)+\frac{1}{b_d-a_d}\sum_{i=j-1}^{B-1} V(i)\int_{a_d}^{b_d} e^{-\gamma\tau_d}\frac{e^{-\lambda \tau_d}(\lambda \tau_d)^{i-j+1}}{(i-j+1)!}d\tau_d\;+\\
%& \frac{1}{b_d-a_d}V(Q)\int_{a_d}^{b_d} e^{-\gamma\tau_r}\varrho(B|i,\tau_r))d\tau_d
%\end{align*}
Note that the probability to have full buffer after end of transmission is given by
$\varrho(B|i,t)=(1-\sum_{i=j-1}^{B-1}\varrho(j|i,t))$.
%\begin{align*}
%& V(j,r)=C(1-p_r)+\frac{1}{b_r-a_r}\sum_{i=j-1}^{Q-1} V(i)\frac{\lambda^{i-j+1}}{(i-j+1)!}\int_{a_r}^{b_r} e^{(-\gamma-\lambda)\tau_r}( \tau_r)^{i-j+1}d\tau_r\;+ \\
%&\frac{1}{b_r-a_r}V(Q)\int_{a_r}^{b_r} e^{-\gamma\tau_r}(1-\sum_{i=j-1}^{Q-1}\varrho(i|j,\tau_r))d\tau_r
%\end{align*}
Finally, the value function for state $s$ is given \rdo{V(s)=\max_u\{J^u(s)\}}.

% \vspace*{-8pt}

\section{Structure of optimal policies}\label{sec:props}
%It is important to understand
The structure of the optimal policy has a particular importance, in the sense that it can facilitate assessment of resources needed for the policy implementation at wireless nodes. For the system with large state-space, structural properties can be exploited by learning algorithms in order to significantly reduce the complexity of optimal policy search. %This is especially useful .
For example, once the policy is proven to possess a \textit{threshold} structure, the data to hold for the policy (in the corresponding dimension of a state space) is reduced to a single scalar.
Moreover, the configuration of similar systems can be analytically or heuristically based on the existing one, e.g. by means of reinforcement learning aimed to policy improvement.
We aim to identify threshold policies for the SMDP models and solutions presented above. For the exponential case, we analytically prove the threshold property. We finally compare by simulations the thresholds associated with other transmission time distributions. 
To this end, we state our main analytical result:
%We now refer to example presented in~\ref{sub:exp}. 

\begin{theorem}\label{prop:1} 
	The problem with exponentially distributed transmission times modeled by MDP is solved by the optimal policy of a threshold type. Namely, there exists a unique threshold $t$, $0\leq t\leq B$, such that the optimal policy is to transmit via path ``a" for all
	 states where $n\leq t$ and to transmit via path ``b" otherwise.
\end{theorem}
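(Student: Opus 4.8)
The plan is to prove the theorem by establishing structural properties of the value function through value iteration, and then reading off the threshold from a single-crossing (supermodularity) argument. First I would note that, since $\gamma>0$, the effective per-transition discount $(\mu_u+\lambda)\delta_u<1$ makes the composite Bellman map a contraction on $\R^{B}$: substituting the arrival recursion~\eqref{eq:lp104} into~\eqref{eq:lp113}--\eqref{eq:bellman2} yields an operator on the finite vector $(V_0,\dots,V_{B-1})$ whose modulus is at most $\max_u(\mu_u+\lambda)\delta_u<1$. Hence value iteration $V^{(k+1)}=\T V^{(k)}$ converges geometrically to the unique fixed point $V$, and it suffices to show that the claimed structure is preserved by $\T$ and therefore holds in the limit.

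The core of the argument is an induction showing that $V_n$ is \emph{nondecreasing and concave} in $n$ on $\{0,\dots,B-1\}$ (monotone because admitting work that can still be served cannot reduce a throughput reward; concave because the marginal value of buffer room decreases as the queue fills). I would carry out the induction in three sub-steps. \textbf{(i) Arrival values inherit the structure.} Unrolling~\eqref{eq:lp104} with its boundary at $n=B$ writes $V^{u,A}_n=\mu_u\delta_u\sum_{m=n}^{B-1}(\lambda\delta_u)^{m-n}V_m+(\lambda\delta_u)^{B-n}\mu_u\beta_u V_{B-1}$, a nonnegative, essentially shift-invariant combination of the $V_m$; since a nonnegative combination of shifted concave functions is concave, $V^{u,A}_n$ is nondecreasing and concave whenever $V_n$ is. \textbf{(ii) Each action operator preserves the structure.} Substituting (i) into~\eqref{eq:lp113}, $\T_uV_n=\mu_u\delta_uV_{n-1}+\lambda\delta_uV^{u,A}_n+(1-p_u)\beta_u$ is a nonnegative combination of shifted nondecreasing concave functions plus a constant, hence nondecreasing and concave. \textbf{(iii) The maximization preserves the structure.} This is the delicate step, treated jointly with the single-crossing property below.

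The threshold itself follows from a \emph{single-crossing} property of the state--action value in $(n,u)$. I would study the advantage $\Delta_n:=\T_aV_n-\T_bV_n=V^{a,D}_n-V^{b,D}_n$ and show it is nonincreasing in $n$. Expanding, $\Delta_{n+1}-\Delta_n$ is a combination of the departure increments $V_n-V_{n-1}$ and the arrival increments $V^{u,A}_{n+1}-V^{u,A}_n$, all nonincreasing by concavity; the orderings $\mu_a<\mu_b$ (so $\mu_a\delta_a<\mu_b\delta_b$ while $\lambda\delta_a>\lambda\delta_b$) together with $p_a<p_b$ fix the sign, so the faster, less reliable path $b$ becomes relatively more attractive as $n$ grows. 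Once $\Delta_n$ is nonincreasing it crosses zero at most once; setting $t$ to be the largest $n$ with $\Delta_n\ge0$ gives exactly ``$a$ for $n\le t$, $b$ for $n>t$'', and monotonicity forces $t$ to be unique. The empty- and full-buffer relations~\eqref{eq:pth14}--\eqref{eq:bellman2} are then checked to be consistent with this rule, which also covers the extreme cases $t=0$ (always $b$) and $t=B$ (always $a$).

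The main obstacle I anticipate is step (iii): closing the induction through the $\max$. A pointwise ``max of concave is concave'' claim is false in general, since two concave functions can cross with an \emph{increasing} slope gap at the crossover, producing a convex kink. The argument must therefore exploit the specific coupling between the two operators, which share the \emph{same} continuation function $V$ and the \emph{same} Poisson queue dynamics and differ only through the scalars $(\mu_u,p_u)$. Concretely, I would prove concavity of $V_n$ and monotonicity of $\Delta_n$ \emph{simultaneously} in the induction, using the single crossing of $\Delta$ to locate where the maximizer switches and verifying that the increment of $\T V$ does not increase across that switch. Making this joint induction airtight, and handling the finite-buffer boundary at $n=B$ where the arrival recursion terminates and shift-invariance is lost, is where the real work lies.
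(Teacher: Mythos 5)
Your proposal is correct in outline and follows essentially the same route as the paper: the paper likewise defines an invariant class $\calS$ of functions that are non-decreasing, concave, and have a monotone advantage difference between $V^{a,D}_n$ and $V^{b,D}_n$ (your single-crossing property), proves a lemma that the operators $\A_a$, $\A_b$, $\T$ preserve this class, invokes contraction of the Bellman operators to place the fixed point (the value functions) in $\calS$, and reads the threshold off the single crossing; it also resolves your step (iii) exactly as you anticipate, by a joint induction in which the monotone advantage controls where the maximizer switches so that concavity survives the $\max$. Two differences are worth noting. First, the paper's class carries a \emph{fourth} invariant that your plan omits: a uniform bound $K$ on the increments of $U$, $U^{a,A}$, $U^{b,A}$. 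This is not decorative; it is what allows the induction to close at the \emph{empty-buffer} state $(0)$, where the Bellman relations~\eqref{eq:pth14}--\eqref{eq:bellman2} involve the different discount $\bar\delta=(\gamma+\lambda)^{-1}$, so the increments $V_1-V_0$ and $V^{u,A}_1-V^{u,A}_0$ do not follow the interior recursion pattern, and the paper must choose $K$ from the boundary data to make the slope bound self-consistent there. You flag the full-buffer end, but it is the empty-buffer end that forces this extra ingredient into the paper's argument, and your three-property induction would need the analogous device to go through. Second, rather than unrolling the arrival recursion into a geometric sum as in your step (i), the paper keeps the five functions $\{U,U^{a,A},U^{b,A},U^{a,D},U^{b,D}\}$ as separate components of the tuple and verifies that each one-step operator preserves the properties; this sidesteps the loss of shift-invariance at $n=B$ that your unrolled representation has to confront directly.
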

\iflong
By the equivalence of the value functions at departures in MDP and SMDP formulations trivially the following holds.
\begin{corr}
	The exponential problem modeled in section~\ref{sec:mdpmodel} by SMDP is solved by the optimal policy of threshold type.
\end{corr}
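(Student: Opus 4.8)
The plan is to work entirely within the discrete-time MDP reformulation of Section~\ref{sec:mdpmodel}, whose Bellman operators~\eqref{eq:lp104}--\eqref{eq:bellman2} are contractions with modulus governed by the discount $\gamma$; hence value iteration $V^{(\ell+1)}=\T V^{(\ell)}$ converges to the unique fixed point $V$, and any structural property preserved by a single sweep of $\T$ (together with the auxiliary operators $\A_u$ and $\T_u$) is inherited by $V$. I would therefore prove the theorem by establishing two structural properties of $V_n$ as a function of the backlog $n$ — that it is \emph{nondecreasing} and \emph{concave}, i.e. $V_{n+1}-V_n$ is nonincreasing — and then reading the threshold off the induced decision rule.

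First I would unfold the arrival recursion~\eqref{eq:lp104}: solving $V^{u,A}_n=\mu_u\delta_u V_n+\lambda\delta_u V^{u,A}_{n+1}$ backward from the $n=B$ boundary shows that $V^{u,A}_n$ equals $\mu_u\beta_u$ times a geometric average of $V_n,V_{n+1},\dots,V_{B-1}$ with weights proportional to $(\lambda\delta_u)^m$. Since a convex combination of the entries of a nondecreasing concave sequence is again nondecreasing and concave, $\A_u$ preserves both properties; the same is immediate for the affine departure map $\T_u$ in~\eqref{eq:lp113}, which adds the constant reward $c_u=(1-p_u)\beta_u$ and a shifted copy of $V$. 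With the trivially nondecreasing and concave base case $V^{(0)}\equiv 0$, the induction then carries both properties to each action-value $\T_u V^{(\ell)}$.

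With concavity and monotonicity of $V$ in hand, I would show the policy is threshold by proving the advantage $\Delta_n:=\T_a V_n-\T_b V_n$ is \emph{nonincreasing} in $n$. Writing $\Delta_n$ out via~\eqref{eq:lp113} leaves a constant $c_a-c_b>0$ (since $p_a<p_b$ and $\mu_a<\mu_b$ force $c_a>c_b$) plus terms $\mu_u\delta_u V_{n-1}$ and $\lambda\delta_u V^{u,A}_n$ carrying action-dependent geometric weights. Reducing the monotonicity of $\Delta_n$ to the behaviour of the consecutive increments $V_n-V_{n-1}$, concavity makes each increment nonincreasing while monotonicity fixes their sign; combined with $\mu_a<\mu_b$ this yields $\Delta_n$ nonincreasing. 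A nonincreasing $\Delta_n$ changes sign at most once, giving a unique $t\in[0,B]$ with action $a$ optimal for $n\le t$ and action $b$ optimal for $n>t$; the empty-buffer relation~\eqref{eq:pth14} and the $n=B$ boundary equation pin down the admissible endpoints.

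The main obstacle is the concavity step, because $V_n=\max(\T_a V_n,\T_b V_n)$ and the pointwise maximum of two concave sequences is in general \emph{not} concave — a convex kink can appear exactly at the switching state. I would resolve this by carrying the induction on concavity \emph{jointly} with the single-crossing property of $\Delta_n$: maintaining both invariants at once, one verifies that at the unique crossover the discrete second difference of the upper envelope still has the correct sign, the cancellation being precisely the one forced by $\mu_a<\mu_b$ and $c_a>c_b$. Establishing that this coupled invariant survives one full application of $\A_u$, $\T_u$ and the maximization is the delicate heart of the argument; everything else reduces to bookkeeping on the geometric sums from~\eqref{eq:lp104}.
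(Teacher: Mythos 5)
The statement you were asked to prove is the Corollary about the SMDP, and---given Theorem~\ref{prop:1}---its entire content is a \emph{transfer} step: why does threshold optimality for the auxiliary MDP (states $(n)$ and $(n,u)$, operators~\eqref{eq:lp104}--\eqref{eq:bellman2}) imply that the original SMDP of Section~\ref{sec:mdpmodel}, whose Bellman recursion is~\eqref{eq:exp}--\eqref{eq:exp1} and whose decision epochs are transmission completions and arrivals to an empty system, is solved by a threshold policy? Your proposal announces that it will ``work entirely within the discrete-time MDP reformulation'' and never returns to the SMDP. The paper's proof of the Corollary is precisely the missing bridge: the MDP refines the SMDP's embedding by inserting arrival epochs at which no decision is available and leaves the continuous-time discounted reward stream unchanged, so the value functions at departure states coincide ($V_n$ equals the SMDP value $V(n)$) and an optimal SMDP policy can be taken to agree with the MDP's maximizing action at every state $(n)$, thereby inheriting the threshold. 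Without some version of this equivalence argument, what you prove is Theorem~\ref{prop:1}, not the Corollary.

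Within the MDP part itself, your machinery is essentially the paper's proof of Theorem~\ref{prop:1}: a class of functions with structural properties, closure of that class under $\A_a$, $\A_b$, $\T$, and a contraction/fixed-point argument so that the value function inherits the properties, with the monotone advantage giving a single switch. However, you explicitly defer the hardest step---closure under the maximization at the switch state, where the pointwise maximum of two concave sequences can fail to be concave---calling it the ``delicate heart'' and only describing what one would verify. The paper closes exactly this induction by enlarging the invariant set $\calS$ with a \emph{fourth} property, a uniform bound $K$ on the increments; this bound is used twice in the proof of Lemma~\ref{mainlemma1} to control the action-asymmetric cross terms (those multiplying $\lambda(\delta_b-\delta_a)$), and the admissible constant $K$ is only pinned down through the empty-buffer boundary~\eqref{eq:pth14}. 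Your three-property invariant (monotone, concave, monotone advantage) is not obviously closed under the operators, and no computation at the crossover is supplied, so the crux of the argument remains unproven. A minor additional caution: your claim that $p_a<p_b$ and $\mu_a<\mu_b$ ``force'' $c_a>c_b$ holds under the normalization $c_u=(1-p_u)\beta_u$ of~\eqref{eq:lp113}, but not if $c_u$ is the expected discounted reward $(1-p_u)\mu_u\beta_u$; fortunately the sign of $c_a-c_b$ is not what drives the single-crossing argument.
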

\else
By the equivalence of the value functions at departures the proposition holds both for MDP and SMDP. 
\fi
\iflong
%We will use the following definition. %of concavity of function \sdo{U_n}  and, in addition, we define inter-concavity of 
%\begin{definition}[Inter-concavity] \label{def:interconcavity}
%$U_n$  is inter-concave with respect to $Y_n$ if
%\rals{-1}{
%%& U_{n}-U_{n-1}\geq U_{n+1}-U_{n}\text{  concavity} \\
%& U_{n}-U_{n-1}\geq Y_{n+1}-Y_{n}\text{  inter-concavity}}
%\end{definition}
%We say that $U_{n}$ is \textit{concave} if it is inter-concave with respect to itself.

To this end, let $\calS$ be a set where each of its  elements is a five-tuple of \sdo{\BS}-dimensional vectors denoted by \rdo{\{U,U^{b,A},U^{a,A},U^{b,D},U^{a,D}\}} satisfying the following properties
%  \sdo{U,U_b,U_a} are  \sdo{\BS}-dimensional vectors. % $U(n)$, $n\in\{0,\cdots,\BS\}$ and one
% \sdo{(U^{(direct)}_{dep},U^{(relay)}_{dep})} is  a \sdo{2\BS}-dimensional vector,  and the four-tuples  
%\begin{enumerate}\label{enum1}
 %   \item 
    
    1) the difference \rdo{U^{a,D}_{n}-U^{b,D}_{n}} is non-decreasing in $n$, \rdo{n\in\{0,\cdots,\BS\}} 
	
	2)  \rdo{\{U^{b,D},U^{a,D},U^{b,A},U^{a,A}\}} are concave in \rdo{n\in\{1,\cdots,\BS\}},
	
	3) \rdo{\{U,U^{b,A},U^{a,A}\}} are  non-decreasing  in \rdo{n\in\{0,\cdots,\BS\}},
	%\item \rdo{U^{a,A}} is inter-concave with respect to $U^{b,A}$, and \rdo{U^{r,d}} is inter-concave with respect to $U^{d,d}$.

	4) \rdo{\{U,U^{b,A},U^{a,A}\}} have their slope bounded by some positive constant \rdo{K}, that is, \rdo{U_{n}-U_{n-1}<K}, \rdo{U^{a,A}_{n}-U^{a,A}_{n-1}<K} and \rdo{U^{b,A}_{n}-U^{b,A}_{n-1}<K}.
%\end{enumerate}
For the proof of the theorem we will need the following lemma.
\begin{lemma} \label{mainlemma1}
	% \shifrinA{this formulation is weaker than we prove. we do not need this condition. lets change it back.}
	The operators $\A_b$,$\A_a$,$\T$ preserve properties 1)-4).
	%\reqs{
	%(\T V_{dep}, \A_b V^{b,A}, \A_a V^{a,A}, (\T_{d} V_{dep}, \T_{r} V_{dep})) \in \calS
	%}
\end{lemma}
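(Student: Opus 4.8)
The plan is to read the lemma as the invariant-set step of the value-iteration argument behind Theorem~\ref{prop:1}: I will show that each of $\A_a,\A_b,\T$ maps the set $\calS$ of tuples satisfying 1)--4) into itself. Since $\A_u$ only rewrites the component $U^{u,A}$ (leaving $U$, $U^{a,D}$, $U^{b,D}$, and $U^{v,A}$ for $v\neq u$ untouched) while $\T$ rewrites $U^{a,D},U^{b,D}$ through $\T_a,\T_b$ and then sets $U=\max(U^{a,D},U^{b,D})$, it suffices to verify, operator by operator, only the properties of the components that operator actually alters; the rest pass through unchanged. In particular property 1) is trivially preserved by $\A_u$, since $\A_u$ does not touch the $D$-functions. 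Throughout I will use the parameter facts $(\mu_u+\lambda)\delta_u<1$, $\delta_a>\delta_b$, $\mu_a<\mu_b$, and the ordering of $c_a,c_b$, and I will dispose of the boundary indices $n=0$ and $n=B$ via the special equations~\eqref{eq:pth14}--\eqref{eq:bellman2} and the $V^{u,A}_B$ equation as separate base cases.

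For $\A_u$ the slope and monotonicity properties are the easy part. Taking first differences in~\eqref{eq:lp104} gives $\Delta(\A_uU^{u,A})_n=\mu_u\delta_u\,\Delta U_n+\lambda\delta_u\,\Delta U^{u,A}_{n+1}$, a nonnegative combination of input slopes, so monotonicity (property 3) is inherited; and because $\mu_u\delta_u+\lambda\delta_u=(\mu_u+\lambda)\delta_u<1$, the slope is bounded by $(\mu_u+\lambda)\delta_u K<K$, reproducing property 4 with the same constant. For concavity (property 2) I would iterate the second-difference identity $\Delta^2(\A_uU^{u,A})_n=\mu_u\delta_u\,\Delta^2U_n+\lambda\delta_u\,\Delta^2U^{u,A}_{n+1}$ backward from $n=B$: the term $\Delta^2U^{u,A}_{n+1}$ is $\le 0$ by the inductive hypothesis, but $\Delta^2U_n$ need not be, which is exactly where the difficulty sits.

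For $\T$ the same second-difference identity applied to~\eqref{eq:lp113} gives $\Delta^2U^{u,D}_n=\mu_u\delta_u\,\Delta^2U_{n-1}+\lambda\delta_u\,\Delta^2U^{u,A}_n$, yielding concavity of $U^{u,D}$ subject to the same caveat about $\Delta^2U$. The genuinely new property to produce is 1): I would expand $\Delta(U^{a,D}_n-U^{b,D}_n)$ from~\eqref{eq:lp113} into $(\mu_a\delta_a-\mu_b\delta_b)\,\Delta U_{n-1}+\lambda(\delta_a\,\Delta U^{a,A}_n-\delta_b\,\Delta U^{b,A}_n)$ and show it has the sign required by property 1), using the concavity and monotonicity of the $U^{u,A}$ and of $U$ together with the parameter orderings. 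Finally $U=\max(U^{a,D},U^{b,D})$ inherits monotonicity and the slope bound, because a pointwise maximum of two non-decreasing, $K$-slope-bounded functions is again non-decreasing with slope in $[0,K)$; crucially I do \emph{not} claim $U$ concave, consistent with property 2) listing only the four functions $U^{a,A},U^{b,A},U^{a,D},U^{b,D}$.

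The main obstacle is precisely the propagation of concavity through the maximum: $U=\max(U^{a,D},U^{b,D})$ generically has a \emph{convex} kink where the two concave functions cross, so $\Delta^2U_n>0$ can occur there and a term-by-term argument for the concavity of $U^{u,A}$ and $U^{u,D}$ fails. My resolution would exploit property 1) on the input: since $U^{a,D}-U^{b,D}$ is monotone, the two functions cross at most once, so $\Delta^2U_n\le 0$ at every index except possibly a single crossover index $m$, and there the identity $\Delta^2U_m=(U^{a,D}_{m+1}-U^{b,D}_{m+1})+\Delta^2U^{b,D}_m$ shows the offending curvature equals the first nonnegative value of the difference plus a nonpositive concavity term, hence is bounded through the slope control of property 4). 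I would then show, by strengthening the induction hypothesis so as to track the curvature of $U^{u,A}$ near the crossover, that this single bounded positive contribution is dominated by the geometrically weighted negative curvature accumulated from neighboring indices in the second-difference recursions for $\A_u$ and $\T_u$. Carrying this single-crossover, bounded-curvature estimate through both recursions, plus the separate boundary checks, is the crux; everything else is a routine sign verification. Since properties 1)--3) are defined by non-strict inequalities, they survive the pointwise limits of value iteration, so this lemma is exactly what is needed to place the fixed point in $\calS$ and thereby obtain the threshold.
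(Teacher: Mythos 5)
Your skeleton matches the paper's: establish that $\calS$ is nonempty, verify properties operator by operator, and pass to the fixed point by contraction. Your treatment of monotonicity and the slope bound via the contraction factor $(\mu_u+\lambda)\delta_u<1$ is exactly the paper's interior argument (though the paper does real additional work at the empty-buffer states, where equations~\eqref{eq:pth14}--\eqref{eq:bellman2} produce extra terms $K_1,K_2$ and force the choice $K=\max\left\{\frac{K_1}{(\mu_b+\gamma)\delta_b},\frac{K_2}{(\mu_a+\gamma)\delta_a}\right\}$, not ``the same constant''). You also correctly isolate the crux: concavity must survive the maximization $U=\max(U^{a,D},U^{b,D})$, and a convex kink at the crossover would contaminate the linear recursions defining $\A_u$ and $\T_u$.

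Your resolution of that crux is where the proposal breaks, and this is a genuine gap. You propose to tolerate a single positive second difference $\Delta^2U_m>0$ at the crossover and argue it is ``dominated by the geometrically weighted negative curvature accumulated from neighboring indices.'' But $\calS$ contains members whose components are affine in $n$, i.e.\ with second differences identically zero, so properties 1)--4) supply no strictly negative curvature anywhere to absorb a strictly positive kink; and since property 2) is an exact pointwise inequality rather than an aggregate or discounted bound, one uncompensated positive second difference already ejects the iterate from $\calS$. The ``strengthened induction hypothesis tracking curvature near the crossover'' that you defer is precisely the missing lemma, and it is not obtainable from the stated invariant. The paper avoids the problem by proving the kink never occurs: setting $D_n=U^{b,D}_n-c_b$ and $R_n=U^{a,D}_n-c_a$, it combines concavity of $D,R$, the monotone difference (property 1), and the tie $D_0=R_0$ from~\eqref{eq:pth14} to obtain the auxiliary ``diagonal'' fact that $D_{n+1}-R_n$ is decreasing, and then verifies the concavity inequality for $\T U$ directly at the switch index via the chain $D_n-R_{n-1}+c_b-c_a\ge D_{n+1}-R_n+c_b-c_a\ge D_{n+1}-D_n$ and its mirror on the other side of the threshold. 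Hence the pointwise max is genuinely concave across the single crossing, concavity of $U$ is de facto part of the invariant, and that is exactly why the paper's concavity step for $\A_u$ may legitimately invoke concavity of $U_n$ --- the hypothesis your componentwise reading deliberately dropped. Relatedly, your property-1 step is also lighter than what is needed: the paper's argument there is a case analysis over which action attains $\T U_{n-1}$, and it consumes the slope bound of property 4) together with the identity $\mu_b\delta_b-\mu_a\delta_a+\lambda\delta_b-\lambda\delta_a=\gamma\delta_a\delta_b(\mu_b-\mu_a)>0$; concavity, monotonicity and ``parameter orderings'' alone do not close it.
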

\ifArx
The proof of the lemma appears in Appendix~\ref{app:pr_lem}.
\else
The proof of the lemma appears in the on-line version of the paper~\cite{on-line}.
\fi
\else

Let $\calS$ denote the set of functions from \rdo{n\in\{1,\cdots,\BS\}} to $U^{r,d}_{n}$ and $U^{d,d}_{n}$ such that 

%\small
\req{-5}{U^{r,d}_{n}-U^{d,d}_{n}\text{ is increasing in $n$}.}{eq:propshort}
\normalsize
For the proof of the theorem we will need the following lemma.
\begin{lemma} \label{mainlemma}
	% \shifrinA{this formulation is weaker than we prove. we do not need this condition. lets change it back.}
	The operators $\A_b$, $\A_a$, $\T$ preserve property~\eqref{eq:propshort}.
	%\reqs{
	%(\T V_{dep}, \A_b V^{b,A}, \A_a V^{a,A}, (\T_{d} V_{dep}, \T_{r} V_{dep})) \in \calS
	%}
\end{lemma}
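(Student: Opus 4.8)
The plan is to prove the lemma by a structural induction that will later be combined with the convergence of value iteration: since $\A_a$, $\A_b$ and $\T$ are contractions with common modulus $(\mu_u+\lambda)\delta_u<1$, showing that $\calS$ is invariant under each of them forces the unique fixed point (the optimal value function) to lie in $\calS$, and property~1) then yields the single crossover of $U^{a,D}-U^{b,D}$ that gives the threshold in Theorem~\ref{prop:1}. Because $\A_u$ only rewrites $U^{u,A}$ through \eqref{eq:lp104}, $\T_u$ only rewrites $U^{u,D}$ through \eqref{eq:lp113}, and $\T$ only rewrites $U$, I would verify each operator against precisely those of the four properties that constrain the component it touches, reusing the induction hypothesis on the untouched components. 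Mechanically every update is a convex combination of the form $\mu_u\delta_u(\cdot)+\lambda\delta_u(\cdot)$, possibly with a unit shift in the queue index and an additive constant $(1-p_u)\beta_u$, together with the boundary corrections \eqref{eq:pth14}--\eqref{eq:bellman2} at $n=0$ and the reflecting term at $n=B$; so the task is to check that monotonicity, the uniform slope bound $K$, concavity, and the monotone-difference property each survive such combinations.

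I would dispatch the monotonicity and slope-bound claims first, as they are the routine ones. For $\T$ we have $U_n=\max(U^{a,D}_n,U^{b,D}_n)$, and the maximum of two non-decreasing, $K$-Lipschitz sequences is again non-decreasing and $K$-Lipschitz, so property~3) and property~4) pass through $\T$ verbatim while property~1), which constrains only $U^{a,D}-U^{b,D}$, is left intact. For $\A_u$ one unrolls \eqref{eq:lp104} into the geometric average $U^{u,A}_n=\mu_u\delta_u\sum_{m\ge0}(\lambda\delta_u)^{m}U_{n+m}$ (truncated by the reflecting boundary at $n=B$); monotonicity and the slope bound of $U^{u,A}$ then follow term-by-term from those of $U$, using $\mu_u\delta_u\sum_{m\ge0}(\lambda\delta_u)^m=\mu_u/(\mu_u+\gamma)<1$ to keep the slope strictly below $K$ with room to spare. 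The constant $K$ must be fixed once and for all (e.g.\ as a bound tied to the per-transmission reward $c_u$) so that these inequalities are uniform across the iteration.

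The substantive work is concavity, property~2), and the monotone-difference, property~1). For property~1) I would write $U^{a,D}_n-U^{b,D}_n$ explicitly from \eqref{eq:lp113}, obtaining $\mu_a\delta_a U_{n-1}-\mu_b\delta_b U_{n-1}+\lambda(\delta_a U^{a,A}_n-\delta_b U^{b,A}_n)+[(1-p_a)\beta_a-(1-p_b)\beta_b]$, and show that its first difference in $n$ is nonnegative; here the additive constant drops out, the $U_{n-1}$ terms contribute a scaled increment of $U$, and the arrival terms contribute scaled increments of the $U^{u,A}$, so the sign is controlled by the induction hypotheses (monotonicity and concavity of the inputs) together with the orderings $\mu_a>\mu_b$, $p_a<p_b$ and the induced ordering of the $\delta_u,\beta_u$. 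The boundary equations at $n=0$ and $n=B$ are checked separately by direct substitution.

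The main obstacle is propagating concavity through the maximum. Because $U=\max(U^{a,D},U^{b,D})$ and property~1) forces a single crossover, $U$ can carry one convex kink at the threshold, so the geometric average defining $U^{u,A}$ (and the shifted average defining $U^{u,D}$) receives one positive second difference that must be shown not to destroy concavity. I expect this to be the delicate point of the whole argument: the isolated positive second difference of $U$ at the crossover is of bounded size---controlled by the slope bound $K$ of property~4) and by the monotone-difference property~1), which limits how fast the two branches separate---and it enters the weighted sum with geometric weight $(\lambda\delta_u)^{m}<1$, so one must argue that it is dominated by the neighbouring strictly negative second differences coming from the concave branches. Making this domination rigorous, and in particular pinning down $K$ so that it is simultaneously consistent with the slope bound and with the concavity estimate at the kink, is where I would concentrate the effort; the remaining verifications (the $n=0,B$ boundaries and the invariance of each property under the components the operator leaves fixed) are routine once this estimate is in place.
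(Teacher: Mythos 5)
Your skeleton (invariance of the four-property set $\calS$ plus contraction, so that the fixed point lands in $\calS$) matches the paper, and you correctly classify monotonicity and the slope bound as the routine parts; but the two substantive steps both have genuine gaps. The central one is concavity of $\T U$ at the policy switch. You identify it as the crux, yet your proposed resolution --- the convex kink being dominated by neighbouring strictly negative second differences once it enters the geometric average --- is not a proof and fails in general: the concave branches may be affine near the crossover, leaving no negative curvature to dominate the kink. Worse, the smoothing mechanism you invoke is unavailable: the lemma concerns a \emph{single} application of each operator, and one application of $\A_u$ is just $\mu_u\delta_u U_n+\lambda\delta_u U^{u,A}_{n+1}$ by \eqref{eq:lp104}; your unrolled series $\mu_u\delta_u\sum_{m\ge 0}(\lambda\delta_u)^m U_{n+m}$ is the fixed point of $\A_u$ with $U$ frozen, not the operator itself. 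The paper handles the kink exactly rather than quantitatively: setting $D_n=U^{b,D}_n-c_b$ and $R_n=U^{a,D}_n-c_a$, it first derives (from concavity of both branches together with the monotone-difference property) that the cross-difference $D_{n+1}-R_n$ is monotone, and then verifies the concavity inequality at the switch state through the chain $D_n-R_{n-1}+c_b-c_a\ge D_{n+1}-R_n+c_b-c_a\ge D_{n+1}-D_n$, the last inequality using that $D_n+c_b\ge R_n+c_a$ above the threshold. The single-crossing structure and the cross-difference monotonicity give the inequality outright; no domination estimate is needed, and none is possible.

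Second, your verification of the monotone-difference property ignores the max hidden inside $\T$. After one application, the new $U^{u,D}_n$ contains $\T U_{n-1}=\max(\T_a U_{n-1},\T_b U_{n-1})$ multiplied by two \emph{different} coefficients $\mu_b\delta_b\neq\mu_a\delta_a$, so the first difference of the new $U^{b,D}_n-U^{a,D}_n$ cannot be read off as ``scaled increments of $U$ and $U^{u,A}$''. The paper is forced into a case analysis on which action attains $\T U_{n-1}$, and inside each case it needs both the slope bound with an explicitly constructed constant $K$ (built from the $n=0$ boundary equations \eqref{eq:pth14}--\eqref{eq:bellman2}, not merely a constant ``tied to the per-transmission reward'') and the identity $\mu_b\delta_b-\mu_a\delta_a+\lambda\delta_b-\lambda\delta_a=\gamma\delta_a\delta_b(\mu_b-\mu_a)>0$. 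Note finally that you state the rate ordering backwards: in the paper $R_a<R_b$, hence $\mu_a<\mu_b$ (the reliable path is the slow one); with your ordering $\mu_a>\mu_b$ the displayed quantity becomes negative and the sign bookkeeping in this step collapses.
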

\fi
% \shifrinA{is good to make a proposition! i will make some fixes to this after we finish the lemma.}
\vspace*{-5pt}\begin{proof}[Proof of Theorem~\ref{prop:1}]
	%From the lemma, we know that once the corresponding operators $\T$, $\A_b$ and $\A_a$ are applied,
	%the result stays in $\calS$. By fixed point theorem and the contraction properties of the operators,
	%the value function which uniquely solves the Bellman equations~\eqref{firstbellman}-\eqref{lastbellman} is also in $\calS$.
	We rely on a well known result that operators associated with Bellman equation are contracting~\cite{puterman1994markov}\iflong, that is, using the maximum metric \sdo{\parallel U\parallel=\max_x|U(x)|} it holds \sdo{a\parallel U-W\parallel<\parallel\T U-\T W\parallel} for some \sdo{0<a<1}. Hence, the operators defined above are contraction mappings\else. \fi
	%We use the contraction mapping principle (see e.g. in~\cite[Theorem V.18]{reed1980methods}), which states the existence of fixed point of strict contraction mapping in a complete metric space. The set \sdo{\calS}
	\iflong, equipped with the metric
	\rdo{\rho(U;W) = ||U - W||}   in a complete metric space. \fi %The mappin
	%T : S ! S is a strict contraction, as shown in the above
	%lemma.
	Since $\calS$ is a complete metric space and the operators are strict contractions, they have corresponding fixed points (e.g.~\cite[Theorem V.18]{reed1980methods}).
	%Since the operators \sdo{\A_a, \A_b,\T} are strict contractions, they have corresponding \textit{unique} fixed points. 
	Now since $\calS$ is not empty (one can easily construct such functions; the technical details are omitted), the functions which are in $\calS$ and have the operators \rdo{\A_a, \A_b,\T}  applied on them, by lemma~\ref{mainlemma1} stay in $\calS$. By contraction, the repetitious application brings the result infinitesimally close to the fixed points of \rdo{\A_a, \A_b,\T}.   %That is, there
	%exists a unique U 2 S for which TU = U. 
	Recall that the value functions \rdo{V_{n}, V^{b,A}_{n},  V^{a,A}_{n}}  are
	the unique solution of \textit{all} functions, including those that in $\calS$, acting from \sdo{n\in\{0,\cdots,B\}} to $\R$, to the \textit{same} equations; (trivially, the mild conditions for uniqueness and existence, see e.g.~\cite[Chapter 6.2]{puterman1994markov}, apply). %of all
	%functions from f0; 1; : : : ;Bg to R. 
	As a result, \rdo{\{V, V^{b,A},  V^{a,A}, V^{b,D},  V^{a,D}\}} coincide with these fixed points and %by not-emptiness of $\calS$ and the uniqueness of the solution to the Bellman equation 
	they are in $\calS$. 
	\iflong
	In particular, 
	$V^{b,D}$ and $V^{a,D}$ possess property 
	4), which is equivalent of having at most one policy switch state. This proves the proposition.
	\else
	In particular, 
	$V^{b,D}$ and $V^{a,D}$ possess property~\eqref{eq:propshort}, which is equivalent of having at most one policy switch state. This proves the proposition.
	\fi
\end{proof}
\vspace{-0.2in}
\section{{Numerical results}} \label{sec:numer}
In this section, we report numerical results on the shape of the value functions obtained through value iteration.  Our goals are to 1) illustrate how different system parameters impact the performance of  threshold policies and 2) numerically investigate the optimality of multi-threshold  optimal policies for the Gilbert-Elliott channel.

The parameters chosen in the numerical experiments that follow in this section were 
 % simulations and below were not inspired by a concrete physical setting, but  were
  selected for illustrative purposes, and are set according to the experimental goals.
  
  % in order to exemplify the mathematical properties of the model.  
  
\subsection{{Value functions}} 

In Figure~\ref{fig3}, we compare the  value functions and threshold policies for channels associated with exponential, deterministic and uniform transmission times. The mean transmission rates were set to $\mu_a=9$ and $\mu_b=12$, under channels $a$ and $b$, respectively.   The support of the uniformly distributed transmission times was set to  $[\alpha, \beta]$, where $\alpha=0.2/{\mu_u},\beta=1.8/{\mu_u}$ and $u \in \{a,b\}$. We considered both a low load ($\lambda=\mu_a=9$) and a high load ($\lambda=\mu_b=12$) regime. Vertical lines show the thresholds where the policy determines a switch from $a$ to $b$.

\iflong\else\vspace*{0pt}\fi\begin{figure}[h!]
	\begin{center}
	\iflong \vspace{-10pt} \else\vspace*{-15pt}\fi
		\includegraphics[angle=0, width=0.4 \textwidth]{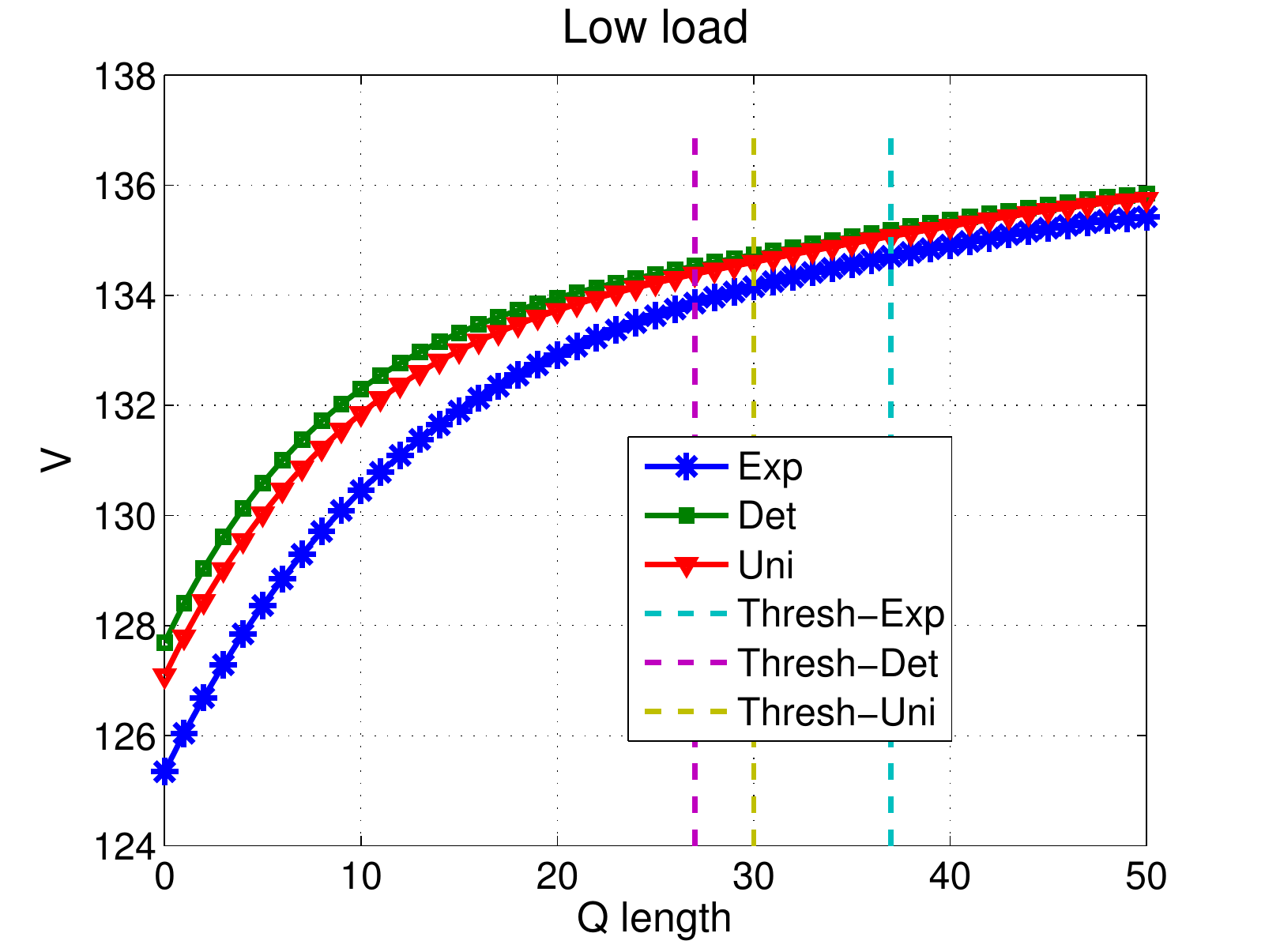} \\ 
		\includegraphics[angle=0, width=0.4 \textwidth]{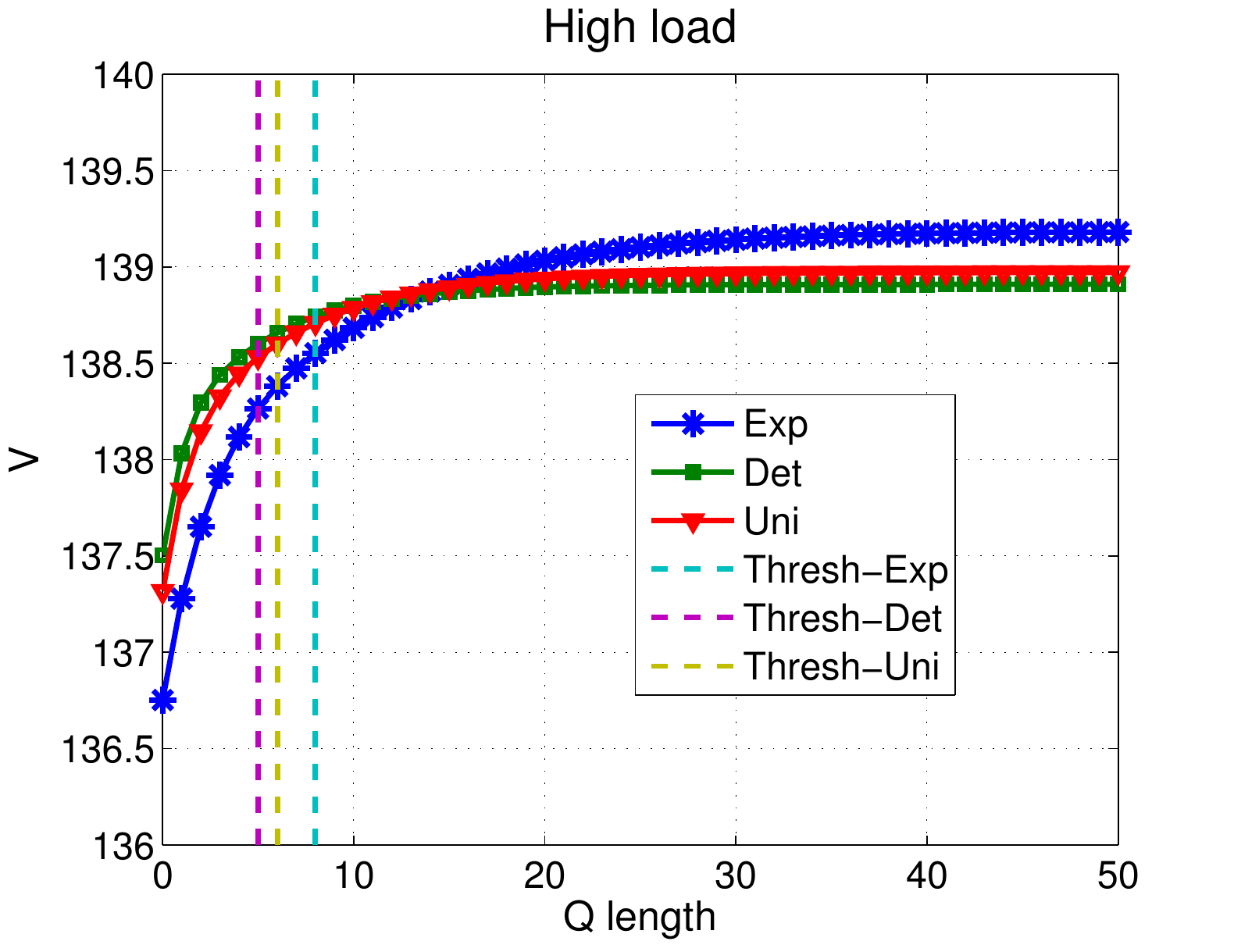}
		%\hfill
		\iflong \vspace{-10pt} \else\vspace*{-25pt}\fi
		\caption{\footnotesize{Value functions, $V(\cdot)$, at different buffer states,  $n$, 
		 when channel states are sampled from an i.i.d. model. 
		Average transmission rates and packet losses are given by $\mu_a=9, p_a=0.25$ and $\mu_b=12, p_b=0.42$, under  %For uniform $u_{u}=0.2\frac{1}{\mu_u},d_{u}=1.8\frac{1}{\mu_u}$. 
		high load ($\lambda=13$) and low load ($\lambda=9$).}\label{fig3} 
		% Packet loss values were $p_{a}=0.25,p_{b}=0.42$}
		% \vspace*{-15pt}
		}
	\end{center}
\end{figure}

Observe that under  high load the thresholds are significantly smaller than under low load. This is because  in the latter case it is important to avoid buffer underflows, which cause a reduction in system throughput.    
% Note that rejected packets also imply a mere unaccomplished reward, as long as no rejection fines are incurred. Hence, the rejections have no impact on the accumulating reward. Consequently, the values are at high load are higher.
For the high load, see that at the states close to $B$ the value function becomes nearly constant. This may be explained by the fact that in all these states the average time until the buffer  empties  is large. %Hence, at these states the penalty due to a potential non-realized  future discounted reward is negligible.
% which will be lost is  negligible.   

Also note that the numerical results validate our formal results on the concavity of the value function for the exponential case. In addition, the value function for the two other cases are observed to have a concave form, an observation which is interesting on its own.

%Value functions for Gilbert-Elliot channels are presented in figure~\ref{fig4}.
\iflong\else\vspace*{0pt}\fi\begin{figure}[h!]
		\vspace{-0.1in}
	\begin{center}
	\iflong\else\vspace*{-15pt}\fi
		\includegraphics[angle=0, width=0.4 \textwidth]{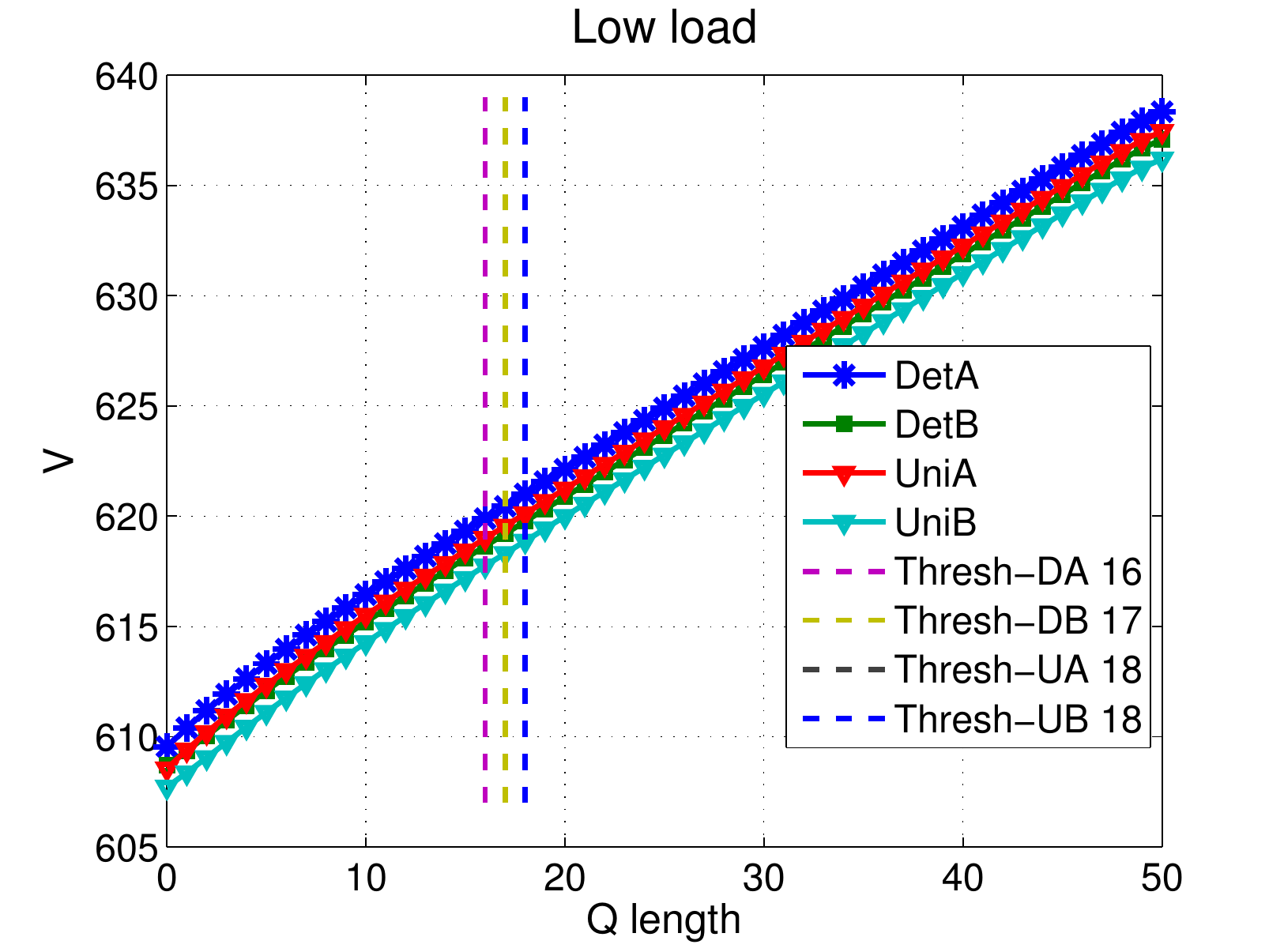} \\
		\includegraphics[angle=0, width=0.4 \textwidth]{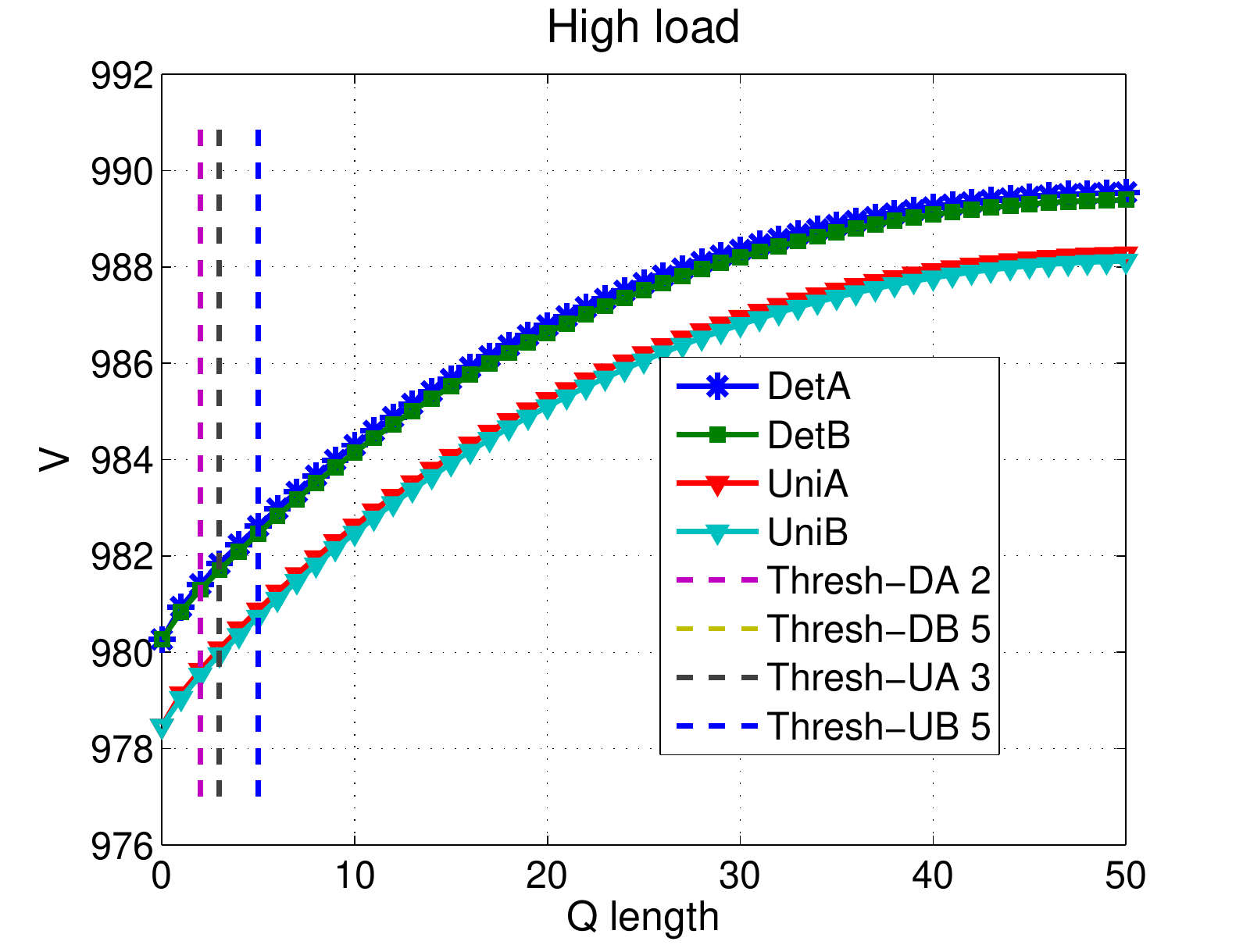}
		\vspace{-0.1in}
		%\hfill
				\iflong \else\vspace*{-25pt}\fi
	\caption{\footnotesize{ Value functions, $V(\cdot)$, at different buffer states, $n$, when channel is characterized by a Gilbert-Elliott model. Parameters are given by $\mu_a=10, p_{a,\mathcal{A}}=0.2,p_{a,\mathcal{B}}=0.35$ and $\mu_b=15, p_{b,\mathcal{A}}=0.3,  p_{b,\mathcal{B}}=0.4$, under high load ($\lambda=15$) and low load ($\lambda=10$). The optimal  threshold values  are marked using vertical lines. } }\label{fig4}
		 %\vspace*{-10pt}}
	\end{center}
\end{figure}

%%% TODO we should use \mathcal{G} and \mathcal{B} rather than A and B here

Figure~\ref{fig4}  illustrates  the multi-threshold policies obtained when solving the SMDP model under  deterministic and uniform transmission times with Gilbert-Elliott channels. 
 %We consider the same support for both channels. 
 %The results are seen in Figure~\ref{fig4}.
  The two channel states are denoted as $\mathcal{A}$ and $\mathcal{B}$. Each transmission time distribution  corresponds to \textit{two value functions}, for channels at states $\mathcal{A}$ and $\mathcal{B}$. Hence, each value function implies its own threshold. Observe that the value functions for $\mathcal{A}$ and $\mathcal{B}$ are very close to each other. Nonetheless,  the thresholds can be easily distinguished.  Under the Gilbert-Elliott channel model, for all the scenarios considered we were always able to find a separate threshold for each channel type.
  %This observation stems from the entire set packet loss values and probably from other parameters. 
%  The threshold structure is observed in all simulated distributions, even though the proof only covers the exponential case. See that 
 % 
  While a rigorous analysis of the multi-threshold policy is subject for a future work, the numerical analysis presented here can be used to devise heuristics to be concurrently applied with value iteration, aiming towards a faster convergence.

\subsection{Impact of thresholds}

Next, we illustrate the impact of the thresholds on the throughput.   Our goals are to assess, 1) for a given
service distribution, how
the throughput varies as a function of the threshold, and 2) how the service distribution impacts the optimal
threshold. To this aim, we consider exponential, deterministic and uniform service distributions.

Policy evaluation was performed using three different techniques under the exponential, deterministic and uniform service time distributions.  
Under exponentially distributed service times,  given a fixed threshold policy the resulting system dynamics is governed by a  continuous-time Markov chain (CTMC). Impulse rewards are accumulated at service completions, and system throughput 
is given by the  expected 
 impulse reward   of the resulting CTMC. 
For the deterministic case, we rely on a special class of  solution methods  to efficiently compute metrics of interest for models wherein  all events are exponentially distributed except for a single deterministic one~\cite{edmundo}.   For uniform service times, we make use of~\cite{ballarini2013transient}.  
The results for the exponential and deterministic service times were obtained using the Tangram~II tool~\cite{e2000tangram}, whereas the uniform case was evaluated using  the Oris tool~\cite{ballarini2013transient}. More details are found in Appendix~\ref{appthr}.

In Figure~\ref{thr} we let $B=10$ and allow the threshold to vary between 0 and 9.  
We let $\lambda=17$, $ C=1$, $p_b=0.42$, $p_a=0.25$, $\mu_b=13$ and $\mu_a=10$.
 Note that 
as the threshold increases
the throughput
first increases and then decreases.  A threshold of 0 (resp., 9) consists of always transmitting 
through  the less reliable channel, i.e., channel ``b'' (resp., through the most reliable channel, i.e.,  channel ``a''). The optimal threshold equals 3, 4 and 6 for deterministic, 
uniform and exponential service distributions.  As the variability in the service time increases,
 the optimal strategy privileges transmissions through the most reliable  channel.

\begin{figure}[h!]
		\includegraphics[angle=0, width=0.50 \textwidth]{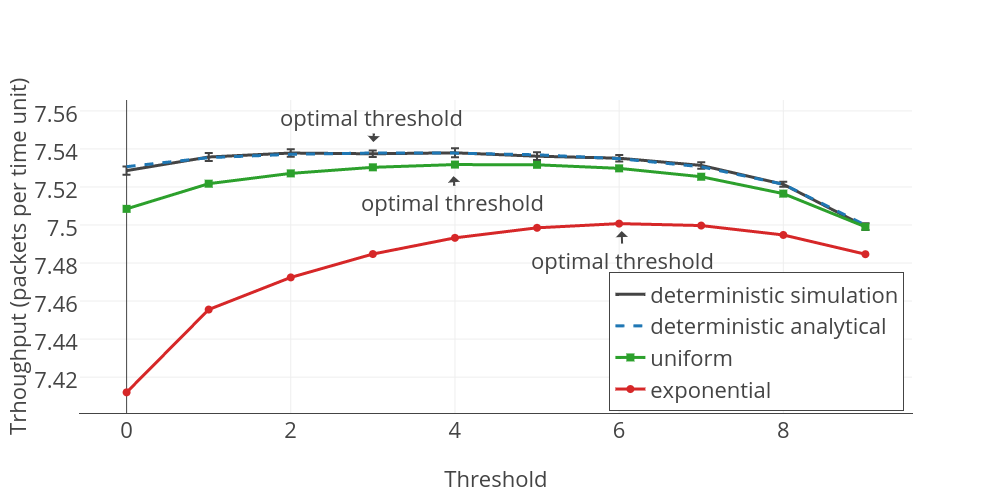}%
	\caption{Throughput as a function of threshold $(B=10)$.  
	As  the variability in the service time distribution increases, the  optimal threshold increases.}\label{thr}
\end{figure}

\begin{figure}[h!]
		\includegraphics[angle=0, width=0.50 \textwidth]{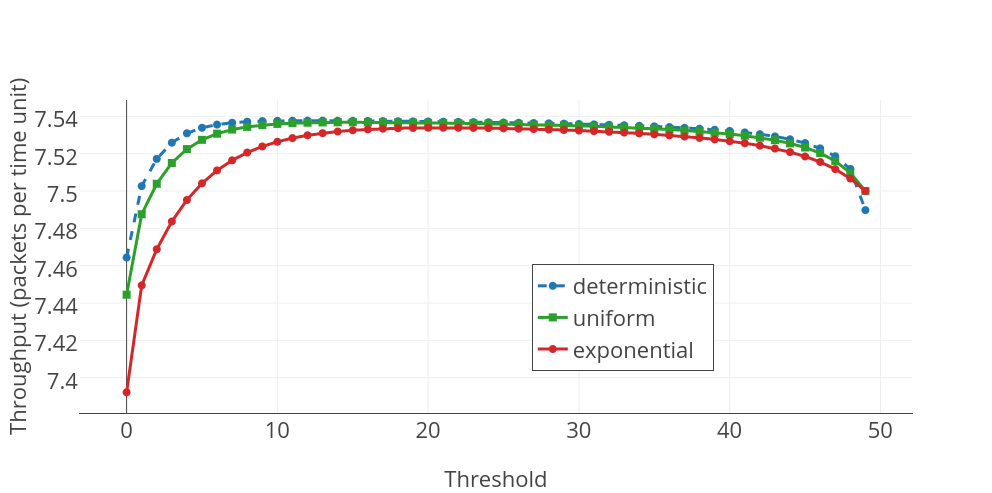}%		
	\caption{Throughput as a function of threshold  $(B=50)$.  As the threshold increases, the throughput first increases and then decreases. }\label{thr50c}
\end{figure}

Next, we let $B=50$ and $\lambda=13$, and keep the other parameters unchanged (Figure~\ref{thr50c}).
Similar observations 
as made in the previous paragraph apply.  In all the considered cases, there is a unique optimal threshold.
  The optimal threshold equals 12, 15 and 21 for  deterministic, 
uniform and exponential service distributions (see Figure~\ref{thr50a}).

As a sanity check, we also ran simulations for   deterministic service times, using the Tangram II tool~\cite{e2000tangram}.  
For each set of parameters, we ran 30  simulation runs. Each run lasted for 100.000  time units. We confirmed 
that the results obtained through simulations and analytically are in conformance.  The  95\% 
confidence intervals in
 Figures~\ref{thr50b} 
and~\ref{thr}, obtained through simulations, are in agreement with the analytical results, reported 
as dotted lines.

\begin{figure}[h!]
		\vspace{-0.1in}
		\includegraphics[angle=0, width=0.50 \textwidth]{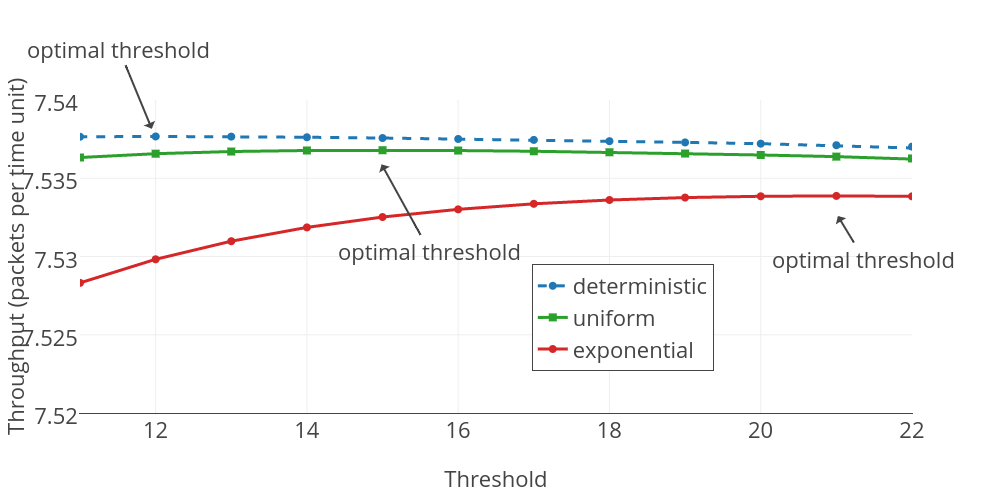}%		
	\caption{Throughput as a function of threshold  $(B=50)$ (zoom of Figure~\ref{thr50c}). As  the variability in the service time distribution increases, the  optimal threshold increases.}\label{thr50a}
\end{figure}

\begin{figure}[h!]
		\vspace{-0.1in}
		\includegraphics[angle=0, width=0.50 \textwidth]{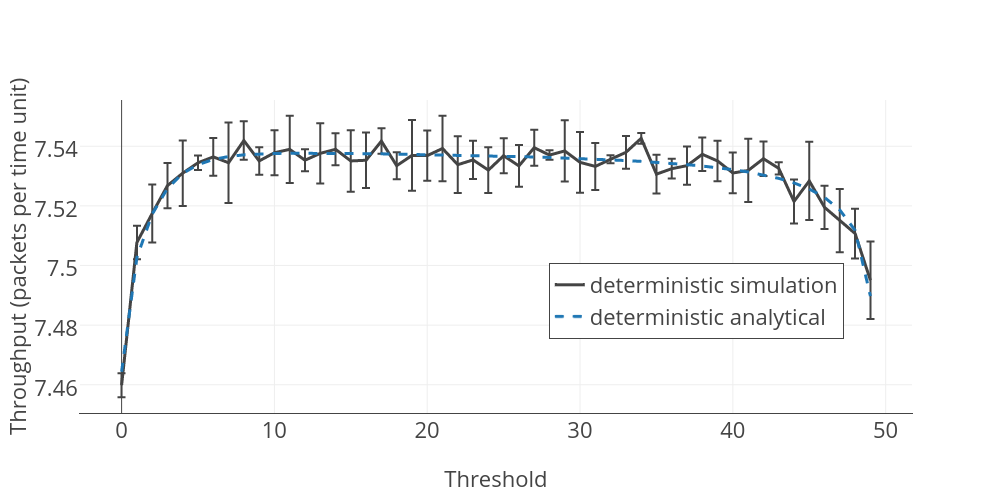}%		
	\caption{Throughput as a function of threshold  $(B=50)$. Results obtained analytically and through simulations are in agreement.}\label{thr50b}
\end{figure}

As indicated above, the numerical examples suggest that the optimal threshold increases with respect
to  the variability in the service time distribution.  Verifying such a conjecture is subject for  future work.
Note also that in the illustrative results reported in this  section the  throughput 
values vary between 7.4 and 7.6.  % in a short range.
  Although in the presented examples the range of throughput values is short, in general it  can be arbitrarily large, further motivating the search for the optimal threshold 
  value.

% Average transmission rates and packet losses are given by

\vspace{-0.05in}

\section{Conclusion}

We have proposed an SMDP model for optimal PHY configuration,   derived equations for the value function  for several interesting cases, and formally    shown structural properties of the optimal policy  when transmission times are exponentially distributed.  In particular, we have shown the existence of optimal policies of threshold type. The numerical solution of the proposed model indicates the existence of multi-threshold policies for Gilbert-Elliott channels.  Showing the optimality of the latter under  general settings is an interesting open problem.     

\vspace{-0.05in}

\ifnconf
\appendix
%\chapter{A}
\subsection{Bellman equation for the exponential case}\label{app:pBE}

Next, our goal is to derive the Bellman equations~\eqref{eq:lp104}-\eqref{eq:bellman2} from the process model.  We consider exponentially distributed service times.

Denote by $u_n=0$ the decision to send through channel ``b" and $u_n=1$ through channel ``a".
\begin{comment}
$\tilde{D}_b$ and $\tilde{D}_a$ count successful transmissions on direct and relay channels.
Note that $d\tilde{D}_b(s)$ (resp., $d\tilde{D}_a(s)$) equal to $1$ if  a successful packet reception occurs 
at time $s$, through the direct channel (resp., relay), and zero otherwise.  
\end{comment}
We assume a stationary policy. % $u_n$ is stationary.

%Denote the rewards  received as the transmission starts as $c_a$ or $c_b$.
%See that $R(t)=0$ implies $dD_a(s)=0$.
Note that the reward ($c_a$ or $c_b$) is immediately received at $V_n(t)$.
Take infinitesimal $\theta$, such that probability that more than one event occur  in
 $[0,\theta]$ goes to zero. 
\begin{comment} 
  Let $dD_a(s)$ and $dD_b(s)$ 
be the  time-shifted processes corresponding to $d\tilde{D}_a(s)$ and $d\tilde{D}_b(s)$, respectively.
Let $\tau_a$ and $\tau_b$  be random variables characterizing the time it takes for a transmission through the relay and directly, respectively. 
\end{comment}
 
Use dynamic programming principle to write 
\begin{equation}
 V_n(t)=\int_t^{t+\theta} e^{-\gamma s} r(s)dR(s) +e^{-\gamma(\theta+t)}V_n(t+\theta), 
\end{equation}
where $r(s)$ denotes reward at time instant $s$, $R(s)$ is Poisson counting process which mean will be specified below.  
Due to the stationary property we will assume the initial time $t=0$, and will omit the time mark where it is clear. 
To assume the immediate reward we take $\theta$ infinitesimal. Denote $\int_0^\theta e^{-\gamma s}r(s)dR(s)=r_\theta$.

Then, write
\begin{align*}
 & V_n= r_\theta+ \\
 & \qquad +e^{-\gamma\theta}\Big[\theta\lambda(1-u_n)V^{b,A}_n+\theta\lambda u_nV^{a,A}_n\\
&\qquad \theta \mu_au_nV_{n-1}+\theta\mu_b(1-u_n)V_{n-1}+\\
& \qquad V_n(1-\theta\lambda(1-u_n)-\theta\lambda u_n-\theta\mu_au_n-\theta\mu_b(1-u_n))\Big]
\end{align*}
%where the first term stands for the  product of reward the average transmission rate, the routing decision and the channel state at transmission time. 
The term in brackets sums up all possible outcomes for the value function at time $\theta$ weighted by the corresponding probabilities. (e.g. arrivals happen w.p. $\lambda\theta$).
Now, since $\theta$ is small, substitute $e^{-\gamma\theta}\approx1-\gamma\theta$, % multiply both sides by $\gamma$
\begin{align*}
& V_n= r_\theta+(1-\gamma\theta)[\theta\lambda(1-u_n)V^{b,A}_n+\theta\lambda u_nV^{a,A}_n+\\
&\qquad \theta\mu_au_n V_{n-1}+\theta\mu_b(1-u_n)V_{n-1}+\\
& \qquad V_n(1-\theta(1-u_n+u_n)\lambda-\theta\mu_au_n-\theta\mu_b(1-u_n))]
\end{align*}
See that $\theta^2\to0$
\begin{align*}
& V_n= r_\theta+[\theta\lambda V_{n+1}+\\
&\qquad \theta\mu_au_n V_{n-1}+\theta\mu_b(1-u_n)V_{n-1}+\\
& V_n(1-\theta\lambda-\theta\mu_au_n-\theta\mu_b(1-u_n))-\gamma\theta V_n]
\end{align*}
Denote $\delta^{-1}=\lambda+\mu_bu+\mu_a(1-u_n)+\gamma$. See that $V_n$ cancel on both sides. %and divide by $\gamma\theta$
\begin{align*}
& 0= r_\theta+[\lambda(1-u_n) V^{b,A}_n+\lambda(u_n) V^{a,A}_n+\\
&\qquad \mu_au_n V_{n-1}+\mu_b(1-u_n)V_{n-1}
-\delta^{-1} V_n]\theta
\end{align*}
Calculate now $r_\theta$.
The reward is received in all possible cases. That is, whenever the process does not change, arrival happens or transmission ends. Hence, at $V_n$ we have a reward approximately accumulated with Poisson process with mean rate $\lambda+\mu_au_n+\mu_b(1-u_n)+\gamma$, where the terms stand for summation of the rates of arrival, transmission on path $a$, transmission on path $b$ and the discounting rate.  
Hence, $dR=dt(\lambda+\mu_au_n+\mu_b(1-u_n)+\gamma)$.  Substitute %Multiply and divide $r_\theta$ by $dt$,
\begin{align*}
&\int_0^\theta e^{-\gamma t}r(s)dR(s)=\\
& =(\lambda+\mu_au_n+\mu_b(1-u_n)+\gamma)C\int^\theta_0e^{-\gamma t}dt \\ % + o(\theta) \\
&=(\lambda+\mu_au_n+\mu_b(1-u_n)+\gamma)C\frac{1-e^{\gamma\theta}}{\gamma}  \\ %+ o(\theta)  \\
& = \theta(\lambda+\mu_au_n+\mu_b(1-u_n)+\gamma)C % + o(\theta),
\end{align*}
where  $r(s)=C$ %+ o(\theta)$ 
for $s \in (0, \theta)$, while  $C=(1-u_n)c_b+u_nc_a$.  Note that 
%$c_b=(1-p_b)\frac{\mu_b}{\gamma+\mu_b}$,
\begin{equation*}
c_b=(1-p_b)\int_0^\infty e^{-\mu_b t} e^{-\gamma t} dt
=(1-p_b)\frac{\mu_b}{\gamma+\mu_b}
\end{equation*}
$c_b$ stands for average reward equal to $1-p_b$, discounted according to 
the transmission time. $c_a$ is equivalently calculated. 
Finally, substitute the reward
\begin{align}
& V_n\theta\delta^{-1}=\theta(\lambda+\mu_au_n+\mu_b(1-u_n)+\gamma)C+\\
&[\lambda(1-u_n) V^{b,A}_n+\lambda(u_n) V^{a,A}_n+ \nonumber \\
&\qquad \mu_au_n V_{n-1}+\mu_b(1-u_n)V_{n-1}
]\theta + o^2(\theta)
\end{align}
Since the routing decision is applied on $u_n$, the Bellman equation follows by the maximization over $u_n=\{0,1\}$. Substitute the $\max$ operator, divide by $\theta\delta^{-1}$, and let $\theta \rightarrow 0$,
\[
V_n=\max[c_a+\mu_a\delta_aV_{n-1}+\lambda\delta_aV^{a,A}_n,c_b+\mu_b\delta_bV_{n-1}+\lambda\delta_bV^{b,A}_n]
\]
\qed

\fi
%\appendix
%\chapter{B}
\ifnconf
%\section{Threshold policy for the exponential case}

\subsection{Proof of lemma~\ref{mainlemma1}}\label{app:pr_lem}
For simplicity we assume all packets are equally rewarded by $r=1$.
Denote $c_b=(1-p_b) \beta_b$ and $c_a=(1-p_a) \beta_a$.

%W will use the following lemma 

\begin{proof} %[Proof of lemma~\ref{mainlemma1}]

To show that operators $\A_b,\A_a,\T$ preserve properties 1)-4) construct first some \rdo{\{U, U^{b,A}, U^{a,A}\} }, prior to applying the operators on them, such that $U^{b,A}$, $U^{a,A}$ and $U$ are non-decreasing, concave and bounded. Moreover, the further construction of $U^{b,D}$ and $U^{a,D}$ using \rdo{\{U, U^{b,A}, U^{a,A}\}} and applying~\eqref{eq:lp113} on them, is such that $U^{b,D}_n-U^{a,D}_n$ is non-decreasing in $n$. Note that by non-decreasing property of $U^{b,A}$, $U^{a,A}$ and $U$ such a construction is straight-forward (we omit the detailed construction procedures). 
Also note that by concavity of $U^{b,A}_n$, $U^{a,A}_n$ and $U_n$, $U^{b,D}_n$ and $U^{a,D}_n$ are also concave in $n$, since  they are positive linear sum of concave functions.  
Hence, \rdo{\{U, U^{b,A}, U^{a,A},U^{b,D},U^{a,D}\}\in \calS }.
 %Due to the space limitation, we skip the proof for the boundary conditions.

%Observe from~\eqref{eq:lp113} and~\eqref{eq:lp104} that following relation holds
%\begin{align}
%& U^{{b,D}}_{n}=U^{{b,A}}_{n-1}+c_b   \label{eq:lp105} \\
%& U^{{a,D}}_{n}=U^{{a,A}}_{n-1}+c_a \label{eq:lp106}
%\end{align}
Denote  $D_n=\mu {\it _b}\delta {
\it _b}U_{{n}}+\lambda\delta_b{\it U^{{b,A}}}_{{n+1}
}=$ and $R_n=\mu {\it _a}\delta {
\it _a}U_{{n}}+\lambda\delta_a{\it U^{{a,A}}}_{{n+1}
}$. By construction, $D_n=U^{b,D}_n-c_b$, $R_n=U^{a,D}_n-c_a$, hence, are concave.
By property $1$, $D_n-R_n$ is an increasing sequence.
%In addition, there is at most one threshold state $n=t$ such that  $U^{b,A}_m<U^{a,A}_m,\;m<t$ and $U^{b,A}_m>U^{a,A}_m,\;m\geq t$.

At $n=0$, $D_0=U^{b,D}_0$, $R_0=U^{a,D}_0$.
By boundary condition~\eqref{eq:pth14}, $D_0=R_0$, hence it always holds $D_n>R_n$. 
To this end, define the slope of some discrete $W_n$, $n\in\{0\cdots B-1\}$ as $\Delta(W_n)=\Delta W_n=W_{n+1}-W_n$.
Clearly, 
\begin{equation}
\Delta(D_n)>\Delta(R_n)\label{ineq:1}
\end{equation}
By concavity $\Delta(D_n)$ and $\Delta(R_n)$ are decreasing sequences, hence $\Delta(D_n)+\Delta(R_n)$ is also decreasing.
Next, write
\[
\Delta(D_n)+\Delta(R_n)=(D_n-R_{n-1})+(R_n-D_{n-1})
\]
Clearly, at least one of the two sequences $D_n-R_{n-1}$ and $R_n-D_{n-1}$ is decreasing.
Assume that, $D_n-R_{n-1}$ is increasing. However, then $\Delta(R_n-D_{n-1})>\Delta(D_n-R_{n-1})$, hence $\Delta(R_n)+\Delta(R_{n-1})>\Delta(D_n)+\Delta(D_{n-1})$, which contradicts~\eqref{ineq:1}, thus the assumption above is incorrect and the sequence $D_n-R_{n-1}$ is increasing in $n$. 

We show next that applying the corresponding operators results in functions which possess these properties as well. Namely, \rdo{\{\T U, \A_bU^{b,A}, \A_aU^{a,A},\T_bU^{b,D},\T_aU^{a,D}\}\in \calS }.
Note that the preservation is separately proved for the general state and for the boundary conditions.

\subsubsection*{{{\textbf{Property 1}}} [${U^{b,D}_{n}-U^{a,D}_{n}\text{  is increasing in n}}$]}

%\shifrinA{my version, pay attention, no capital letters.}

% \shifrin{\it \textbf{property~\eqref{eq:pth1}}.}
In order to prove that
\req{0}{U^{b,D}_{n}-U^{a,D}_{n}\text{  is increasing in n}}{eq:lp101}

Write~\eqref{eq:lp101} as follows,
\req{0}{
\mu {\it _b}\delta_bU_{{n-1}}-\mu {\it _a}
\delta_aU_{n-1}+\lambda\delta_b{\it U^{b,A}}_{{n}}-\lambda\delta_a{\it U^{a,A}}_{n}+{\it c_b}-{
\it c_a}
}{eq:lp102}
Note that~\eqref{eq:lp101} is equivalent to $D_n-R_n$.
%We show that the triple of operators $\A_{d},\A_{r},\T$ jointly preserves  property~\eqref{eq:pth1}.
% Substitute~\eqref{eq:pth6} and ~\eqref{eq:pth7} to apply the operators to~\eqref{eq:pth1}
Apply the operators to~\eqref{eq:lp102},
\rals{-1}{
&\Theta_n=\mu {\it _b}\delta_b\T U_{{n-1}}-\mu {\it _a}
\delta_a\T U_{{n-1}}+\\
&\lambda\delta_b\A_b{\it U^{b,A}
}_{n}-\lambda\delta_a\A_a{\it U^{a,A}}_{n}+
{\it c_b}-{\it c_a}
}
%Our goal is to show that $\tilde{\Delta}_n \ge 0$.
Divide the proof into two possible cases cases, depending on whether
$\T U_{{n-1}}=\T_b U_{n-1}$ or $\T U_{{n-1}}=\T_a U_{n-1}$

\begin{enumerate}
\item \underline{$\T U_{{n-1}}=\T_b U_{n-1}$}.
Then, %and omit the constants $c_a$ and $c_b$
\begin{align}
&\Theta_n=\mu {\it _b}\delta_b \left( \mu {\it _b}\delta {
\it _b}U_{{n-2}}+\lambda\delta_b{\it U^{b,A}}_{{n-1}
}+c_b \right) \nonumber \\
&-\mu {\it _a}\delta_a \left(
\mu {\it _b}\delta_bU_{{n-2}}+\lambda\delta {\it
_b}{\it U^{b,A}}_{n-1}+c_b \right) \nonumber\\
& +\lambda\delta {
\it _b} \left( \mu {\it _b}\delta_bU_{{n}}+
\lambda\delta_b{\it U^{b,A}}_{{n+1}} \right)  \nonumber \\
& -\lambda
\delta_a \left( \mu {\it _a}\delta_aU_{{n
}}+\lambda\delta_a{\it U^{a,A}}_{n+1}\right)+c_b-c_a \label{eq:pth2}
\end{align}

Write
\begin{align}
&{\Theta}_n= \mu {\it _b}\delta_b(D_{n-2}+c_b ) -\mu {\it _a}\delta_a( D_{n-2}+c_b ) \nonumber \\
&\lambda\delta {
\it _b}\left( D_n \right) -\lambda
\delta_a \left( R_n\right)+c_b-c_a \nonumber \\
&=
 \mu {\it _b}\delta_b(D_{n-2}+c_b ) -\mu {\it _a}\delta_a( D_{n-2}+c_b ) \nonumber \\
 &+\lambda\delta {
\it _b}\left( D_n \right)  -
\lambda
\delta_a \left( R_n \right)+c_b-c_a \nonumber\\
& - \lambda\delta_a ( D_n)+\lambda\delta_a ( D_n)\label{eq:pth3}
\end{align}

See that $D_n$ and $R_n$ are concave due to the concavity of $U^{d,a}_{n}$ and $U^{r,a}_{n}$. Now observe that 
\[
\mu_b\delta_b-\mu_a\delta_a+\lambda\delta_b-\lambda\delta_a=\gamma\delta_b\delta_a(\mu_b-\mu_a)>0
\]
Use the bound for the slope twice, that is
\(
U_{n}<U_{n-1}+K\text{  and  } U_{n}<U_{n+1}+K
\) and apply it to $D_n$. Observe that $\lambda\delta_b-\lambda\delta_a<0$, and denote $D_n=D_{n-2}+\phi_1(n)$.
Hence, 
$(\lambda\delta_b-\lambda\delta_a)D_n=(D_{n-2}+\phi_1(n))(\lambda\delta_b-\lambda\delta_a)$. Denote $\phi_2(n)=\phi_1(n)(\lambda\delta_a-\lambda\delta_b)$ and write
%and omit again the constants. Write
\rals{0}{
&{\Theta}_n = (\mu_b\delta_b-\mu_a\delta_a+\lambda\delta_b-\lambda\delta_a)D_{n-2}\\
&+\lambda\delta_a ( D_n-R_n)-\phi_2(n)+\kappa,
}
where $\phi(n)$ is positive decreasing by concavity of $D(n)$ and $\kappa$ is a suitable constant.
Since the expression above is a linear combination of increasing functions with positive coefficients and constants, it is increasing. Hence, the operators do preserve property 1) in this case.

\item \underline{$\T U_{{n-1}}=\T_a U_{n-1}$}. 
The proof is analogous to that of the first case.
%Write
%\begin{align}
%\tilde{\Delta}_n=& \mu {\it _b}\delta_b(R_{n-2}+c_a ) -\mu {\it _a}\delta_a( R_{n-2}+c_a ) \nonumber +\lambda\delta {
%\it _b}\left( D_n\right) -\lambda
%\delta_a \left( R_n\right)+c_b-c_a
%\end{align}
Similarly to~\eqref{eq:pth3} write
\begin{align*}
{\Theta}_n=& \mu {\it _b}\delta_b(R_{n-2}+c_a ) -\mu {\it _a}\delta_a( R_{n-2}+c_a ) \nonumber +\lambda\delta {
\it _b}\left( D_n\right) \\
&-\lambda
\delta_a \left( R_n\right)+c_b-c_a-\lambda
\delta_b \left( R_n \right)+\lambda
\delta_b \left( R_n\right)
\end{align*}
Again, use the property of the bounded slope and write
\rals{0}{
&{\Theta}_n=(\mu_b\delta_b-\mu_a\delta_a+\lambda\delta_b-\lambda\delta_a)R_{n-2}\\
&+\lambda\delta_b(D_n-R_n)-\phi_3(n)+\kappa_1,
}
where $\phi_3(n)$ is positive decreasing by concavity of $D(n)$ and for some suitable $\kappa_1$. Since the expression above is a linear combination of increasing functions with positive coefficients and constants it is increasing. Hence, the operators preserve property 4) in this case as well.

%To prove for the boundary condition at $0$ substitute $n=1$
%\[
%\mu_b\delta_b\T U_{0}-\mu {\it _a}
%\delta_a\T U_{0}+\lambda\delta_b\A_b{\it U^{b,A}
%}_{,1}-\lambda\delta_a\A_a{\it U^{a,A}}_{,1}+{\it c_b}-{
%\it c_a}
%\]

\ifnconf
\else
\item \underline{\emph{Boundary conditions}}

Note that at $n=0$ there is no decision to be made, while the proof for $n=1$ is identical to that of $1<n<\BS$.
At state $\BS$ observe that 
\req{0}{\A_b U^{d,a}_{\BS}=\A U^{d,a}_{\BS-1}\text{ and }\A_a U^{r,a}_{\BS}=\A U^{r,a}_{\BS-1}}{eq:lp107}. %Hence
%Next, we show that  property 4 holds at state $\BS$. 
%Assume $r$ was the decision at $\BS-1$ while $d$ was the decision at $\BS$, apply the operators and 
Write 
\begin{align*}
&\mu {\it _b}\delta_b\T U_{{\BS-1}}-\mu {\it _a}
\delta_a\T U_{{\BS-1}}+\\
&\lambda\delta_b\A_b{\it U^{b,A}
}_{\BS}-\lambda\delta_a\A_a{\it U^{a,A}}_{\BS}+
{\it c_b}-{\it c_a}-\\
&\big(\mu {\it _b}\delta_b\T U_{{\BS-2}}-\mu {\it _a}
\delta_a\T U_{{\BS-2}}+\\
&\lambda\delta_b\A_b{\it U^{b,A}
}_{\BS-1}-\lambda\delta_a\A_a{\it U^{a,A}}_{\BS-1}+
{\it c_b}-{\it c_a} \big)=\\
& \mu {\it _b}\delta_b\T U_{{\BS-1}}-\mu {\it _a}
\delta_a\T U_{{\BS-1}}-\\
&\big(\mu {\it _b}\delta_b\T U_{{\BS-2}}-\mu {\it _a}
\delta_a\T U_{{\BS-2}}\big)\geq0\\
%&(\mu_b\delta_b-\mu_a\delta_a)(T U_{{\BS-1}}-T U_{{\BS-2}})=
%& U^{b,D}(\BS)-U^{a,D}(\BS)-\left( U^{b,D}(\BS-1)-U^{a,D}(\BS-1) \right) \\
%= &\mu {\it _b}\delta_bU_{\BS-1}-\mu {\it _a}
%\delta_aU_{\BS-1}+\lambda\delta_b{\it U^{b,A}
%}_{{\BS}}-\lambda\delta_a{\it U^{a,A}}_{\BS}+\\{\it c_b}-
%&{\it c_a} -\big(\mu {\it _b}\delta_bU_{\BS-2}-\mu {\it _a}
%\delta_aU_{\BS-2}+\lambda\delta_b{\it U^{b,A}
%}_{{\BS-1}}- \\
%&\lambda\delta_a{\it U^{a,A}}_{\BS-1}+
%{\it c_b}-{\it c_a}\big)
\end{align*}
%We want to show that $\tilde{\Delta}_Q \ge 0$, where
%\[
%\tilde{\Delta}_Q = \mu {\it _b}\delta_b\T U_{\BS-1}-\mu {\it _a}
%\delta_a\T U_{\BS-1}-\left(\mu {\it _b}\delta_b\T U_{\BS-2}-\mu {\it _a}
%\delta_a\T U_{\BS-2}\right)
%\]
The inequality above is true by the increasing property of $U_n$. 
%After a few simple algebraic manipulations, it is possible to verify that the inequality
%$\tilde{\Delta}_Q \ge 0$  holds  
\fi
\end{enumerate}

\subsubsection*{\textbf{Property 2} [Concavity]}

%\subsection{Concavity}\label{app:Conc}
%\begin{lemma}[Concavity]
%$V^{b,A}_{n}$ resp. $V^{a,A}_{n}$ are concave for all $n\geq1$, resp. $n\geq0$.
%\label{lemC}
%\end{lemma}

Apply corresponding operators, $u\in\{a,b\}$:
\[
\A_uU^{(u,A)}_{n}=\mu_u\delta_uU_{n}+\lambda\delta_uU^{u,A}_{n+1} 
\]
The result is concave due to the concavity of $U_{n},U^{a,A},U^{b,A}$.

The result for $U^{u,D}$ is similarly deduced.
%We show that $U_{n+1}-U_n$ is decreasing.

Apply $\T$ on $U_n-U_{n-1}\geq U_{n+1}-U_n$.
We show only non-trivial cases where the threshold is within the range $[n-1,n]$. In the case $n=t$, $D_{t-1}+c_b<R_{t-1}+c_a$
\[
D_n-R_{n-1}+c_b-c_a\geq D_{n+1}-R_{n}+c_b-c_a\geq D_{n+1}-D_n,
\]
where the first inequality follows from the fact the sequence $D_{n+1}-R_n$ is decreasing.
Apply $\T$ on $U_n-U_{n-1}\geq U_{n+1}-U_n$ in the case $n=t+1$
\[ 
R_n-R_{n-1}>D_n+c_b-c_a-R_{n-1}\geq D_{n+1}+c_b-R_n-c_a,
\]
where the second inequality is due to the fact the sequence $D_{n+1}-R_n$ is decreasing.

\begin{comment}

\begin{align*}
& V_0^{b,A}=V_0^{a,A}=V_1^{a,D}=\\
& V_0^{a,D}\mu_a\delta_a+V_1^{a,A}\lambda\delta_a+c_a
\end{align*}
\end{comment}
%\input{Conc_lem_proof}
%$U^{(b,A)}$ and $U^{(b,A)}$ are concave, hence $U^{(a,D)}$ and $U^{(b,D)}$ are concave as well.

\subsubsection*{\textbf{Property 3} [Non-decreasing property]}
%\subsection*{\{{ \textbf{}}}}[]

We show that $\T U_n, \A_b U^{(b,A)}_n, \A_a U^{(a,A)}_{n}$, $n\in\{1,\cdots,\BS\}$ are non-decreasing.

% The proof immediately follows from applying the operators.

 For $\T U_n$,  we assume that $U_n \ge U_{n-1}$ and we show that $\T U_n \ge \T U_{n-1}$. Indeed,
\[
\mu_b\delta_bU_{n-1}+\lambda\delta_bU^{(b,A)}_n+c_b\geq\mu_b\delta_bU_{n-2}+\lambda\delta_bU^{(b,A)}_{n-1}+c_b
\]
For $\A_b U^{(b,A)}_n$ we have,
\[
\mu_b\delta_bU_{n}+\lambda\delta_bU^{(b,A)}_{n+1}\geq\mu_b\delta_bU_{n}+\lambda\delta_bU^{(b,A)}_{n+1},
\]
with equality at $\BS$. Similarly, it can be shown that  the result  holds for $\A_a$.

%\ifnconf
\subsubsection*{{{ \textbf{Property 4}}}[Slope bound]}

 %To show that the slope is bounded we first prove that $\T U_{n}-\T U_{n-1}\leq K$, then
%$ \A_b U^{b,A}_n- \A_b U^{b,A}_{n-1}\leq K$ and finally, $\A_a U^{a,A}_{n}- \A_a U^{a,A}_{n-1}\leq K$.

\begin{enumerate}
\item \underline{\emph{$\T U_{n}-\T U_{n-1}\leq K$¸}}

%\[
%\T U_{n}-\T U_{n-1}\leq K
%\]
Note that $\lambda\delta_b+\mu_b\delta_b<1$.
We divide the proof into four cases.

\underline{\emph{Case 1: } $\T U_{n} = \T_b U_{n}$ and  $\T U_{n-1} = \T_b U_{n-1}$}.
In this case,
\begin{align*}
& \T U_{n}-\T U_{n-1} \leq \\
&\leq  \mu_b\delta_bU_{n-1}+\lambda\delta_bU^{b,A}_n+c_b-\mu_b\delta_bU_{n-2}-\lambda\delta_bU^{b,A}_{n-1}-c_b \\
& \leq \mu_b\delta_bK+\lambda\delta_bK, \\
& \leq K
\end{align*}
Hence, the property holds.

\underline{\emph{Case 2: } $\T U_{n} = \T_a U_{n}$ and  $\T U_{n-1} = \T_a U_{n-1}$}.
The proof is identical to case 1, and is omitted.

\underline{\emph{Case 3: } $\T U_{n} = \T_b U_{n}$ and  $\T U_{n-1} = \T_a U_{n-1}$}.
% Next, assume that "d" is chosen for $n$ and "r" is chosen for $n-1$.
Write
\begin{align*}
& \T U_{n}-\T U_{n-1} \leq \\
&\mu_b\delta_bU_{n-1}+\lambda\delta_bU^{b,A}_n+c_b-\mu_a\delta_aU_{n-2}-\lambda\delta_aU^{a,A}_{n-1}-c_a\leq\\
&\mu_b\delta_bU_{n-1}+\lambda\delta_bU^{b,A}_n+c_b-\mu_b\delta_bU_{n-2}-\lambda\delta_bU^{b,A}_{n-1}-c_b\leq \\
&\mu_b\delta_bK+\lambda\delta_bK \\
&\leq K
\end{align*}
Hence, the property holds.
\underline{\emph{Case 4: } $\T U_{n} = \T_a U_{n}$ and  $\T U_{n-1} = \T_b U_{n-1}$}.
This case contradicts property 4 and thus is excluded.

\item  \underline{\emph{$\A_b U^{b,A}_n- \A_b U^{b,A}_{n-1}\leq K$}}
To show the bound for $U^{b,A}$ write
\begin{align*}
&\A_bU^{b,A}_n-\A_bU^{b,A}_{n-1}=\\
&=\delta_b\mu_bU_{n}+\delta_b\lambda U^{b,A}_{n+1}-\delta_b\mu_bU_{n-1}-\delta_b\lambda U^{b,A}_n\\
&\leq\delta_b(\mu_b+\lambda)K\\
&\leq K
\end{align*}

\item \underline{\emph{$\A_a U^{a,A}_{n}- \A_a U^{a,A}_{n-1}\leq K$}}.
The proof  is identical to the previous one.

\item \underline{\emph{Boundary conditions}  }

Write boundary conditions for the state $\BS$. As $\A_bU_{\BS}^{b,A}-\A_bU_{\BS-1}^{b,A}=0$, we have  %Choose "d", write
\[
\A_bU^{b,A}_{\BS}-\A_bU^{d,d}_{\BS-1}<K
\]
The proof for $\A_a$ is similar. The prove for $\T$ at $\BS$ is similar to that at $n<\BS$. To see that write:
\begin{align*}
&\T_bU_{\BS}-\T_bU_{\BS-1}=\delta_b\mu_bU_{\BS-1}+\delta_b\lambda U^{b,A}_{\BS}-\\
&\delta_b\mu_bU_{\BS-2}-\delta_b\lambda U^{b,A}_{\BS-1}
%&\delta_b\mu_bU_{\BS}+\delta_b\lambda U_{d,\BS}-\delta_b\mu_aU_{\BS-1}-\delta_b\lambda U_{r,\BS}\leq\delta_b(\mu_b)K\leq K
\end{align*}

\item \underline{\emph{Constant $K$}  }

It is left to select the constant $K$. We do this by proving the bound for the boundary conditions at $0$.

% Assume  "d" for both "max" operators and substitute~\eqref{eq:expMDP1}

% \daniel{please, check if the conditions stated for cases 1 and 2 are correct}

We show that 
\begin{equation}
 U_{1}-  U_{0} \leq K \Rightarrow  \T U_{1}-  \T U_{0} \leq K 
\end{equation}

\begin{enumerate}
\item \underline{\emph{Case 1:}  } Assume $\T U_{1} = \T_b U_{1}$ and  $\T U_{0} = \T_b U_{0}$.

From~\eqref{eq:pth14} we have
\reqs{0}{
V^{b,D}_{0}= \T_b U_0 
= \lambda \bar\delta V^{b,A}_{0}
}
% \daniel{ are the equalities above correct? In particular, the passage from \eqref{pass1} to \eqref{pass2} is not clear to me. }
Then,
\begin{align}
&\T U_{1}-\T U_{0}= \nonumber \\
&\T_bU_{1}-\T_b U_0= \nonumber\\
% &\T_bU_{d,1}-\A_bU_{d,0}= \label{pass6}\\
& \mu_b\delta_bU_{0}+\lambda\delta_bU^{b,A}_{1}+c_b-\lambda\bar\delta U^{b,A}_{0} = \label{pass3} \\
&\mu_b\delta_bU_{0}-\mu_b\lambda\delta_b\bar\delta U^{b,A}_{0}+c_b+\lambda\delta_b( U^{b,A}_{1}- U^{b,A}_{0}) \label{pass4} \\
&\leq\mu_b\delta_bU_{0}-\mu_b \lambda\delta_b\bar\delta U^{b,A}_{0}+c_b+\lambda\delta_b K \nonumber
 %\leq\mu_b\delta_bK+\lambda\delta_bK\leq K
\end{align}
% \daniel{how do you go from \eqref{pass5} to \eqref{pass6}?}
%
%
% \daniel{how do you go from 
\eqref{pass4} is obtained from \eqref{pass3} using the following
% $\lambda\delta_bU_{d,1}-\lambda\bar\delta U_{d,0} =-\mu_b\gamma\delta_b\bar\delta U_{d,1}+\lambda\delta_b( U_{d,1}- U_{d,0})$.
\rals{0}{
&\bar\delta-\delta_b =  1/(\gamma+\lambda)-1/(\gamma+\lambda+\mu_b)\\
&= (\gamma+\lambda+\mu_b-\gamma-\lambda)/((\gamma+\lambda)(\gamma+\lambda+\mu_b)) \\
&=  \mu_b/((\gamma+\lambda)(\gamma+\lambda+\mu_b)) \\
&= \mu_b \delta_b \bar\delta
}
Denote $K_1=\mu_b\delta_bU_{0}-\mu_b\lambda\delta_b\bar\delta U_{0}^{b,A}+c_b$.
% Note that in state $0$ we have $U_{d,0}=U_{r,0}$.
\item \underline{\emph{Case 2:}.  } Assume $\T U_{1} = \T_a U_{1}$ and  $\T U_{0} = \T_a U_{0}$.
\reqs{0}{
V^{a,D}_{0}= \T_a U_0 
= \lambda \bar\delta V^{a,A}_{0}
}
% \daniel{ I don't understand the passages above}
Similarly, %to~\eqref{pass5a}-\eqref{pass5b}, 
\begin{align*}
&\T U_{1}-\T U_{0}=\\
& \T_a U_{1}^{a,A}-\T_a U_{0}^{a,A} \\
&\leq \mu_a\delta_aU_{0}-\mu_a\lambda\delta_a\bar\delta U_{0}^{a,A}+c_a+\lambda\delta_aK 
 %\leq\mu_b\delta_bK+\lambda\delta_bK\leq K
\end{align*}
% \daniel{please, explain passages above}
Denote $K_2=\mu_a\delta_aU_{0}-\mu_a\lambda\delta_a\bar\delta U_{0}^{a,A}+c_a$.

\item \underline{\emph{Case 3:}  }Assume $\T U_{1} = \T_b U_{1}$ and  $\T U_{0} = \T_a U_{0}$.
This case is identical to case 1, because $\T_a U_{0}=\T_b U_{0}$.
\end{enumerate}
$K$ has to satisfy $K_1 + \lambda \delta_b K  \leq K$ and 
$K_2 + \lambda \delta_a K  \leq K$, that is, 
$K_1 \leq (\mu_b+\gamma)\delta_bK$ and $K_2 \leq (\mu_a+\gamma)\delta_aK$.
Setting $K=\max\left\{\frac{K_1}{(\mu_b+\gamma)\delta_b},\frac{K_2}{(\mu_a+\gamma)\delta_a}\right\}$ 
suffices to  bound the slope in all cases.
\end{enumerate}

%\fi

\vspace{0.1in}

This finishes the proof of lemma~\ref{mainlemma1}.

\end{proof} 
\fi
%\end{appendices}
\ifnconf
\subsection{Throughput assessment}

\label{appthr}

In this section we assume that a  threshold policy is given.   Our aim is to assess the throughput.  
Recall that given threshold $T$, the policy consists of transmitting through the most reliable but
slowest channel (channel ``a'',  which is, for instance, a relayed channel) if the 
number of packets in the buffer
is smaller than or equal to $T$ at the time at which the transmission starts, and transmitting
 through the less reliable but fastest channel (channel ``b'', also known as the  direct channel) otherwise.  The policy is summarized in Table~\ref{tab:thrass}. 
Note that we assume that a packet is removed from the buffer immediately after its transmission ends.  Decisions are made
immediately after 1) a packet arrives  to an empty system, or 2) a packet is  removed from the buffer which remains occupied.

\begin{table}[h!]
\center
\caption{Two considered paths}
\begin{tabular}{l||l|l|l}
\hline
channel & reliability & rate  & condition for usage \\
 & &  & (number of buffered packets is) \\
\hline
$a$ & $1-p_a$, higher & $\mu_a$, lower  & smaller than or equal to $T$ \\
$b$ & $1-p_b$, lower &  $\mu_b$, higher &  greater than  $T$ \\
\hline
\end{tabular} \label{tab:thrass}
\end{table}

\emph{Note that once  the SMDP policy is fixed and given, our object of study is a semi-Markov process (SMP).  %  Semi-Markov decision processes (SMDPs) are based on SMPs.
 In what follows, we show how to compute the expected impulse reward of the SMP of interest.  The resulting metric is the throughput.  }
 
%  In this appendix, we refer to channels $a$ and $b$ as direct and relay channels, respectively.  
Assume that we are given matrix $\bP^{(a)}$ (resp., $\bP^{(b)}$), whose $(i,j)$ entry  indicates  the probability 
that the system contains $j$ packets after a transmission through channel $a$ (resp., $b$), given that it contained $i$ packets before
the transmission.  $\bP^{(a)}$ and $\bP^{(b)}$ are the transition matrices of the embedded chain. Let $P^{(a)}_{ij}$  (resp., $P^{(b)}_{ij}$) be the entry $(i,j)$ of matrix $\bP^{(a)}$ (resp., $\bP^{(b)}$).

To evaluate the throughput of a threshold policy, with threshold $T$, we proceed as follows:

\begin{enumerate}
\item we consider a state space  $\Omega$ comprising states of the form $\sigma=(i,j)$
 where $i$ is the number of packets in the buffer,  and
 $j$ is given as follows,
 \begin{equation}
 j=\left\{ 
\begin{array}{ll}
0 & \textrm{no packet being transmitted} \\
a & \textrm{packet being transmitted through channel $a$} \\
b & \textrm{packet being transmitted through channel $b$} \\
\end{array}
\right.
 \end{equation}
 Note that the state space comprises states of the form $(i,j)$ such that,
\begin{equation}
(i,j)=\left\{
\begin{array}{ll}
(0,0) & \textrm{empty system} \\
(i,a) & 0 < i \leq T \\
(i,b) & T < i \leq B-1 \\
\end{array} \right.
\end{equation} 
 \item  Let  $|\Omega|$ be the state space cardinality, $|\Omega|=1+(B-1)=B$, where $B$ is the buffer size. 
  Note that if the buffer size is $B$, we can have at most $B-1$ packets in the system at any decision epoch. This is because
   decision epochs occur immediately after departures.  % Before a departure, the system might have up to $B$ packets, including
%   the one being transmitted.  Immediately after a departure, the system might have up to $B-1$ packets.
\item We build the $|\Omega| \times |\Omega|$ transition matrix $\bP$, corresponding to the given threshold policy.
\begin{enumerate}
\item if $i=0$ and $T=0$, $P_{(i,0),(1,a)}=1$,
\item if $i=0$ and $T\neq0$,  $P_{(i,0),(1,b)}=1$,
\item if $i > 0$ and $0 < j \leq T$, $P_{(i,a),(j,a)}=P^{(a)}_{ij}$ and  $P_{(i,b),(j,a)}=P^{(b)}_{ij}$,
\item if $i > 0$ and $ j > T$, $P_{(i,a),(j,b)}=P^{(a)}_{ij}$ and $P_{(i,b),(j,b)}=P^{(b)}_{ij}$,
\item  if $i > 0$ and $j=0$, $P_{(i,a),(j,0)}=P^{(a)}_{ij}$ and $P_{(i,b),(j,0)}=P^{(b)}_{ij}$.
\end{enumerate}
\item We solve $\bpi = \bpi \bP$ to obtain $\bpi$, the fraction of visits to each state.  The system solution
might be obtained using Gaussian elimination (which involves a matrix inversion), or one of its variants, such as GTH~\cite{grassmann1985regenerative}. 
 Alternatively, the power method can also be employed.
\item We compute the fraction of time at which  the system remains at each state, $\tilde{\bpi}$. 
Let $\tau_{\sigma}$ be the mean time, per visit, at state $\sigma=(i,j)$, 
\begin{equation}
\tau_{(i,j)}=\left\{
\begin{array}{ll}
1/\lambda, & i = 0 \\
1/\mu_a, & j=a \\
1/\mu_b, & j =b
\end{array}
\right.
\end{equation}
Then,\footnote{The rationale is given, for instance, in 
\url{http://www.cs.toronto.edu/~marbach/COURSES/CSC2206_F14/semi_markov.pdf}.}
\begin{eqnarray}
\kappa &=&  \sum_{\forall \sigma \in \Omega} \pi_{\sigma} \tau_\sigma \\
\tilde{\pi}_{\sigma} &=& \pi_\sigma \tau_\sigma/\kappa
\end{eqnarray}

Note that the probability $\sum_{\forall i}\tilde{\pi}_{(i,a)}$  (resp.,  $\sum_{\forall i}\tilde{\pi}_{(i,b)}$) corresponds to the probability that the  system is busy 
transferring a packet through the  channel $a$ (resp., channel $b$).   This follows from the fact that matrix $\bP$ characterizes 
the system dynamics, and its solution yields the system steady state probabilities.   

\item let $\rho_\sigma$ be the instantaneous impulse reward (throughput) obtained after a transition from  state $\sigma=(i,j)$ to any state $\sigma'$,
\begin{equation}
\rho_{(i,j)}=\left\{
\begin{array}{ll}
0, & j = 0 \\
1-p_a, & j=a \\
1-p_b, & j =b
\end{array}
\right.
\end{equation}
\item the throughput $\tilde{\mathcal{T}}$ is given by 
\begin{equation}
\tilde{\mathcal{T}}=\sum_{\forall \sigma \in \Omega} \tilde{\pi}_\sigma \rho_\sigma/\tau_\sigma
\end{equation} 
\end{enumerate}

Consider a long interval of time with duration $I$.  The system remains roughly $\tilde{\pi}_{\sigma} I$ time units at state $\sigma$, 
and visits state $\sigma$ roughly     $\tilde{\pi}_{\sigma} I/\tau_\sigma$ times.  For every visit to state $\sigma$,  $\rho_\sigma$
 is the obtained  impulse reward.  Therefore, the expected impulse reward obtained from visits to state $\sigma$ is
   $\tilde{\pi}_{\sigma}  \rho_\sigma I/\tau_\sigma$. Normalizing by the duration $I$, we obtain the expected impulse reward per time unit
   derived from state $\sigma$, which is  $\tilde{\pi}_{\sigma} \rho_\sigma/\tau_\sigma$. Summing for all states, we obtain the system throughput.

Equivalently,
\begin{equation}
\tilde{\mathcal{T}}=  \left( \sum_{\forall i}\tilde{\pi}_{(i,a)} \right) \mu_a (1-p_a) + \left( \sum_{\forall i}\tilde{\pi}_{(i,b)} \right) \mu_b (1-p_b) 
\end{equation}

Special cases:

\begin{itemize}
\item \textbf{Exponential service times}:  in case the service times are exponentially distributed, policy evaluation consists of computing the expected
instantaneous reward of a continuous-time Markov chain. 
 In this case, tools like Tangram-II~\cite{e2000tangram} can be used to compute the metrics of interest.
 The system can be solved either using the embedded process (as described in this section -- see Figure~\ref{fig:polevalemb}), 
 or directly using the natural process (see Figure~\ref{fig:polevalnat}).
\item \textbf{Deterministic service times}:  special solution methods can be used to efficiently compute the throughput if 
service times are deterministic.  In particular, Tangram-II implements the methods proposed in~\cite{edmundo} for this purpose.
Alternatively, the equations  for the embedded Markov chain follow from the discussion in Section~\ref{sec:det} and can be used
to assess the throughput using the ideas presented in this appendix. 
\item \textbf{Uniform service times}:  the equations for the embedded Markov chain  have been presented in 
Section~\ref{sec:uniform}.  Together
with the algorithm introduced in this appendix, the throughput can be assessed.
\item \textbf{Other service time distributions}: to compute the throughput under other service time distributions, one can rely on tools such as    
 Oris~\cite{ballarini2013transient}.  Table~\ref{tab:orisinput} illustrates % a Petri net used 
the input to Oris. The input comprises three state variables, $W, S_a, S_b$, characterizing the number of waiting packets, and the number of 
packets 
being transmitted through channels $a$ and $b$, respectively.  Note that $W \in \{ 0, 1, \ldots, B-1 \}$,  $S_a \in \{0,1\}$ and $S_b \in \{0,1\}$.
\end{itemize}

\begin{figure}
\center
\includegraphics[scale=0.3]{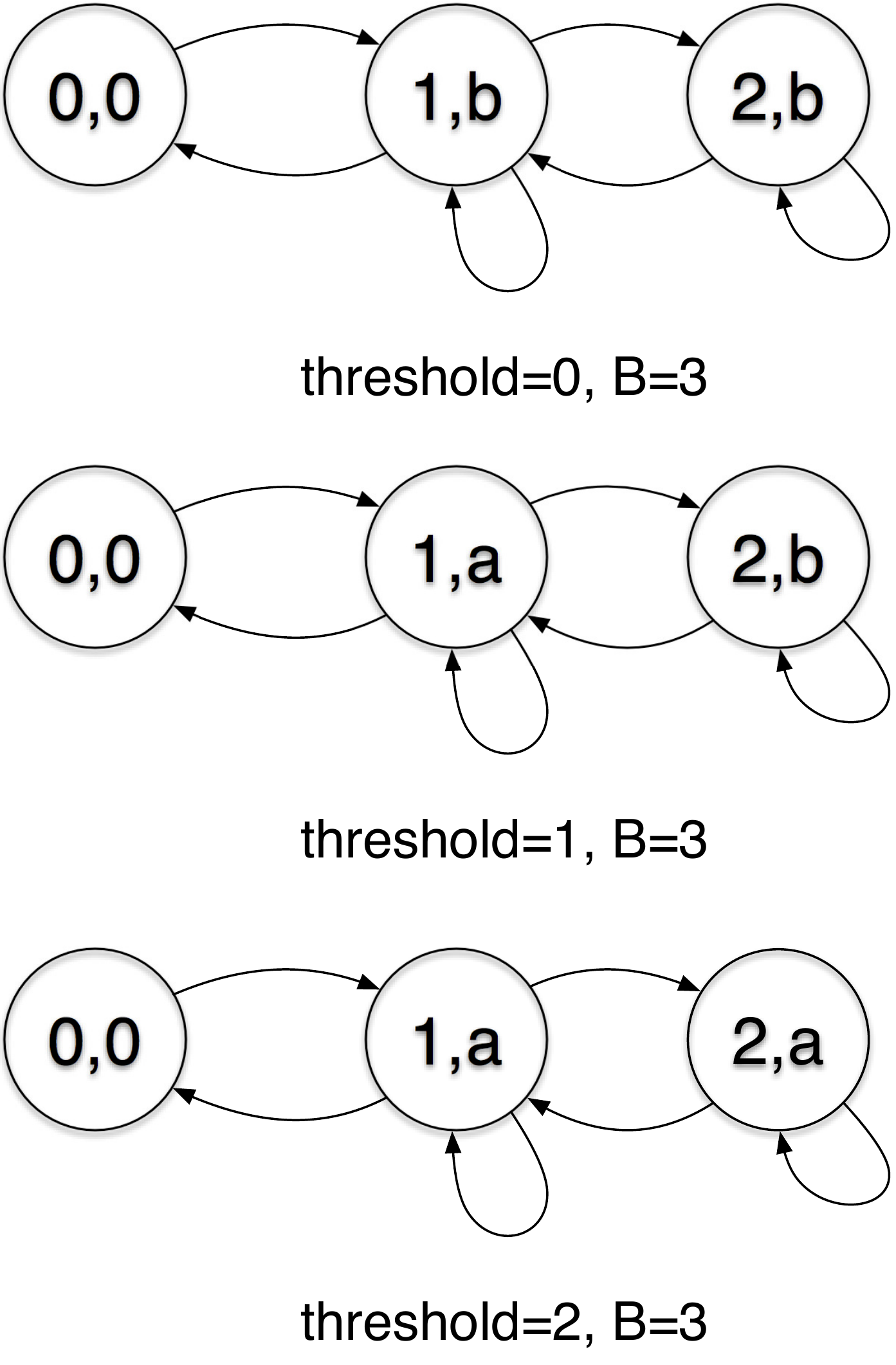}
\caption{Policy evaluation using the embedded  process.  The discrete time Markov chains of the embedded process, as illustrated in this figure,  are applicable for the assessment of the throughput given  generally distributed service times.  }\label{fig:polevalemb}
\end{figure}

 The discrete time Markov chains illustrated in Figure~\ref{fig:polevalemb} correspond to the process embedded at instants of (i) service completions and (ii) arrivals to an empty system, as described in Section~\ref{sec:mdpmodel}.  Note that these Markov  chains model generally distributed service times.  This is in contrast to the discrete time Markov chains subsumed by the   MDP  presented in Figure~\ref{fig:mdpmodel}, which is 
  applicable to exponentially distributed service times.   In the MDP model presented in Figure~\ref{fig:mdpmodel} 
 the embedded points consist of all departures and arrivals.      
 
In the particular case of exponentially distributed service times, the throughput can also be assessed directly using the  natural continuous time process, which is illustrated in Figure~\ref{fig:polevalnat}.  In this section we focused on the use of the embedded process.  A further discussion of the different approaches is presented in~\cite{puterman1994markov}.  

\begin{figure}
\center
\includegraphics[scale=0.3]{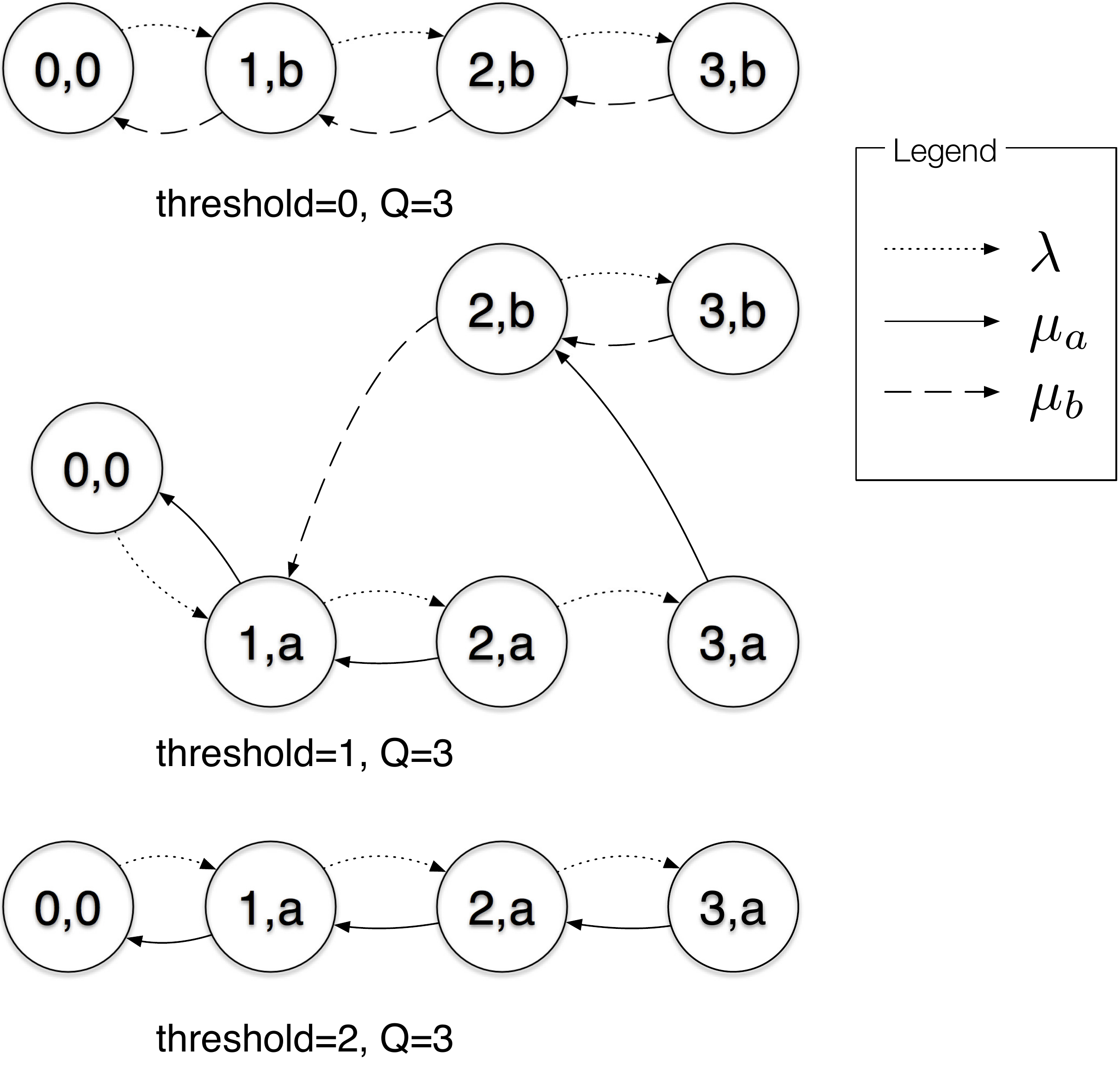}
\caption{Policy evaluation using the natural  process.  The continuous time Markov chains of the natural process, as illustrated in this figure,  are  applicable  for the assessment of the throughput  when service times are exponentially distributed.} \label{fig:polevalnat}
\end{figure}

\begin{comment}
\begin{figure}
\center
\includegraphics[width=\columnwidth]{./petri-crop.pdf}
\caption{Petri net of threshold policy} \label{petri}
\end{figure}
\end{comment}

\begin{table}[h!]
\footnotesize
\center
\caption{Threshold policy characterization}
\begin{tabular}{l|l|l|l}
\hline
event  & inter-event time   &  condition & update \\
\hline
\hline
 packet & exponential,  & $W + {S_a} + S_b < B$ & $W \leftarrow W + 1$ \\
arrival & rate $\lambda$ & & \\
 \hline
 \multicolumn{4}{c}{service start} \\
 \hline
 service $a$ &  0 (instantaneous) & $W > 0$ and & $W  \leftarrow W - 1$, \\
 & &   $S_a=0$ and  & $S_a \leftarrow S_a+1$   \\
 & &  $S_b=0$ and $W \leq T$ & \\
 service $b$ &  0 (instantaneous) & $W > 0$ and & $W  \leftarrow W - 1$,  \\
 & &   $S_a=0$ and   & $S_b \leftarrow S_b+1$ \\
 && $S_b=0$ and $W > T$ & \\
  \hline
  \multicolumn{4}{c}{service completion} \\
  \hline
 service $a$  & general, rate $\mu_a$ & $S_a =1$ & $S_a \leftarrow S_a -1$ \\
service $b$  & general, rate $\mu_b$ & $S_b =1$ & $S_b \leftarrow S_b -1$  \\
\hline 
\end{tabular} \label{tab:orisinput}
\end{table}

% \begin{figure}[h!]
% \includegraphics[width=\columnwidth]{./numericalresults/throughput10.pdf}
% \caption{Throughput for a buffer of size 10}
% \end{figure}

% \begin{figure}[h!]
% \includegraphics[width=\columnwidth]{./numericalresults/throughput50.pdf}
% \caption{Throughput for a buffer of size 50}
% \end{figure}

% \clearpage
% \pagebreak
\fi

\bibliographystyle{IEEEtran} \bibliography{mdprelay}

% Generated by IEEEtran.bst, version: 1.14 (2015/08/26)
\begin{thebibliography}{10}
\providecommand{\url}[1]{#1}
\csname url@samestyle\endcsname
\providecommand{\newblock}{\relax}
\providecommand{\bibinfo}[2]{#2}
\providecommand{\BIBentrySTDinterwordspacing}{\spaceskip=0pt\relax}
\providecommand{\BIBentryALTinterwordstretchfactor}{4}
\providecommand{\BIBentryALTinterwordspacing}{\spaceskip=\fontdimen2\font plus
\BIBentryALTinterwordstretchfactor\fontdimen3\font minus
  \fontdimen4\font\relax}
\providecommand{\BIBforeignlanguage}[2]{{%
\expandafter\ifx\csname l@#1\endcsname\relax
\typeout{** WARNING: IEEEtran.bst: No hyphenation pattern has been}%
\typeout{** loaded for the language `#1'. Using the pattern for}%
\typeout{** the default language instead.}%
\else
\language=\csname l@#1\endcsname
\fi
#2}}
\providecommand{\BIBdecl}{\relax}
\BIBdecl

\bibitem{tse_vishwanath}
D.~Tse and P.~Viswanath, \emph{Fundamentals of wireless communication}.\hskip
  1em plus 0.5em minus 0.4em\relax Cambridge university press, 2005.

\bibitem{zheng_dym_tradeoff}
L.~Zheng and D.~N. Tse, ``Diversity and multiplexing: a fundamental tradeoff in
  multiple-antenna channels,'' \emph{Trans. Info. Theory}, 2003.

\bibitem{karmakar2011diversity}
S.~Karmakar and M.~K. Varanasi, ``The diversity-multiplexing tradeoff of the
  dynamic decode-and-forward protocol on a mimo half-duplex relay channel,''
  \emph{Trans. Info. Theory}, vol.~57, no.~10, 2011.

\bibitem{laneman2004cooperative}
J.~N. Laneman, D.~N. Tse, and G.~W. Wornell, ``Cooperative diversity in
  wireless networks: Efficient protocols and outage behavior,'' \emph{Info.
  Theory, IEEE Trans. on}, vol.~50, no.~12, pp. 3062--3080, 2004.

\bibitem{heath2002diversity}
R.~Heath and A.~Paulraj, ``Diversity versus multiplexing in narrowband mimo
  channels,'' \emph{submitted Dec}, 2002.

\bibitem{tse2004diversity}
D.~N.~C. Tse, P.~Viswanath, and L.~Zheng, ``Diversity-multiplexing tradeoff in
  multiple-access channels,'' \emph{Trans. Info. Theory}, 2004.

\bibitem{liang2011cognitive}
Y.-C. Liang, K.-C. Chen, G.~Y. Li, and P.~Mahonen, ``Cognitive radio networking
  and communications,'' \emph{Trans. Vehic. Techn.}, 2011.

\bibitem{gallager}
R.~Berry and R.~Gallager, ``Communication over fading channels with delay
  constraints,'' \emph{Trans. Info. Theory}, 2002.

\bibitem{cruz}
B.~E. Collins and R.~L. Cruz, ``Transmission policies for time varying channels
  with average delay constraints,'' in \emph{Allerton}, 1999.

\bibitem{goeckel}
A.~Scaglione, D.~L. Goeckel, and J.~N. Laneman, ``Cooperative communications in
  mobile ad hoc networks,'' \emph{Signal Proc. Mag.}, 2006.

\bibitem{tassiulas1992stability}
L.~Tassiulas and A.~Ephremides, ``Stability properties of constrained queueing
  systems and scheduling policies for maximum throughput in multihop radio
  networks,'' \emph{Automatic Control, Trans.}, vol.~37, no.~12, pp.
  1936--1948, 1992.

\bibitem{yeh2007throughput}
E.~M. Yeh and R.~A. Berry, ``Throughput optimal control of cooperative relay
  networks,'' \emph{Info. Theory, Trans.}, 2007.

\bibitem{supittayapornpong2015achieving}
S.~Supittayapornpong and M.~J. Neely, ``Achieving utility-delay-reliability
  tradeoff in stochastic network optimization with finite buffers,''
  \emph{arXiv preprint arXiv:1501.03457}, 2015.

\bibitem{urgaonkar2014delay}
R.~Urgaonkar and M.~J. Neely, ``Delay-limited cooperative communication
  w/reliability constraints in wireless networks,'' \emph{Trans. Info. Theory},
  2014.

\bibitem{altman2007constrained}
E.~Altman, K.~Avratchenkov, N.~Bonneau, M.~Debbah, R.~El-Azouzi, and D.~S.
  Menasch{\'e}, ``Constrained stochastic games in wireless networks,'' in
  \emph{Globecom}.\hskip 1em plus 0.5em minus 0.4em\relax IEEE, 2007, pp.
  315--320.

\bibitem{puterman1994markov}
M.~L. Puterman, \emph{Markov Decision Processes: Discrete Stochastic Dynamic
  Programming}.\hskip 1em plus 0.5em minus 0.4em\relax John Wiley \& Sons,
  Inc., 1994.

\bibitem{koole1998structural}
G.~Koole, ``Structural results for the control of queueing systems using
  event-based dynamic programming,'' \emph{Queueing Systems}, 1998.

\bibitem{hajek1984optimal}
B.~Hajek, ``Optimal control of two interacting service stations,'' \emph{IEEE
  Transactions on Automatic Control}, vol.~29, no.~6, pp. 491--499, 1984.

\bibitem{sennott2009stochastic}
L.~I. Sennott, \emph{Stochastic dynamic programming and the control of queueing
  systems}.\hskip 1em plus 0.5em minus 0.4em\relax John Wiley \& Sons, 2009,
  vol. 504.

\bibitem{walrand1988introduction}
J.~Walrand, \emph{An introduction to queueing networks}.\hskip 1em plus 0.5em
  minus 0.4em\relax Prentice Hall, 1988.

\bibitem{s3paper}
D.~S. Menasch{\'e}, D.~Goeckel, and D.~Towsley, ``Optimal forwarding strategies
  for the relay channel under average delay constraints,'' in \emph{ACM
  Wireless S3}, 2010, pp. 1--4.

\bibitem{bertsekas1995dynamic}
D.~Bertsekas, \emph{Dynamic programming and optimal control}.\hskip 1em plus
  0.5em minus 0.4em\relax Athena Scientific Belmont, MA, 1995, vol.~2, no.~2.

\bibitem{reed1980methods}
M.~Reed and B.~Simon, \emph{Methods of modern mathematical physics: Functional
  analysis}.\hskip 1em plus 0.5em minus 0.4em\relax Gulf Professional
  Publishing, 1980, vol.~1.

\bibitem{edmundo}
E.~de~Souza~e Silva, H.~R. Gail, and R.~R. Muntz, \emph{Efficient solutions for
  a class of non-Markovian models}.\hskip 1em plus 0.5em minus 0.4em\relax
  Springer, 1995.

\bibitem{ballarini2013transient}
P.~Ballarini, N.~Bertrand, A.~Horv{\'a}th, M.~Paolieri, and E.~Vicario,
  ``Transient analysis of networks of stochastic timed automata using
  stochastic state classes,'' in \emph{Quantitative Evaluation of
  Systems}.\hskip 1em plus 0.5em minus 0.4em\relax Springer, 2013, pp.
  355--371.

\bibitem{e2000tangram}
E.~de~Souza~e Silva and R.~M. Leao, ``The {T}angram-{I}{I} environment,'' in
  \emph{Computer Performance Evaluation. Modelling Techniques and Tools}.\hskip
  1em plus 0.5em minus 0.4em\relax Springer, 2000, pp. 366--369.

\bibitem{grassmann1985regenerative}
W.~K. Grassmann, M.~I. Taksar, and D.~P. Heyman, ``Regenerative analysis and
  steady state distributions for markov chains,'' \emph{Operations Research},
  vol.~33, no.~5, pp. 1107--1116, 1985.

\end{thebibliography}

\end{document}